\newcommand{\cO}{{\mathcal{O}}}
\newcommand{\bA}{{\mathbb A}}
\newcommand{\bB}{{\mathbb B}}
\newcommand{\bZ}{{\mathbb Z}}
\newcommand{\tcorr}{\text{corr}}
\newcommand{\flip}{\mathrm{flip}}
\newcommand{\elide}[1]{}
\theoremstyle{plain}
\newtheorem{thm}{\protect\theoremname}
\numberwithin{thm}{section}
\theoremstyle{plain}
\newtheorem{lem}{\protect\lemmaname}
\numberwithin{lem}{section}
 \theoremstyle{plain}
\newtheorem{rem}{\protect\remarkname}
\numberwithin{rem}{section}
\theoremstyle{plain}
\newtheorem{clm}[thm]{\protect\claimname}
\theoremstyle{plain}
\newtheorem{corr}[thm]{\protect\corrname}
\theoremstyle{plain}
\newtheorem{obs}{\protect\obsname}
\numberwithin{obs}{section}
\theoremstyle{definition}
\newtheorem{defn}{\protect\definitionname}
\numberwithin{defn}{section}
\renewcommand\expandafter\subsubsection\expandafter
  \newcommand\@fb@secFB{\FloatBarrier
    \gdef\@fb@afterHHook{\@fb@topbarrier \gdef\@fb@afterHHook{}}}%
  \g@addto@macro\@afterheading{\@fb@afterHHook}%
  \gdef\@fb@afterHHook{}%
\providecommand{\definitionname}{Definition}
\providecommand{\obsname}{Observation}
\providecommand{\theoremname}{Theorem}
\providecommand{\lemmaname}{Lemma}
\providecommand{\corrname}{Corollary}
\providecommand{\remarkname}{Remark}
\providecommand{\claimname}{Claim}
\title{Minimum-length coordinated motions for two 
convex centrally-symmetric robots}
\author{David Kirkpatrick\footnote{Corresponding author.}}{University of British Columbia}{kirk@cs.ubc.ca}{https://orcid.org/0000-0002-3276-2734}{Supported by  Discovery Grant from the Natural Sciences and Engineering Research Council of Canada}
\author{Paul Liu}{Stanford University \and \url{cs.stanford.edu/people/paulliu}}{paul.liu@stanford.edu}{https://orcid.org/0000-0002-9386-6609}{}
\authorrunning{D. Kirkpatrick and P. Liu} 
\date{}
\keywords{keyword1}
\begin{document}
\maketitle

\begin{abstract}
We study the problem of determining 
coordinated motions, of minimum total length, for two arbitrary convex centrally-symmetric (CCS) robots in an otherwise obstacle-free plane. Using the total path length traced by the two robot centres as a measure of distance, we give an exact characterization of a (not necessarily unique) shortest collision-avoiding motion for all initial and goal configurations of the robots. 
The individual paths are composed of at most six convex pieces, and their total length can be expressed as a simple integral with a closed form solution depending only on the initial and goal configuration of the robots. 
The path pieces are either straight segments or segments of the boundary of the Minkowski sum of the two robots (circular arcs, in the special case of disc robots).
Furthermore, the paths can be parameterized in such a way that (i) only one robot is moving at any given time (decoupled motion), or (ii) the 
orientation of the robot configuration 
changes monotonically.
\end{abstract}

\section{Introduction}

Given
a collection of robots\footnote{Open sets in $\mathbb{R}^2$.} in the plane, each with specified initial and goal configurations, we are interested in the problem of identifying efficient collision-free coordinated translational motions 
taking all of the robots 
from their initial to their goal configuration. 
In general, the cost of such a coordinated motion is defined to be the length sum of the paths traced by the centre (or arbitrary fixed point) of 
each of the robots.
This problem has a rich history dating back to the early 1980s, much of it devoted to the special cases of disc and square robots.
See \cite{Antonyshyn2023} for a recent survey.
We begin by outlining some of the existing work, the bulk of which is focused on the feasibility, rather than optimality, of coordinated motions. 

Schwartz and Sharir \cite{sharir0} were the first to study coordinated motion planning for $k$ discs among polygonal obstacles with $n$ total edges. For $k=2$, they developed an $\cO(n^3)$ algorithm (later improved to $\cO(n^2)$)~\cite{sharir2, yap} to determine if a collision-free coordinated motion connecting two specified configurations is possible. When the number of robots is unbounded, Spirakis and Yap \cite{spirakis} showed that determining feasibility is strongly NP-hard for disc robots, although the proof relies on the robots having different radii. For the analogous problem with rectangular robots, determining feasibility is PSPACE-hard, as shown by Hopcroft et al. \cite{hopcroft0} and Hopcroft and Wilfong \cite{hopcroft1}. This result was later generalized by Hearn and Demaine \cite{hearn} for rectangular robots of size $1\times 2$ and $2 \times 1$. In a recent work by Fekete et al.~\cite{fekete24}, the setting of axis-parallel bounded domains was explored, where many robots of size $1 \times 1$ can move on a lattice grid. In this setting, Fekete et al. provided a variety of results, including a characterization of domains where feasibility is possible, as well as algorithms for motion planning.

On the practical side, heuristic and sampling based algorithms have been employed to solve coordinated motion planning problem for up to hundreds of robots~\cite{gildardo,standley,wagner}. These algorithms typically use standard search strategies such as $A^*$ coupled with domain specific heuristics (see~\cite{planning-text} and the references contained therein). While efficient in practice, these algorithms are typically numerical or iterative in nature, with no precise performance bounds. 
A variety of alternative cost measures for our problem has also been considered, such as the minimum elapsed time coordinated motion under velocity constraints~\cite{ChenIerardi, ladder, turpin} as well as the coordinated motion minimizing the total number of continuous movements~\cite{abellanas, bereg, dumitrescu}. 
The motion planning problem for square robots in a rectilinear setting has also been the focus of the recent CG:SHOP 2021 Challenge; a variety of practical methods for addressing this problem have been proposed~\cite{crombez22, fekete22, liu22, yang22}.

A variant of the problem considers robots that are homogeneous and unlabelled. In this case, any robot is allowed to move to any target location, so long as each target position is covered by exactly one robot. For two discs, the unlabelled case is trivial as one can apply our labelled algorithm twice. However, when the number of discs is unbounded, Solovey and Halperin \cite{solovey2} show that the unlabelled problem is PSPACE-hard, even in the case of unit squares with polygonal obstacles. 
Surprisingly, when the robots are located within a simple polygon with no obstacles, a polynomial time for checking feasibility exists \cite{Adler2015}. 
As in the labelled case, a variety of cost measures has been explored for the unlabelled case. Solovey et al.~\cite{solovey1} gives an $\tilde{\cO}(k^4+k^2n^2)$ algorithm that minimizes the length sum of paths traced by the centres of $k$ discs, with additive error $4k$. 
In work by Turpin et al. \cite{turpin}, an optimal solution is found in polynomial time when the cost function is the maximum path length traversed by any single robot. However, their algorithm requires that the working space is obstacle free and the initial locations of the robots are far enough apart. 
Recent work of Agarwal et al.~\cite{Agarwal+SoDA24} considers algorithms that approximate the minimum total length of collision-free coordinated motions of two axis-aligned square robots. 
They provide 
the first polynomial-time $(1 + \varepsilon)$-approximation algorithm
for an optimal motion-planning problem involving two independent robots moving in a polygonal environment.

In this paper, we focus on the minimum-length coordinated motion of two convex centrally-symmetric (CCS) robots in an otherwise obstacle-free plane. 
The special case of disc robots was treated previously by Kirkpatrick and Liu~\cite{kl2016, liu2017}). Using essentially the same techniques, 
Esteban et al.~\cite{esteban23} addressed the special case of congruent (i.e. similarly aligned) square robots.   
Our 
comprehensive description 
of coordinated motions for CCS robots, between arbitrary initial and goal placements, 
subsumes and simplifies the results in these earlier papers, and in so doing highlights both the generality and limitations of the approach initiated in~\cite{kl2016, liu2017}. 
As before, the analysis underlying our characterization makes use of the Cauchy surface area formula, which was introduced by Icking et al.~\cite{icking}, in describing optimal motions for rods (directed line segments) in the plane,
where distance is measured by the length sum of the paths traced by the two endpoints of the segment.

We note that the rod motion problem itself 
has a rich history, and was first posed by Ulam \cite{ulam} and subsequently solved by Gurevich \cite{gurevich}. Other approaches to that of Icking et al. are quite different, and use control theory to obtain differential equations that characterize the optimal motion~\cite{gurevich, verriest}. Of course, the problem of moving a directed line segment of length $s$ corresponds exactly to the coordinated motion of two discs with radius sum $s$ constrained to remain in contact throughout the motion. Hence the coordinated motion of two discs with radius sum $s$ can also be seen as the problem of moving an ``extensible" line segment that can extend freely but has minimum length $s$. As such, our results also generalize those of Icking et al.  Although we use some of the same tools introduced by Icking et al., our generalization is non-trivial; the doubling argument that lies at the heart of the proof of Icking et al.~depends in an essential way on the assumption that the rod length is fixed throughout the motion. 

\subsection{Our contributions}
In contrast to previous work focused on disc or square shaped robots,
we consider arbitrary CCS shapes. 
In addition the two robots do not have to be congruent, for example both circles or both squares. The work presented here is both a simplification and generalization of the results presented in \cite{esteban23, kl2016}. Similar in spirit to previous results, 
our characterization result identifies cases where the coordinated motion can be decomposed into a few simple segments and proves optimality by an application of Cauchy's surface area formula. 
The main observation that 
underpins our generalization
is that the motion decomposition has a simple expression in terms of the Minkowski sum of the two robots, rather than appealing to specific geometric properties of the circle or the square.

For the case of two arbitrary CCS robots, 
our approach first characterizes all initial and goal robot configurations that admit straight-line 
coordinated motions. 
For all other initial and goal configurations, the coordinated motion from initial to goal configuration involves either a net clockwise or counter-clockwise turn in the relative position of the robots. 
In this case, our results describe either (i) a single globally optimal coordinated motion, or (ii) two coordinated motions, of which one is optimal among all net clockwise coordinated motions and the other is optimal among all net counter-clockwise coordinated motions. 
For all of the optimal coordinated motions that we describe,
the trace of both robots has the same simple form: a constant number of straight and convex segments, where the convex segments are formed from a contiguous piece of the boundary of the Minkowski sum of 
the two robots. 
Our constructions of shortest 
coordinated motions can be summarized by the following theorem:

\begin{thm}
\label{thm:mainthm}
Let $\bA$ and $\bB$ be two arbitrary convex centrally-symmetric (CCS) robots. For any initial and goal configurations of $\bA$ and $\bB$, there is a 
minimum-length collision-free coordinated motion, composed of at most six pieces, taking $\bA$ and $\bB$ from the initial to the goal configuration. 
The pieces are either straight segments or contiguous portions of the boundary of the Minkowski sum of $\bA$ and $\bB$.
\end{thm}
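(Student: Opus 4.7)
The plan is to reduce the problem to one about the relative displacement of the two centres. Since both $\bA$ and $\bB$ are convex and centrally symmetric, the pair of robots is in contact precisely when the vector from the centre of $\bA$ to the centre of $\bB$ lies on the boundary of the Minkowski sum $\bA \oplus \bB$, and overlaps precisely when this vector lies in its interior. Thus a coordinated motion is collision-free iff the relative path $\gamma_B - \gamma_A$ avoids the interior of $\bA \oplus \bB$. The total length objective, being the sum of the lengths of $\gamma_A$ and $\gamma_B$, decouples nicely from this constraint, which is what makes the reduction useful. I would first establish the straight-line regime: whenever the straight segment joining the initial and goal relative positions avoids the interior of $\bA \oplus \bB$, the motion in which each robot moves straight to its goal is trivially optimal, since each individual path is already a single straight segment and cannot be shortened.

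For the remaining ``turning'' regime, the relative path must pass on one side of $\bA \oplus \bB$, giving at most two canonical candidate motions -- one net clockwise and one net counter-clockwise. Using the supporting-line structure of $\partial(\bA \oplus \bB)$, I would construct, for each chosen side, a motion whose relative trace consists of at most three pieces: a straight entry segment tangent from the initial relative position to $\partial(\bA \oplus \bB)$, a contiguous arc along $\partial(\bA \oplus \bB)$, and a straight exit segment to the goal relative position. Unpacking this relative trace into absolute paths for $\bA$ and $\bB$ -- either via a decoupled parameterization (only one robot moves at a time) or via one in which the joint orientation is monotone -- one checks that each individual path decomposes into at most three pieces, each either a straight segment or a sub-arc of $\partial(\bA \oplus \bB)$, for at most six pieces overall.

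The final step is optimality. Following Icking et al., I would apply Cauchy's surface area formula to lower-bound the total length of any collision-free coordinated motion: for every direction $\theta$, the sum of the signed projected lengths of $\gamma_A$ and $\gamma_B$ onto the line of direction $\theta$ is at least the total variation of the projection of the relative path, which in turn is bounded below by the width of the portion of $\partial(\bA \oplus \bB)$ that the relative path is forced to go around. Integrating over $\theta \in [0,\pi]$ recovers, up to the contributions of the entry and exit segments, the perimeter of the relevant boundary portion, matching the length achieved by the construction above; taking the smaller of the two sides yields global optimality.

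The main obstacle, as hinted in the introduction, is that the Icking et al.\ doubling argument exploits the fact that a rod has fixed length, so the motion can be closed up and Cauchy applied directly to a closed convex curve. Here $\bA \oplus \bB$ is an arbitrary centrally symmetric convex body and no such closure is available. I expect the key technical work to be the adaptation of the width-integration step so that the entry segment, boundary arc, and exit segment are \emph{simultaneously} tight in the Cauchy bound; this is where central symmetry of $\bA \oplus \bB$, inherited from $\bA$ and $\bB$, should be essential, since it guarantees that opposite supporting directions contribute equally and lets one replace a fixed-length closure argument by a symmetric width calculation.
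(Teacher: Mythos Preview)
Your reduction to the relative displacement is the right starting point for the \emph{constraint}---collision-freeness is indeed equivalent to $\xi_\bB(t)-\xi_\bA(t)\notin\mathrm{int}(\bA\!+\!\bB)$---but it does not carry the \emph{objective}: the sum $\ell(\xi_\bA)+\ell(\xi_\bB)$ is not a function of the relative path $\xi_\bB-\xi_\bA$. Infinitely many pairs $(\xi_\bA,\xi_\bB)$ realize the same relative trace with wildly different total lengths, so ``find the shortest tangent--arc--tangent relative path, then unpack'' is not well-posed. Concretely, in a decoupled unpacking the net displacement you assign to $\bB$ during the arc phase is forced to be the vector joining the arc's endpoints, which has no reason to equal $B_1-B_0$; your unpacked motion will in general not even reach the goal configuration. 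What \emph{does} reduce correctly is the Minkowski difference of the two trace \emph{sets} (not the pointwise difference): $\ell(\widearc{\xi_\bA})+\ell(\widearc{\xi_\bB})$ equals the perimeter of the convex hull of $\{a-b:a\in\mathrm{tr}(\xi_\bA),\ b\in\mathrm{tr}(\xi_\bB)\}$, and that is the object the paper actually controls.

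The paper applies Cauchy's formula to the convex hulls $\widearc{\xi_\bA}$, $\widearc{\xi_\bB}$ of the absolute traces, obtaining the pointwise lower bound
$r_{\widearc{\xi_\bA}}(\theta)+r_{\widearc{\xi_\bB}}(\pi+\theta)\ge\max\bigl(r_{\overline{A_0A_1}}(\theta)+r_{\overline{B_0B_1}}(\pi+\theta),\ r_{\bA+\bB}(\theta)\cdot\mathds{1}_{[\theta_0,\theta_1]}(\theta)\bigr)$,
the second term arising because whenever the instantaneous orientation is $\theta$ the centres must be at least $r_{\bA+\bB}(\theta)$ apart in that direction. The matching construction is a three-phase \emph{standard-form} co-motion: move $\bA$ to a specific intermediate point $A_{\mathrm{int}}$ (typically the intersection of the upper tangents from $A_0$ and $A_1$ to appropriate translates of $\bA\!+\!\bB$), then move $\bB$ from $B_0$ to $B_1$ around $(\bA\!+\!\bB)[A_{\mathrm{int}}]$, then move $\bA$ on to $A_1$. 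Verifying that this achieves equality in the Cauchy bound requires a genuine case analysis on the position of $A_1$ relative to those tangents; and in several cases the resulting $\bA$-trace is non-convex, whereupon a separate reflection argument (exploiting the central symmetry of $\bA\!+\!\bB$) shows that the globally optimal co-motion must be net clockwise instead. Your proposal anticipates neither the role of $A_{\mathrm{int}}$ in the construction nor this last non-convex/reflection step.
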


The length of the shortest coordinated motion can be expressed as a simple integral depending only on the initial and goal configurations. 
Moreover, all coordinated motion traces that we describe can be realized by two different kinds of coordinated motion: \emph{coupled} or \emph{decoupled}. 
In the decoupled case, the coordinated motion consists of three phases, with only one of the robots moving in each phase. 
In the coupled case, the angle 
describing the relative orientation of the robots 
changes monotonically.
Furthermore, the motions can be parameterized in a way that the  two robots are in contact for a contiguous interval of time 
(i.e., once the two robots move out of contact, they are never in contact again.) 

The rest of the paper is organized as follows. 
In Section~\ref{sec:backgrd} we present some basic definitions and describe some of the essential tools used in our characterization and correctness proofs. In Section~\ref{sec:overview} we summarize the general structure of our proofs. Section~\ref{sec:mainprf} provides intuition for the proofs by providing illustrative optimal motions for various cases. These cases are designed to communicate the heart of the proof, while avoiding the detailed case analysis. The full analysis is given in Section~\ref{sec:modified}. 

\begin{figure}[ht]
\centering
\includegraphics[width=0.9\textwidth]{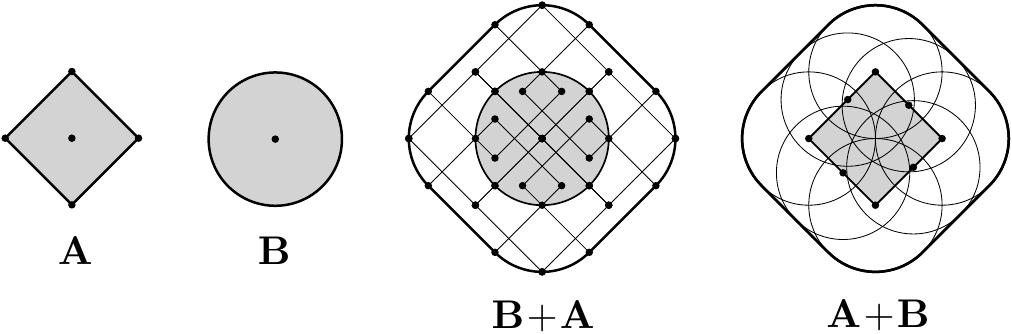}
\caption{Two CCS objects and their Minkowski sums. 
}. \label{fig:Minkowski0}
\end{figure}

\begin{figure}[ht]
\centering
\includegraphics[width=0.9\textwidth]{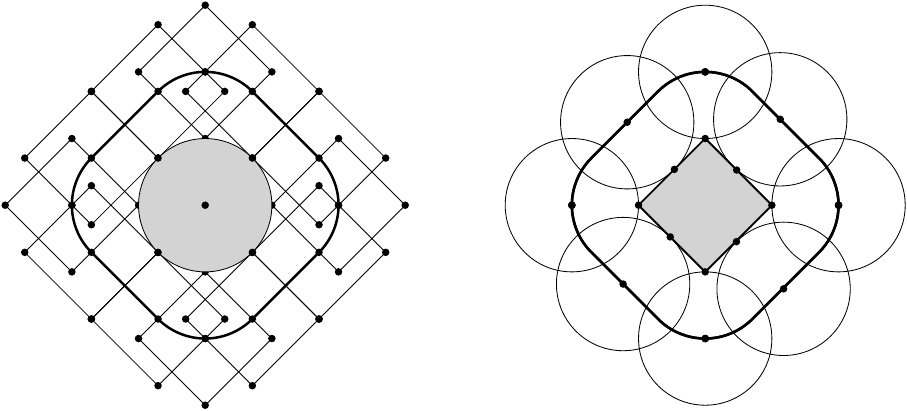}
\caption{Minkowski sums as the locus of intersecting placements of CCS objects. 
}. \label{fig:Minkowski1}
\end{figure}

\section{Background}
\label{sec:backgrd}

To describe the coordinated motion of a pair of CCS robots between their initial and goal configurations, we first make precise several terms that have intuitive meaning. 

\begin{defn}
A region $\mathbb S \subset \Re^2$ is called \emph{centrally-symmetric} if 
$(x,y) \in \mathbb S  \implies (-x, -y) \in \mathbb S$.
\end{defn}

\begin{defn}
The \emph{Minkowski sum} of two regions $R_1, R_2 \subset \Re^2$ is the region $R_1\!+\!R_2 := \{p_1 + p_2 \,\mid \, p_1 \in R_1, p_2 \in R_2\}$ (Figure~\ref{fig:Minkowski0} and \ref{fig:Minkowski1}).
Similarly, the \emph{Minkowski difference} of $R_1, R_2 \subset \Re^2$ is the region $R_1\!-\!R_2 := \{p_1 - p_2 \,\mid \, p_1 \in R_1, p_2 \in R_2\}$. Note that if $R_1$ and $R_2$ are centrally-symmetric, then 
$R_1\!+\!R_2 = R_1\!-\!R_2$.
\end{defn}

\begin{defn}
The \emph{reach} of a set of points $\mathbb S$ in direction $\theta$, is given by 
\[r_{\mathbb S}(\theta) := \sup \{ x \cos \theta + y \sin \theta : (x,y) \in \mathbb S \}.\] 
For an angle $\theta$, the set of points that realize the supremum are called \emph{support points}, and the line oriented at angle $\frac{\pi}{2}+\theta$ passing through the support points is called the \emph{support line} (see Figure~\ref{fig:reach}). When the set $\mathbb S$ consists of a single point $S$, we write $r_S$ instead of $r_{\mathbb S}$ for ease of notation.
\end{defn}

\begin{obs}\label{obs:oppositereach}
Note that $r_{\mathbb S}(\pi +\theta) = r_{-\mathbb S}(\theta)$.
Furthermore, $r_{\mathbb S}(\theta) = r_{\widearc{\mathbb S}}(\theta)$, where 
$\widearc{\mathbb S}$ denotes the boundary of the convex hull of $\mathbb S$.
\end{obs}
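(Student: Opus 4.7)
The plan is to handle the two claims separately, each by unwinding the definition of $r_{\mathbb S}$ directly.

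For the first equality, my approach is a one-line substitution using the angle-shift identities $\cos(\pi+\theta) = -\cos\theta$ and $\sin(\pi+\theta) = -\sin\theta$. Writing out
\[
r_{\mathbb S}(\pi+\theta) = \sup\{x\cos(\pi+\theta) + y\sin(\pi+\theta) : (x,y)\in\mathbb S\}
= \sup\{(-x)\cos\theta + (-y)\sin\theta : (x,y)\in\mathbb S\},
\]
I then perform the change of variables $(u,v) = (-x,-y)$, noting that $(u,v)$ ranges over $-\mathbb S$ precisely when $(x,y)$ ranges over $\mathbb S$, to recover the right-hand side $r_{-\mathbb S}(\theta)$. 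This requires nothing beyond a relabelling, so it is essentially immediate.

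For the second equality, I would argue in two steps. First, since the functional $(x,y)\mapsto x\cos\theta + y\sin\theta$ is linear and bounded above on $\mathbb S$ iff it is bounded above on the convex hull $\mathrm{conv}(\mathbb S)$ with the same supremum, we have $r_{\mathbb S}(\theta) = r_{\mathrm{conv}(\mathbb S)}(\theta)$ (this is a standard property of the support function). Second, for a (closed) convex region, a linear functional attains its supremum on the boundary: any interior point admits a small ball on which the functional takes strictly larger values, so the supremum over the interior cannot exceed the supremum over the boundary. Combining the two gives $r_{\mathbb S}(\theta) = r_{\widearc{\mathbb S}}(\theta)$.

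The only mild subtlety — and the closest thing to an obstacle — is the case where $\mathbb S$ is unbounded or the convex hull is not closed, in which case the supremum might not be attained and one should speak of a limiting sequence of boundary points rather than a maximizer. Since the observation is used throughout the paper in the context of CCS robots (which are open but bounded, with bounded closures), I would simply note that applying the argument to the closure of $\mathrm{conv}(\mathbb S)$ suffices, and the supremum over the boundary equals the supremum over the closed convex hull by a density/continuity argument. No further case analysis is needed.
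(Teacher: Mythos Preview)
Your proposal is correct and is exactly the natural unwinding of the definition; the paper itself offers no proof at all for this observation, treating both claims as immediate from the definition of reach. Your handling of the unbounded/non-closed subtlety is more care than the paper exercises, but it does no harm.
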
 

\begin{figure}[ht]
\centering
\includegraphics[ scale=0.75]{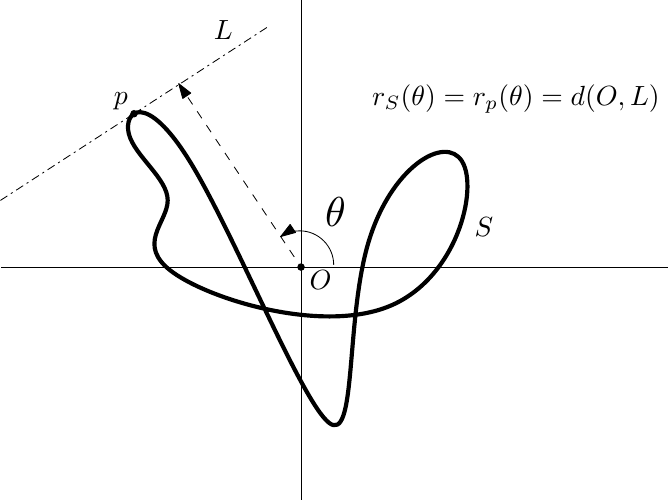}
\caption{Support line $L$ and support point $p$ defining the reach of $S$ in direction $\theta$. 
}. \label{fig:reach}
\end{figure}

\begin{figure}[h!]
\centering
\includegraphics[ scale=0.75]{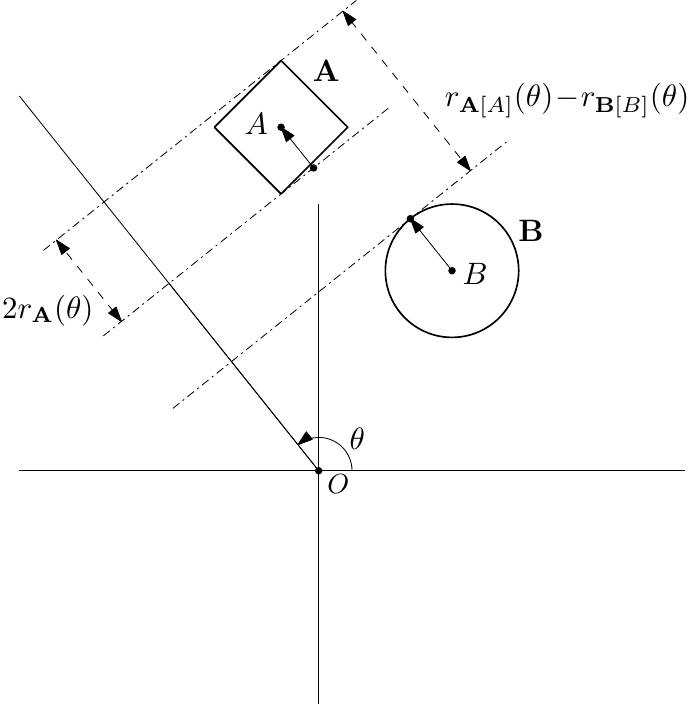}
\caption{The separation of $\bA[A]$ from $\bB[B]$ in direction $\theta$. 
} \label{fig:separation}
\end{figure}

\begin{defn}
A generic \emph{CCS robot}, hereafter simply \emph{robot}, is a convex and centrally-symmetric region in the plane.
The \emph{placement} of robot $\bZ$ at position $Z$ in $\Re^2$, denoted $\bZ[Z]$, is formed by translating $\bZ$ so that its centre coincides with $Z$.  
Accordingly, $r_{\bZ[Z]}(\theta)= r_Z(\theta) + r_{\bZ}(\theta)$. 
We refer to $\bZ[O]$, the placement at the origin $O$, as the \emph{standard} placement of $\bZ$.
\end{defn}

\begin{defn}
An \emph{(instantaneous) configuration} of a pair of robots $(\bA, \bB)$ is just a pair of placements
$(\bA[A], \bB[B])$. 
In configuration
$(\bA[A], \bB[B])$, the \emph{separation} of $\bA[A]$ from $\bB[B]$
in direction $\theta$ is
given by 
$r_{\bA[A]}(\theta) 
- r_{\bB[B]}(\theta)
-2 r_{\bA}(\theta)$;
cf. Figure~\ref{fig:separation}.
\end{defn}

\begin{defn}
A configuration $(\bA[A], \bB[B])$ is \emph{viable} if 
$ \bA[A] \cap \bB[B] = \emptyset$;
equivalently if the separation of
$\bA[A]$ from $\bB[B]$ is non-negative for some direction $\theta$. 
For any given viable configuration $(\bA[A], \bB[B])$, 
we refer to the direction $\theta$ that maximizes the separation  
of $\bA[A]$ from $\bB[B]$ 
as its \emph{orientation}.
\end{defn}

\begin{obs}\label{obs:Minkowski}
 The Minkowski sum of two CCS robots $\bA$ and $\bB$,
 denoted $\bA\!+\!\bB$, is another CCS robot, whose reach $r_{\bA\!+\!\bB}(\theta)$ satisfies $r_{\bA\!+\!\bB}(\theta) = r_{\bA}(\theta) + r_{\bB}(\theta)$. 
 By symmetry, $\bA[A]$ intersects $\bB[B]$ if and only if point $A$ lies in $(\bA\!+\!\bB)[B]$ 
 (equivalently, point $A-B$ lies in $(\bA\!+\!\bB)[O]$);
 see Figure~\ref{fig:Minkowski1}.
 
 If point $\bA[A]$ does not intersect $\bB[B]$,
 then 
 (i) the separation of $\bA[A]$ from $\bB[B]$ is just the distance from $A$ to the boundary of $(\bA\!+\!\bB)[B]$, and
 (ii) the orientation of the
 configuration $(\bA[A], \bB[B])$ is just the angle, with respect to the $x$-axis, formed by the  outward normal to $(\bA\!+\!\bB)[B]$ through point $A$ 
 (equivalently, the outward normal to $(\bA\!+\!\bB)[O]$ through point $A-B$).
 Note that if $\bA$ and $\bB$ are discs, then the orientation of the
 configuration $(\bA[A], \bB[B])$ is the angle 
 $\phi_{\overrightarrow{BA}}$ formed by the vector 
 $\overrightarrow{BA}$ with the $x$-axis,
but this is not true in general. 
(For example, in Figure~\ref{fig:Minkowski1} the orientation of contact configurations changes only when the contact on $\bA$ is a corner point.)
 \end{obs}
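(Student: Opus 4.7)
The plan is to verify the four assertions in order, using as the unifying tool the additivity of the reach function under Minkowski summation together with the support-function characterization of distance from a point to a convex set.

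For the reach identity $r_{\bA+\bB}(\theta)=r_{\bA}(\theta)+r_{\bB}(\theta)$, every point $q\in\bA+\bB$ decomposes as $q=p_1+p_2$ with $p_1\in\bA$ and $p_2\in\bB$; the functional $(x,y)\mapsto x\cos\theta+y\sin\theta$ splits additively on this sum, and since $p_1$ and $p_2$ range independently the supremum separates. Convexity and central symmetry of $\bA+\bB$ follow by equally elementary set-theoretic arguments.

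For the intersection characterization, $\bA[A]\cap\bB[B]\neq\emptyset$ iff there exist $a\in\bA$ and $b\in\bB$ with $A+a=B+b$, i.e.\ $A-B=b-a\in\bB-\bA$. Central symmetry gives $\bA=-\bA$, hence $\bB-\bA=\bA+\bB$, yielding the stated equivalence with $A-B\in(\bA+\bB)[O]$, and equivalently $A\in(\bA+\bB)[B]$ after translation by $B$.

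For (i) and (ii), I would substitute $r_{\bA[A]}(\theta)=r_A(\theta)+r_{\bA}(\theta)$ and $r_{\bB[B]}(\theta)=r_B(\theta)+r_{\bB}(\theta)$ into the separation formula; applying the reach additivity from the first step, the expression collapses to $r_A(\theta)-r_{(\bA+\bB)[B]}(\theta)$, which is the signed perpendicular distance from the point $A$ to the support line of $(\bA+\bB)[B]$ in direction $\theta$. I would then invoke the standard fact that, for $A$ outside a closed convex set $K$, $\sup_\theta\bigl(r_A(\theta)-r_K(\theta)\bigr)$ equals the Euclidean distance from $A$ to $K$ and is attained at the direction of the outward normal from the nearest point of $K$ to $A$. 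Applying this with $K=(\bA+\bB)[B]$ delivers both (i) and (ii); the equivalent expression in terms of the normal at $A-B$ to $(\bA+\bB)[O]$ is then immediate by translation.

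Finally, for the disc remark: the Minkowski sum of two discs is a disc centred at $O$, whose outward normal at every boundary point is radial; translated by $B$, this normal at the boundary point $A$ points along $\overrightarrow{BA}$, giving orientation $\phi_{\overrightarrow{BA}}$. For non-circular CCS shapes the outward normal and radial directions generally disagree, as the corner examples of Figure~\ref{fig:Minkowski1} indicate. The main obstacle, such as it is, lies in recognising the somewhat asymmetric-looking separation formula as a classical point-to-convex-set support distance; once this identification is made, every remaining step is routine.
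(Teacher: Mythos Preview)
Your proposal is correct in every step. The paper itself offers no proof of this observation; it is recorded as a self-evident fact without justification, so there is no argument in the paper to compare against. Your verification via additivity of the support (reach) function and the identification of the separation expression $r_{\bA[A]}(\theta)-r_{\bB[B]}(\theta)-2r_{\bA}(\theta)$ with the signed support-line distance $r_A(\theta)-r_{(\bA+\bB)[B]}(\theta)$ is exactly the right way to make the observation rigorous, and the appeal to the standard characterization of point-to-convex-set distance via the supremum of this quantity cleanly delivers both (i) and (ii).
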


A pair of robots can move from configuration to configuration through a coordinated motion, which we can now define:

\begin{defn}
A (translational)
\emph{motion} 
$\xi_{\bZ}$ of a robot ${\bZ}$ from placement $\bZ[Z_0]$ (i.e. position $Z_0$) to
placement $\bZ[Z_1]$ (i.e. position $Z_1$) 
is a continuous, rectifiable curve of the form $\xi_{\bZ} : [0,1] \rightarrow \Re^2$, where $\xi_{\bZ}(0) = Z_0$, and $\xi_{\bZ}(1) = Z_1$.
The set of points 
$tr(\{\xi_{\bZ}(t) \;|\; 0\le t\le 1 \})$
is called the \emph{trace} of robot $\bZ$ under motion $\xi_{\bZ}$. 
\end{defn}

\begin{defn}
The \emph{length} of motion $\xi_{\bZ}$, denoted $\ell(\xi_{\bZ})$, is simply the Euclidean arc-length of its trace, that is,
\[
	\ell(\xi_{\bZ}) = \sup_{T} \sum_{i=1}^k ||\xi_{\bZ}(t_{i-1})- \xi_{\bZ}(t_i)||
\]
where the supremum is taken over all subdivisions $T=\{t_0,t_1,\ldots,t_k\}$ of $[0,1]$ where $0 = t_0 < t_1 < \cdots < t_k = 1$. 
\end{defn}

\begin{obs}\label{obs:convex}
Denote by $\widearc{\xi_{\bZ}}$ the closed curve defining the boundary of the convex hull of the trace of $\xi_{\bZ}$. 
Since
the trace of $\xi_{\bZ}$, together with the segment $\overline{Z_0Z_1}$, forms a closed curve whose convex hull has boundary  $\widearc{\xi_{\bZ}}$, it follows from convexity that:
\begin{equation}
\label{eq:inequal}
\ell(\xi_{\bZ})\geq \ell(\widearc{\xi_{\bZ}}) - |\overline{Z_0Z_1}|
\end{equation}
\end{obs}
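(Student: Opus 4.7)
The plan is to derive the inequality from the general fact that any rectifiable closed curve in the plane has length at least the perimeter of its convex hull, applied to the closed curve obtained by concatenating the trace of $\xi_{\bZ}$ with the straight segment $\overline{Z_0Z_1}$.

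First I would form the closed curve $C$ consisting of the trace of $\xi_{\bZ}$ followed by the segment from $Z_1$ back to $Z_0$; its total length is $\ell(\xi_{\bZ}) + |\overline{Z_0Z_1}|$. Since $Z_0$ and $Z_1$ both lie on the trace, the segment $\overline{Z_0Z_1}$ is contained in the convex hull of the trace, and so the convex hull of $C$ coincides with the convex hull of the trace of $\xi_{\bZ}$. Its boundary is therefore exactly $\widearc{\xi_{\bZ}}$, of length $\ell(\widearc{\xi_{\bZ}})$.

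The crux is then the classical projection inequality $\ell(C)\geq \ell(\widearc{\xi_{\bZ}})$. To establish it, I would fix a direction $\theta \in [0,\pi]$ and let $w(\theta)$ denote the width of the convex hull of $C$ in direction $\theta$, i.e., the length of the orthogonal projection of $C$ onto the line of direction $\theta$. Because $C$ is closed, its projection onto this line must cover the full width $w(\theta)$ at least twice, once outbound and once returning, so the total variation of the projection along $C$ is at least $2w(\theta)$. Integrating in $\theta$ and applying Fubini gives, on the left, Cauchy's arc-length identity evaluating to $2\ell(C)$, and on the right, Cauchy's perimeter formula giving $2\ell(\widearc{\xi_{\bZ}})$. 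Rearranging the resulting inequality $\ell(\xi_{\bZ}) + |\overline{Z_0Z_1}|\geq \ell(\widearc{\xi_{\bZ}})$ yields the observation.

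There is no real obstacle here: everything reduces to the classical projection argument, which the paper is about to exploit more substantively via Cauchy's surface area formula. The one mild subtlety — extending the Cauchy identities to arbitrary rectifiable closed curves rather than smooth or convex ones — can be handled either by polygonal approximation together with lower semicontinuity of length under uniform convergence, or simply by citing the relevant formula, so it is not a serious bottleneck.
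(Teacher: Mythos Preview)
Your proposal is correct and follows exactly the approach the paper intends: form the closed curve $C$ by concatenating the trace with $\overline{Z_0Z_1}$, note its convex hull coincides with that of the trace, and invoke the classical fact that a closed curve has length at least the perimeter of its convex hull. The paper simply asserts this last step ``follows from convexity'' without further justification, whereas you supply the standard projection/Cauchy argument; this is additional detail rather than a different route.
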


\begin{defn}
A motion $\xi_{\bZ}$ 
is said to be \emph{convex} if, 
equation~(\ref{eq:inequal}) is an equality; i.e. 
$\ell(\widearc{\xi_{\bZ}}) = \ell(\xi_{\bZ}) + |\overline{Z_0Z_1}|$.
\end{defn}

\begin{obs}
It follows from Observation~\ref{obs:Minkowski} that a motion of robot $\bA$ from placement $\bA[A]$ to
placement $\bA[A']$ avoiding robot $\bB$ at placement $\bB[B]$ can be viewed as the motion of a point (the centre of $\bA$), from $A$ to $A'$ avoiding $(\bA\!+\!\bB)[B]$.
\end{obs}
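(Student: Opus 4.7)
The plan is essentially to apply Observation~\ref{obs:Minkowski} pointwise along the curve. Unfolding the definition, a motion $\xi_{\bA}:[0,1]\to\Re^2$ from $\bA[A]$ to $\bA[A']$ that avoids $\bB[B]$ is a continuous rectifiable curve satisfying $\xi_{\bA}(0)=A$, $\xi_{\bA}(1)=A'$, and, for every $t\in[0,1]$, $\bA[\xi_{\bA}(t)]\cap\bB[B]=\emptyset$. By Observation~\ref{obs:Minkowski} applied at the placement with centre $\xi_{\bA}(t)$, this last disjointness condition is equivalent to $\xi_{\bA}(t)\notin(\bA\!+\!\bB)[B]$. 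Taking the conjunction of these equivalences over all $t\in[0,1]$ gives exactly the statement that the same curve $\xi_{\bA}$, now reinterpreted as the motion of the single point $\xi_{\bA}(t)$ (the centre of $\bA$), stays outside the planar region $(\bA\!+\!\bB)[B]$.

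Since the endpoint constraints $\xi_{\bA}(0)=A$ and $\xi_{\bA}(1)=A'$ are unchanged by this reinterpretation, a curve qualifies as an $\bA$-avoiding-$\bB$ motion from $\bA[A]$ to $\bA[A']$ if and only if it qualifies as a point-motion from $A$ to $A'$ avoiding $(\bA\!+\!\bB)[B]$. There is no serious obstacle to address; the only subtlety worth flagging is that Observation~\ref{obs:Minkowski} is stated with the Minkowski \emph{sum} on the right-hand side, which is legitimate because the central symmetry of $\bA$ and $\bB$ already yields $\bA\!+\!\bB=\bA\!-\!\bB$ as recorded in the definition of Minkowski difference. Continuity and rectifiability of $\xi_{\bA}$ are intrinsic properties of the curve and are transported automatically, so no additional verification of those hypotheses is required.
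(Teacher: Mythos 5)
Your proposal is correct and matches the argument the paper intends: apply the equivalence of Observation~\ref{obs:Minkowski} (that $\bA[\xi_{\bA}(t)]\cap\bB[B]\neq\emptyset$ iff $\xi_{\bA}(t)\in(\bA\!+\!\bB)[B]$) pointwise over $t\in[0,1]$, noting that the endpoint and regularity conditions on the curve are unaffected by the reinterpretation. The paper gives no further proof, as this is exactly the routine unfolding it has in mind.
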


\begin{defn}
A \emph{coordinated motion} (hereafter \emph{co-motion}) $m$ of a robot pair $({\bA},{\bB})$ from an initial configuration  
$(\bA[A_0], \bB[B_0])$ to a goal configuration 
$(\bA[A_1], \bB[B_1])$
is a pair $(\xi_{\bA}, \xi_{\bB})$, where $\xi_{\bA}$ (resp. $\xi_{\bB}$) is a motion of $\bA$ (resp. $\bB$) from placement $\bA[A_0]$ to placement $\bA[A_1]$ (resp. placement $\bB[B_0]$ to placement $\bB[B_1]$). 
The co-motion $m$ is said to be 
\emph{collision-free}
if 
the configuration 
$(\bA[\xi_{\bA}(t)], \bB[\xi_{\bB}(t)])$ is
viable, for all $t \in [0,1]$.
Co-motion $m=(\xi_{\bA}, \xi_{\bB})$ is said to be \emph{convex} if both $\xi_{\bA}$ and $\xi_{\bB}$ are convex. Its length, denoted $\ell(m)$, 
is the sum of the lengths of its associated motions, i.e. $\ell(m) = \ell(\xi_{\bA}) + \ell(\xi_{\bB})$. 
\end{defn}

\begin{rem}
   Co-motions $m'=(\xi'_{\bA}, \xi'_{\bB})$ and $m=(\xi_{\bA}, \xi_{\bB})$ are said to be \emph{trace-equivalent} if the trace of $\xi'_{\bA}$ is the same as the trace of $\xi_{\bA}$ and the trace of
   $\xi'_{\bB}$ is the same as the trace of $\xi_{\bB}$. 
   Note that, for trace-equivalent co-motions $m$ and $m'$, $m'$ is convex if and only if $m$ is convex, and $\ell(m') = \ell(m)$. However, if $m$ is collision-free it does not necessarily follow that $m'$ is collision-free.
\end{rem}

\begin{figure}[h!]
\centering
\includegraphics[width=0.9\textwidth]
{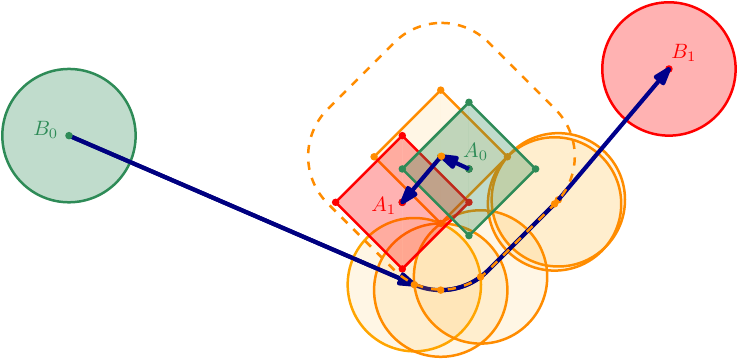}
\caption{The trace of a co-motion from initial configuration (green) to target configuration (red), with intermediate configurations (yellow)}. \label{fig:introexample}
\end{figure}

Figure~\ref{fig:introexample} illustrates a co-motion (blue) of two robots from their initial configuration (green) to a goal configuration (red). 
The diamond robot first moves (straight) to its intermediate placement (orange). Then the disc robot moves around the diamond robot to its goal placement, passing through intermediate placements with boundary contact. 
Finally, the diamond robot moves (straight) to its goal placement.

\begin{defn}
The \emph{(collision-free) distance} between two configurations $\bA\bB_0=(\bA[A_0], \bB[B_0])$ and $\bA\bB_1=(\bA[A_1], \bB[B_1])$,
denoted $d(\bA\bB_0,\bA\bB_1)$, is the minimum possible length over all collision-free co-motions $m$ from $\bA\bB_0$ to $\bA\bB_1$. 
We refer to any collision-free co-motion $m$ between $\bA\bB_0$ and $\bA\bB_1$ satisfying $\ell(m) = d(\bA\bB_0,\bA\bB_1)$ as an \emph{optimal} co-motion between $\bA\bB_0$ and $\bA\bB_1$. 
\end{defn}

The fact that $d$ is a metric on the set of configurations is easy to check. Nevertheless, one may be concerned about the existence of  a shortest co-motion under this notion of distance.  The fact that a shortest co-motion exists is a consequence of the Hopf-Rinow theorem, for which details can be found in \cite{gromov}.

\section{The general approach}
\label{sec:overview}

For the remainder of the paper, we will assume a fixed CCS robot pair 
$(\bA, \bB)$. 
We denote by $\bA\bB_0$ the initial configuration
$(\bA[A_0], \bB[B_0])$, and by $\bA\bB_1$ the goal configuration
$(\bA[A_1], \bB[B_1])$.
We denote by $[\theta_0, \theta_1]$ the range of angles counter-clockwise between the orientation $\theta_0$ of $\bA\bB_0$ and the orientation $\theta_1$ of $\bA\bB_1$.
Finally, we use $I_m$ to denote the range of orientations realized by the set of all configurations associated with a given co-motion $m$.
The following observation allows us to categorize co-motions into two types: 
\emph{net counter-clockwise}
co-motions satisfy $[\theta_0,\theta_1]\subseteq I_m$, and
\emph{net clockwise} co-motions satisfy
$S^1-[\theta_0,\theta_1]\subseteq I_m$, where $S^1 = [0,2\pi]$.
Co-motions satisfying $[\theta_0,\theta_1] = I_m$ (resp., 
$S^1-[\theta_0,\theta_1]= I_m$) are said to be \emph{strictly counter-clockwise}
(resp., \emph{strictly clockwise}) co-motions.

\begin{obs}
\label{monotone} Let $m = (\xi_{\bA}, \xi_{\bB})$ be any co-motion 
taking $(\bA, \bB)$ from 
initial configuration $\bA\bB_0$ and goal configuration 
$\bA\bB_1$.
By the continuity of $\xi_{\bA}$ and $\xi_{\bB}$, 
it follows that $[\theta_0,\theta_1]\subseteq I_m$ or $S^1-[\theta_0,\theta_1]\subseteq I_m$.
That is, any co-motion is either net clockwise or net counter-clockwise (or both).
\end{obs}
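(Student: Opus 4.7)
The plan is to realize the orientation as a continuous function of time and then apply an elementary topological fact about connected subsets of $S^1$.

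By Observation~\ref{obs:Minkowski}, viability of the configuration $(\bA[\xi_{\bA}(t)],\bB[\xi_{\bB}(t)])$ is equivalent to the displacement $P(t) := \xi_{\bA}(t) - \xi_{\bB}(t)$ lying outside the interior of the convex Minkowski sum $M := (\bA\!+\!\bB)[O]$, and the orientation of the configuration is the angle of the outward normal of $M$ at the nearest-point projection of $P(t)$ onto $\partial M$. Since $\xi_{\bA}$ and $\xi_{\bB}$ are continuous and the metric projection onto a closed convex set is continuous, the orientation $\theta(t)$ depends continuously on $t$. Hence $I_m = \theta([0,1])$ is a connected subset of $S^1$ containing both $\theta_0$ and $\theta_1$.

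I would then finish by invoking the topological fact that any connected subset of $S^1$ containing two distinct points $\theta_0, \theta_1$ must include at least one of the two closed arcs between them. Indeed, were $I_m$ to omit an interior point $\alpha\in(\theta_0,\theta_1)$ (the CCW arc) and an interior point $\beta\in S^1\setminus[\theta_0,\theta_1]$ (the CW arc), then $\{\alpha,\beta\}$ would disconnect $S^1$ into two open arcs, one containing $\theta_0$ and the other $\theta_1$, contradicting the connectedness of $I_m$. This yields $[\theta_0,\theta_1]\subseteq I_m$ or $S^1\setminus[\theta_0,\theta_1]\subseteq I_m$ (or both), which is exactly the claim.

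The main subtlety I expect is the case in which $P(t)\in\partial M$ lies at a corner of $M$: there the outward normal is a whole arc of directions, so the ``maximizer of separation'' defining the orientation is not unique. The fix is to observe that the normal-cone map $P\mapsto N(P)\subseteq S^1$ has arc-valued fibers and closed graph; consequently the set $I_m$ realized along the continuous displacement path $P([0,1])$ is still a connected subset of $S^1$, and the topological argument above goes through unchanged for any selection convention one adopts at such instants.
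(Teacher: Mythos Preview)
Your proposal is correct and follows the same idea the paper has in mind. Note that the paper does not actually give a proof here: the entire justification is the phrase ``By the continuity of $\xi_{\bA}$ and $\xi_{\bB}$'' embedded in the statement itself, with the conclusion treated as an immediate observation. Your argument is a careful unpacking of that one line---realizing the orientation as a (possibly set-valued) continuous function of $t$ via Observation~\ref{obs:Minkowski}, and then invoking the connectedness of its image in $S^1$---and even anticipates the corner/normal-cone subtlety that the paper does not mention. So there is no divergence in approach, only in level of detail.
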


\begin{obs}\label{obs:flip}
If $S$ is any planar point set, denote by $\flip(S)$, the reflection of $S$ across the $x$-axis, i.e. the set $\{ (x, -y) \;|\; (x, y) \in S \}$.
Note that if $m$ is a net clockwise co-motion from configuration $\bA\bB_0$ to configuration $\bA\bB_0$, then $\flip(m)$ is a net counter-clockwise co-motion of the robot pair $(\flip(\bA), \flip(\bB))$ from configuration 
$\flip(\bA\bB_0)$ to $\flip(\bA\bB_1)$.
Thus, by the preceding Observation, in order to characterize optimal co-motions it will suffice to focus on net counter-clockwise co-motions.
In fact, as we will see, every minimum-length collision-free co-motion has a trace-equivalent collision-free co-motion whose associated configurations have orientations that change monotonically. So we can focus attention on strictly counter-clockwise co-motions.
\end{obs}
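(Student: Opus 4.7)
The plan is to verify the observation by establishing four properties of the $\flip$ operation: (i) it preserves CCS structure; (ii) it commutes with placement, so that collision-freeness transfers; (iii) it is an isometry, so lengths are preserved; and (iv) it negates orientation, so it swaps net clockwise with net counter-clockwise. Together these imply the claimed bijection, and the sufficiency of studying net counter-clockwise co-motions then follows immediately from Observation~\ref{monotone}.

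For (i)--(iii), I would first note that $\flip$ is a linear isometry of $\Re^2$ that commutes with the central involution $(x,y)\mapsto(-x,-y)$, so $\flip(\bA)$ and $\flip(\bB)$ are convex and centrally-symmetric. Then I would observe that $\flip(\bZ[Z]) = \flip(\bZ)[\flip(Z)]$ (translating then reflecting equals reflecting then translating by the flipped vector), which makes the intersection relation transfer faithfully: $\bA[A]\cap\bB[B]=\emptyset$ iff $\flip(\bA)[\flip(A)]\cap\flip(\bB)[\flip(B)]=\emptyset$. Defining $\flip(m):=(\flip\circ\xi_{\bA},\flip\circ\xi_{\bB})$, it follows that $\flip(m)$ is a collision-free co-motion of $(\flip(\bA),\flip(\bB))$ from $\flip(\bA\bB_0)$ to $\flip(\bA\bB_1)$, with $\ell(\flip(m))=\ell(m)$ because Euclidean arc-length is preserved by isometries.

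For (iv), using that $\cos$ is even and $\sin$ is odd, I would derive directly from the definition of reach that $r_{\flip(S)}(\theta) = r_S(-\theta)$. This implies that the separation of the flipped configuration in direction $\theta$ equals that of the original in direction $-\theta$, so the orientation (the maximizing direction) negates under $\flip$. Applied pointwise along the motion, $I_{\flip(m)} = -I_m$, and since negation reverses traversal direction on the circle, $\flip$ sends the arc $S^1-[\theta_0,\theta_1]$ to the counter-clockwise arc from $-\theta_0$ to $-\theta_1$ in the flipped problem. Hence $\flip$ sends net clockwise co-motions of $(\bA,\bB)$ to net counter-clockwise co-motions of $(\flip(\bA),\flip(\bB))$, and being an involution yields the desired length-preserving bijection; combined with Observation~\ref{monotone}, this reduces the optimization to net counter-clockwise co-motions.

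The hard part, flagged by ``as we will see'', is the additional claim that every minimum-length collision-free co-motion is trace-equivalent to one whose orientation changes monotonically. I expect this to be the main obstacle: it is not handled by the symmetry argument above, and its proof (deferred to Section~\ref{sec:modified}) will presumably require removing orientation backtracking via a convexification argument that exploits Observation~\ref{obs:convex} and the structure of the Minkowski sum $\bA\!+\!\bB$, while simultaneously preserving collision-freeness and not increasing the total length.
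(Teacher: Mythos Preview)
Your proposal is correct. The paper itself does not supply a proof for this Observation; it is simply asserted, with the monotonicity clause explicitly deferred (``as we will see'') to the analysis in Section~\ref{sec:modified}. Your treatment is more detailed than the paper's: you spell out the elementary facts (linearity and isometry of $\flip$, commutation with placement, the identity $r_{\flip(S)}(\theta)=r_S(-\theta)$, and the resulting negation of orientation) that the authors leave implicit, and you correctly flag that the final monotonicity claim is not part of the observation proper but a forward reference.
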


It follows directly from Observation~\ref{obs:convex} that
to determine the length of a convex co-motion $(\xi_{\bA}, \xi_{\bB})$ it suffices to be able to determine the length of the pair of closed convex curves 
$(\widearc{\xi_{\bA}}, \widearc{\xi_{\bB}})$.  
For this we follow an approach, first described by Icking et al.~\cite{icking}, that allows us to express the the length of two convex curves in terms of the reach of those curves in all directions. 
This expression is a direct application of Cauchy's surface area formula:

\begin{lem}
\label{lem:cauchyorig}
(Cauchy's surface area formula \cite[Section 5.3]{egg}) Let $C$ be a closed convex curve in the plane and $r_C(\theta)$ be the reach of $C$ in direction $\theta$. Then 
\begin{equation}
\ell(C) = \int_{S^1} r_C(\theta)d\theta.
\end{equation}
\end{lem}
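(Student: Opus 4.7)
My plan is to deduce Cauchy's formula from a projection identity, verified first for convex polygons and then extended to general closed convex curves by approximation. The key geometric fact is that for a closed convex curve $C$, the orthogonal projection of $C$ onto the line through the origin in direction $(\cos\theta,\sin\theta)$ is a segment of length $w_C(\theta) := r_C(\theta) + r_C(\theta+\pi)$; indeed, $\min_{(x,y)\in C}(x\cos\theta+y\sin\theta) = -r_C(\theta+\pi)$ directly from the definition of reach. Since $\int_0^{2\pi} r_C(\theta+\pi)\,d\theta = \int_0^{2\pi} r_C(\theta)\,d\theta$ by periodicity, proving the formula reduces to showing
\[
\int_0^{2\pi} w_C(\theta)\, d\theta \;=\; 2\,\ell(C).
\]

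First I would treat the polygonal case, where $C$ is the boundary of a convex polygon $P$ with edges $e_1,\ldots,e_n$, with $e_i$ having length $L_i$ and outward unit normal in direction $\alpha_i$. The direction of $e_i$ is therefore $\alpha_i + \pi/2$, so the length of its orthogonal projection onto the line in direction $\theta$ is $L_i\,|\sin(\theta-\alpha_i)|$. By convexity, the two support points in direction $\theta$ split $\partial P$ into two monotone chains, each projecting bijectively onto the projection segment; hence the edges project so as to cover that segment exactly twice, giving
\[
\sum_{i=1}^n L_i\,|\sin(\theta-\alpha_i)| \;=\; 2\, w_C(\theta).
\]
Integrating both sides over $\theta \in [0,2\pi]$ and using $\int_0^{2\pi}|\sin u|\,du = 4$ yields $\int_0^{2\pi} 2\,w_C(\theta)\,d\theta = 4\sum_i L_i = 4\,\ell(C)$, as required.

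For an arbitrary closed convex curve $C$, I would approximate by a sequence of inscribed convex polygons $P_k$ whose vertex sets become dense on $C$. Standard convex-geometry arguments give $\ell(P_k) \to \ell(C)$ and $r_{P_k}(\theta) \to r_C(\theta)$ pointwise in $\theta$, while $0 \le r_{P_k}(\theta) \le r_C(\theta)$ is uniformly bounded by the maximum distance from the origin to $C$. The dominated convergence theorem therefore lets me pass to the limit inside the integral, transporting the polygonal identity to $C$.

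The main obstacle lies in the double-covering claim in the polygonal step: one must verify carefully that, as $\theta$ varies, every interior point of the projection segment is the image of exactly two boundary points of $P$, and that the degenerate configurations (an edge parallel to the projection direction, or $\theta$ equal to some $\alpha_i \pm \pi/2$) contribute zero to the sine sum and form a measure-zero set in $\theta$. The polygonal perimeter and pointwise reach convergence in the approximation step are secondary technical items that follow from the fact that each $P_k$ lies inside the convex hull bounded by $C$, which supplies both the uniform bound for dominated convergence and the classical monotone convergence of inscribed polygonal perimeters to the length of a rectifiable convex curve.
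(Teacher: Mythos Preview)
Your proof is correct and follows one of the standard routes to Cauchy's formula: reduce to the width integral via periodicity, establish the double-covering projection identity for convex polygons, integrate using $\int_0^{2\pi}|\sin u|\,du=4$, and pass to general convex curves by inscribed polygonal approximation with dominated convergence. The technical caveats you flag (measure-zero degenerate directions, monotone convergence of inscribed perimeters) are genuine but routine.

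The paper, however, does not prove this lemma at all: it is stated as a classical result and attributed to Eggleston's text (the citation \texttt{[Section 5.3]\{egg\}}). So there is no ``paper's own proof'' to compare against; you have simply supplied a self-contained argument where the authors chose to quote the literature. Your approach is essentially the textbook one and would be entirely appropriate if a proof were required.
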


As noted in \cite{icking}, it follows directly from Lemma~\ref{lem:cauchyorig} that we can find a similarly simple expression for the sum of the lengths of two convex curves in the plane:
\begin{corr}
\label{cor:cauchyorig}
Let $C_{1}$ and $C_{2}$ be closed convex curves in the plane. 
Then the
sum of their lengths can be expressed as: \label{thm:cauchy} 
\begin{equation}
\ell(C_{1})+\ell(C_{2}) = \int_{S^1}\left(r_{C_1}(\theta)+r_{C_2}(\pi+\theta)\right)d\theta. \label{eq:cauchy}
\end{equation}
\end{corr}

\begin{obs}\label{obs:Cauchy2}
It follows from Observation~\ref{obs:oppositereach} that (i) $r_{C_1}(\theta)+r_{C_2}(\pi+\theta) = 
r_{C_1\!-\!C_2}(\theta)$, and
(ii) $r_{C_1\!-\!C_2}(\theta) = r_{\widearc{C_1\!-\!C_2}}(\theta)$, where $\widearc{C_1\!-\!C_2}$ denotes the boundary of the convex hull of $C_1\!-\!C_2$. Thus, by another application of 
Lemma~\ref{lem:cauchyorig},
\begin{equation}
\ell(C_1) + \ell(C_2) = \ell(\widearc{C_1\!-\!C_2}). \label{eq:cauchy2}
\end{equation}
\end{obs}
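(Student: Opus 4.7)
The plan is to verify the two identities (i) and (ii) in turn, and then obtain the final length equality by a double application of Cauchy's formula. Neither step requires new machinery; everything follows by chaining the reach identities recorded in Observation~\ref{obs:oppositereach} with the additivity of reach under Minkowski sums noted in Observation~\ref{obs:Minkowski}, together with Corollary~\ref{cor:cauchyorig}.

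For (i), I would first rewrite $C_1\!-\!C_2$ as the Minkowski sum $C_1 + (-C_2)$, where $-C_2 = \{-p : p \in C_2\}$. Since the reach is sub-linear in the sum variable --- explicitly, $r_{R_1 + R_2}(\theta) = r_{R_1}(\theta) + r_{R_2}(\theta)$, as recorded for CCS regions in Observation~\ref{obs:Minkowski} but valid for arbitrary planar sets by linearity of the functional $(x,y) \mapsto x\cos\theta + y\sin\theta$ --- I obtain
\begin{equation*}
r_{C_1\!-\!C_2}(\theta) \;=\; r_{C_1}(\theta) + r_{-C_2}(\theta).
\end{equation*}
Then I apply the first identity of Observation~\ref{obs:oppositereach}, namely $r_{-C_2}(\theta) = r_{C_2}(\pi+\theta)$, to finish (i).

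For (ii), the second part of Observation~\ref{obs:oppositereach} says that the reach of any set in a given direction coincides with the reach of the boundary of its convex hull in that direction, since the support point realizing the supremum always lies on the convex hull and in particular on its boundary. Applying this to $\mathbb{S} = C_1\!-\!C_2$ yields $r_{C_1\!-\!C_2}(\theta) = r_{\widearc{C_1-C_2}}(\theta)$, as desired.

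Finally, combining (i) and (ii) with Corollary~\ref{cor:cauchyorig} gives
\begin{equation*}
\ell(C_1) + \ell(C_2) \;=\; \int_{S^1}\!\bigl(r_{C_1}(\theta)+r_{C_2}(\pi+\theta)\bigr)\,d\theta \;=\; \int_{S^1} r_{\widearc{C_1-C_2}}(\theta)\,d\theta,
\end{equation*}
and one more invocation of Cauchy's surface area formula (Lemma~\ref{lem:cauchyorig}), applied to the closed convex curve $\widearc{C_1\!-\!C_2}$, converts the right-hand side into $\ell(\widearc{C_1\!-\!C_2})$. There is no substantive obstacle here; the only point to be careful about is that Cauchy's formula requires a closed convex curve, which is exactly why step (ii) --- passing from $C_1\!-\!C_2$ to the boundary of its convex hull --- is needed before the second integration.
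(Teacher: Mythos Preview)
Your proposal is correct and follows exactly the approach indicated in the paper's inline justification: chain the reach identities from Observation~\ref{obs:oppositereach} to get (i) and (ii), then combine with Corollary~\ref{cor:cauchyorig} and a second application of Lemma~\ref{lem:cauchyorig}. The only minor addition you make is to spell out explicitly that reach is additive under Minkowski sums for arbitrary sets (not just the CCS case recorded in Observation~\ref{obs:Minkowski}), which is a helpful clarification but not a departure in method.
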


The following observations show how basic properties of the co-motion $(\xi_{\bA}, \xi_{\bB})$ are reflected in the reach functions associated with the pair of convex curves $(\widearc{\xi_{\bA}}, \widearc{\xi_{\bB}})$.

\begin{obs}
\label{obs:point-wise}
If $m = (\xi_{\bA}, \xi_{\bB})$ is an arbitrary collision-free co-motion then
$r_{\widearc{\xi_{\bA}}}(\theta)
+r_{\widearc{\xi_{\bB}}}(\pi+\theta)$ 
describes the maximum separation, in direction $\theta$, realized by a point on $\widearc{\xi_{\bA}}$ and a point on   $\widearc{\xi_{\bB}}$.
If $\theta$ lies in $I_m$,
the range of orientations realized by the set of all configurations associated with $m$,
then for some $t \in [0,1]$, 
the configuration 
$(\bA[\xi_{\bA}(t)], \bB[\xi_{\bB}(t)]) $ has orientation $\theta$, and hence by the viability property and central-symmetry, 
$r_{\widearc{\xi_{\bA}}}(\theta)
+r_{\widearc{\xi_{\bB}}}(\pi+\theta) 
\ge r_{\xi_{\bA}(t)}(\theta)
+r_{\xi_{\bB}(t)}(\pi+\theta) 
\ge r_{\bA}(\theta) + r_{\bB}(\pi +\theta)
= r_{\bA\!-\!\bB}(\theta)
= r_{\bA\!+\!\bB}(\theta).$
\end{obs}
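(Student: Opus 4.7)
The plan is to separate the observation into two claims: (i) the quantity $r_{\widearc{\xi_{\bA}}}(\theta)+r_{\widearc{\xi_{\bB}}}(\pi+\theta)$ equals the maximum, in direction $\theta$, of the signed separation between a point of $\widearc{\xi_{\bA}}$ and a point of $\widearc{\xi_{\bB}}$; and (ii) whenever $\theta\in I_m$, this quantity is bounded below by $r_{\bA+\bB}(\theta)$ via the displayed chain.

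For (i), I would simply unfold the definition of reach and apply Observation~\ref{obs:oppositereach}. By that observation, $r_{\widearc{\xi_{\bA}}}(\theta)=r_{\xi_{\bA}}(\theta)$, which by definition is the supremum of the linear functional $(x,y)\mapsto x\cos\theta+y\sin\theta$ over points in the trace of $\xi_{\bA}$; similarly $r_{\widearc{\xi_{\bB}}}(\pi+\theta)$ equals minus the infimum of that same functional over points in the trace of $\xi_{\bB}$. Adding the two expressions and using the independence of the maximizers yields the supremum of $(P-Q)\cdot(\cos\theta,\sin\theta)$ taken over $P\in\widearc{\xi_{\bA}}$ and $Q\in\widearc{\xi_{\bB}}$, which is exactly the advertised maximum separation in direction $\theta$.

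For (ii), I would pick any $t\in[0,1]$ for which the configuration $(\bA[\xi_{\bA}(t)],\bB[\xi_{\bB}(t)])$ has orientation $\theta$ -- such a $t$ exists by the definition of $I_m$ -- and set $A=\xi_{\bA}(t)$, $B=\xi_{\bB}(t)$. The first inequality in the chain is immediate, because $A\in\widearc{\xi_{\bA}}$ and $B\in\widearc{\xi_{\bB}}$, so their single-point reaches are at most the respective suprema. The second inequality is the assertion that the separation in the orientation direction is non-negative, which is precisely the viability of the configuration (since the orientation is the separation-maximizing direction); unfolding the definition of separation via $r_{\bZ[Z]}(\theta)=r_Z(\theta)+r_{\bZ}(\theta)$ together with $r_Z(\pi+\theta)=-r_Z(\theta)$ for a point $Z$ converts this non-negativity into exactly the stated inequality $r_A(\theta)+r_B(\pi+\theta)\ge r_{\bA}(\theta)+r_{\bB}(\pi+\theta)$. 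The final equalities come from Observation~\ref{obs:oppositereach} (giving $r_{\bB}(\pi+\theta)=r_{-\bB}(\theta)$) combined with the additivity of reach under Minkowski sum (Observation~\ref{obs:Minkowski}), and the central symmetry of $\bB$ equates $\bA-\bB$ with $\bA+\bB$.

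I do not anticipate a substantive obstacle here: the whole argument is a careful unfolding of definitions, with every translation between the ``separation'' viewpoint (used in viability and orientation) and the ``reach'' viewpoint (in which the Minkowski-sum bound is natural) already supplied by an earlier observation. The only delicate bookkeeping is tracking the sign changes induced by replacing $\theta$ with $\pi+\theta$ when the underlying set is a single point versus a centrally-symmetric region.
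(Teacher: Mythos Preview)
Your proposal is correct and follows essentially the same reasoning the paper embeds in the observation itself: unfolding the reach definition to interpret the sum as a maximum directed separation, then invoking viability at an instant realizing orientation $\theta$ and using central symmetry to pass to $r_{\bA+\bB}(\theta)$. One minor imprecision: you write ``$A\in\widearc{\xi_{\bA}}$'', but $A=\xi_{\bA}(t)$ lies on the trace and hence in the convex hull, not necessarily on its boundary; the inequality $r_{\widearc{\xi_{\bA}}}(\theta)\ge r_A(\theta)$ still holds since the reach of the boundary equals that of the hull, so this does not affect correctness.
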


\begin{obs} Suppose that $m$ is any co-motion from 
configuration $\bA\bB_0$ to configuration $\bA\bB_1$.
Since the reach of $\widearc{\xi_{\bA}}$ is at least the reach of the segment $\overline{A_0A_1}$ and 
the reach of $\widearc{\xi_{\bB}}$ is at least the reach of the segment $\overline{B_0B_1}$
it follows that
$r_{\widearc{\xi_{\bA}}}(\theta)
+r_{\widearc{\xi_{\bB}}}(\pi+\theta) \ge r_{\overline{A_0A_1}}(\theta) + r_{\overline{B_0B_1}}(\pi +\theta)$.
Note that if 
$r_{\widearc{\xi_{\bA}}}(\theta)
+r_{\widearc{\xi_{\bB}}}(\pi+\theta) > 
r_{\overline{A_0A_1}}(\theta) + r_{\overline{B_0B_1}}(\pi +\theta)$, 
it must be the case that either 
$r_{\widearc{\xi_{\bA}}}(\theta)$
is determined by a point other than $A_0$ or $A_1$,
or
$r_{\widearc{\xi_{\bB}}}(\pi+\theta)$ is determined by a point other than $B_0$ or $B_1$.
\end{obs}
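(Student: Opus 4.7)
The plan is to establish both assertions of the observation directly from the definition of reach together with the fact that $\widearc{\xi_{\bA}}$ (resp.\ $\widearc{\xi_{\bB}}$) bounds the convex hull of the trace, which in particular contains the endpoints $A_0, A_1$ (resp.\ $B_0, B_1$). No deeper geometric machinery should be necessary beyond Observation~\ref{obs:oppositereach} and the elementary fact that the reach of a segment is attained at an endpoint.

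For the inequality itself, I would first note that the trace of $\xi_{\bA}$ contains both $A_0$ and $A_1$ by the definition of a motion, so the closed segment $\overline{A_0 A_1}$ lies inside the convex hull of the trace. Since the reach $r_{\mathbb S}(\theta)$ is the supremum of a fixed linear functional over $\mathbb S$, it is monotone in $\mathbb S$ with respect to set inclusion; combining this with Observation~\ref{obs:oppositereach}, which equates the reach of a set with the reach of its convex-hull boundary, yields $r_{\widearc{\xi_{\bA}}}(\theta) \ge r_{\overline{A_0A_1}}(\theta)$. The identical argument applied to $\xi_{\bB}$ in direction $\pi + \theta$ gives $r_{\widearc{\xi_{\bB}}}(\pi+\theta) \ge r_{\overline{B_0 B_1}}(\pi+\theta)$, and summing the two inequalities produces the stated bound.

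For the follow-up statement, I would argue by contraposition: suppose that the supremum defining $r_{\widearc{\xi_{\bA}}}(\theta)$ is realized at a point in $\{A_0, A_1\}$ and simultaneously that the supremum defining $r_{\widearc{\xi_{\bB}}}(\pi+\theta)$ is realized at a point in $\{B_0, B_1\}$. Because the linear functional $(x,y) \mapsto x \cos\theta + y \sin\theta$ is affine on the segment $\overline{A_0 A_1}$, its supremum on that segment is $\max\{r_{A_0}(\theta), r_{A_1}(\theta)\} = r_{\overline{A_0 A_1}}(\theta)$. Hence the supposition forces $r_{\widearc{\xi_{\bA}}}(\theta) \le r_{\overline{A_0 A_1}}(\theta)$, and combined with the opposite inequality from the previous paragraph, equality must hold. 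The analogous reasoning for $\bB$ at angle $\pi+\theta$ gives the companion equality, and summing yields equality between the two sides, contradicting the assumed strict inequality.

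The argument is essentially mechanical and I do not anticipate a substantive obstacle. The only point to be careful about is the interchangeable use of the reach of a trace, of its convex hull, and of the boundary of that convex hull, each of which is justified by Observation~\ref{obs:oppositereach}; beyond that, everything reduces to the fact that a linear functional attains its maximum on a segment at one of its endpoints.
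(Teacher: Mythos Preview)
Your proposal is correct and matches the paper's approach exactly: the paper states this as a self-evident observation with the justification embedded in the statement itself, and you have simply spelled out the two elementary steps (monotonicity of reach under inclusion, and the fact that a linear functional on a segment is maximized at an endpoint) that the paper leaves implicit.
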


The observations above allow us to give a lower bound on the length of optimal co-motions.

\begin{lem}
\label{lem:integral}
Suppose that $m$ is any collision-free net counter-clockwise co-motion from 
configuration $\bA\bB_0$ to configuration $\bA\bB_1$.
Then 
\begin{equation*}
\ell(m) \ge  \int_{S^1} \max\left(r_{\overline{A_0A_1}}(\theta) + r_{\overline{B_0B_1}}(\pi +\theta), 
r_{\bA\!+\!\bB}(\theta)
\cdot\mathds{1}_{[\theta_0, \theta_1]}(\theta) \right) \textrm{d} \theta - |\overline{A_0A_1}| - |\overline{B_0B_1}|,
\end{equation*}
where 
$\mathds{1}_{[\theta_0, \theta_1]}$ is the 
indicator function of the range $[\theta_0, \theta_1]$. \\
\elide{ SKIP THIS: Similarly, if $m$ is net clockwise, then
\begin{equation*}
\ell(m) \ge  \int_{S^1} \max\left(r_{\overline{A_0A_1}}(\theta) + r_{\overline{B_0B_1}}(\pi +\theta), 
r_{\bA\!+\!\bB}(\theta)
\cdot\mathds{1}_{S^1 - [\theta_0, \theta_1]}(\theta) \right) \textrm{d} \theta - |\overline{A_0A_1}| - |\overline{B_0B_1}|.
\end{equation*}
}
\end{lem}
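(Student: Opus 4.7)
\textbf{Proof plan for Lemma~\ref{lem:integral}.}
The plan is to reduce the problem to a pointwise lower bound on the reach sum $r_{\widearc{\xi_{\bA}}}(\theta) + r_{\widearc{\xi_{\bB}}}(\pi + \theta)$, and then integrate via Cauchy's formula. First, I would combine Observation~\ref{obs:convex}, applied separately to $\xi_{\bA}$ and $\xi_{\bB}$, to obtain
\begin{equation*}
\ell(m) = \ell(\xi_{\bA}) + \ell(\xi_{\bB}) \;\geq\; \ell(\widearc{\xi_{\bA}}) + \ell(\widearc{\xi_{\bB}}) - |\overline{A_0A_1}| - |\overline{B_0B_1}|.
\end{equation*}
Then Corollary~\ref{cor:cauchyorig} lets me rewrite the right-hand side, modulo the segment terms, as
\begin{equation*}
\ell(\widearc{\xi_{\bA}}) + \ell(\widearc{\xi_{\bB}}) \;=\; \int_{S^1}\!\bigl(r_{\widearc{\xi_{\bA}}}(\theta) + r_{\widearc{\xi_{\bB}}}(\pi+\theta)\bigr)\,\mathrm d\theta.
\end{equation*}
So it suffices to prove, for each direction $\theta$, that the integrand is at least the maximum appearing in the lemma statement.

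The next step is to verify the two pointwise bounds on $r_{\widearc{\xi_{\bA}}}(\theta) + r_{\widearc{\xi_{\bB}}}(\pi+\theta)$. The first bound, namely $r_{\overline{A_0A_1}}(\theta) + r_{\overline{B_0B_1}}(\pi+\theta)$, is immediate for every $\theta \in S^1$ because $\widearc{\xi_{\bA}}$ contains the segment $\overline{A_0A_1}$ (and likewise for $\bB$), so the convex-hull reach dominates the segment reach direction by direction; this is exactly the content of the second displayed observation preceding the lemma. The second bound, $r_{\bA+\bB}(\theta)$, holds for $\theta \in [\theta_0,\theta_1]$: since $m$ is net counter-clockwise, $[\theta_0,\theta_1] \subseteq I_m$ by definition, so Observation~\ref{obs:point-wise} gives the inequality $r_{\widearc{\xi_{\bA}}}(\theta) + r_{\widearc{\xi_{\bB}}}(\pi+\theta) \geq r_{\bA+\bB}(\theta)$ precisely on this range, which is encoded by the factor $\mathds{1}_{[\theta_0,\theta_1]}(\theta)$.

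Taking the pointwise maximum of the two bounds and integrating over $S^1$ yields the claimed lower bound on $\ell(\widearc{\xi_{\bA}}) + \ell(\widearc{\xi_{\bB}})$, and subtracting $|\overline{A_0A_1}| + |\overline{B_0B_1}|$ gives the inequality in the lemma. The only subtlety is justifying the use of Observation~\ref{obs:point-wise} for \emph{every} $\theta \in [\theta_0,\theta_1]$, which follows from the continuity of $\xi_{\bA}$ and $\xi_{\bB}$ used in Observation~\ref{monotone}: the orientation varies continuously along any net counter-clockwise co-motion, so every $\theta \in [\theta_0, \theta_1]$ is actually attained as the orientation of some intermediate viable configuration, where the central-symmetry argument in Observation~\ref{obs:point-wise} applies verbatim. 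I do not anticipate any real obstacle beyond making this continuity/attainment step explicit; the rest is a straightforward chain of the already-established inequalities.
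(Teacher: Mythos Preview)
Your proposal is correct and follows essentially the same approach the paper intends: the lemma is stated immediately after Observations~\ref{obs:convex}, \ref{obs:point-wise}, and the unnumbered observation on the segment reach, precisely because it is meant to be a direct combination of those facts with Corollary~\ref{cor:cauchyorig}, exactly as you outline. Your only addition is making explicit the continuity/attainment step for $\theta\in[\theta_0,\theta_1]$, which the paper leaves implicit in the definition of $I_m$ and Observation~\ref{monotone}.
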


\begin{defn}
Let $m = (\xi_{\bA}, \xi_{\bB})$ be any co-motion from 
configuration $\bA\bB_0$ to configuration $\bA\bB_1$.
The pair $(\widearc{\xi_{\bA}}, \widearc{\xi_{\bB}})$ 
is said to be \emph{counter-clockwise tight (with respect to the pair $(\bA\bB_0, \bA\bB_1)$)} 
if for all $\theta \in S^1$,
\begin{equation}\label{eqn:CCtight}
 r_{\widearc{\xi_{\bA}}}(\theta)
+r_{\widearc{\xi_{\bB}}}(\pi+\theta)=\max\left(r_{\overline{A_0A_1}}(\theta) + r_{\overline{B_0B_1}}(\pi +\theta), 
r_{\bA\!+\!\bB}(\theta)
\cdot\mathds{1}_{[\theta_0, \theta_1]}(\theta) \right), 
\end{equation}
\elide{ SKIP THIS: Similarly, 
$(\widearc{\xi_{\bA}}, \widearc{\xi_{\bB}})$
is said to be \emph{clockwise tight}
if for all $\theta \in S^1$,
\begin{equation}\label{eqn:Ctight}
 r_{\widearc{\xi_{\bA}}}(\theta)
+r_{\widearc{\xi_{\bB}}}(\pi+\theta)=\max\left(r_{\overline{A_0A_1}}(\theta) + r_{\overline{B_0B_1}}(\pi +\theta), 
r_{\bA\!+\!\bB}(\theta)
\cdot\mathds{1}_{S^1-[\theta_0, \theta_1]}(\theta) \right), 
\end{equation}
}
\end{defn}

\begin{obs}\label{obs:tightness} 
Let $m = (\xi_{\bA}, \xi_{\bB})$ be any co-motion from 
configuration $\bA\bB_0$ to configuration $\bA\bB_1$.
Note that the 
pair $(\widearc{\xi_{\bA}}, \widearc{\xi_{\bB}})$ 
is counter-clockwise tight if and only if for all angles $\theta$ for which $r_{\widearc{\xi_{\bA}}}(\theta)
+r_{\widearc{\xi_{\bB}}}(\pi+\theta) > r_{\overline{A_0A_1}}(\theta) + r_{\overline{B_0B_1}}(\pi +\theta)$, 
$\theta \in [\theta_0, \theta_1]$ and
$r_{\widearc{\xi_{\bA}}}(\theta)
+r_{\widearc{\xi_{\bB}}}(\pi+\theta) = r_{\bA\!+\!\bB}(\theta)$
\end{obs}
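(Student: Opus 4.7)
The plan is to unpack the max in the definition~(\ref{eqn:CCtight}) of counter-clockwise tightness. Write $L(\theta) := r_{\widearc{\xi_{\bA}}}(\theta) + r_{\widearc{\xi_{\bB}}}(\pi+\theta)$, $S(\theta) := r_{\overline{A_0A_1}}(\theta) + r_{\overline{B_0B_1}}(\pi+\theta)$, and $M(\theta) := r_{\bA\!+\!\bB}(\theta)$. Then~(\ref{eqn:CCtight}) asserts $L(\theta) = \max\bigl(S(\theta),\, M(\theta)\mathds{1}_{[\theta_0,\theta_1]}(\theta)\bigr)$ for every $\theta \in S^1$, while the reformulation asserts the pointwise implication ``$L(\theta) > S(\theta) \Rightarrow \theta \in [\theta_0,\theta_1]$ and $L(\theta) = M(\theta)$.''

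The first step I would take is to assemble the two pointwise lower bounds on $L$ already collected in the paper. From the observation immediately preceding Lemma~\ref{lem:integral}, $L(\theta) \ge S(\theta)$ holds unconditionally, since $\widearc{\xi_{\bA}}$ contains $A_0, A_1$ and $\widearc{\xi_{\bB}}$ contains $B_0, B_1$. Under the standing hypothesis that $m$ is a collision-free net counter-clockwise co-motion, Observation~\ref{monotone} gives $[\theta_0,\theta_1] \subseteq I_m$, so Observation~\ref{obs:point-wise} yields $L(\theta) \ge M(\theta)$ for every $\theta \in [\theta_0,\theta_1]$. Together these two bounds give $L(\theta) \ge \max\bigl(S(\theta),\, M(\theta)\mathds{1}_{[\theta_0,\theta_1]}(\theta)\bigr)$ pointwise, so tightness is precisely the statement that this combined inequality is saturated at every $\theta$.

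The forward direction is then immediate: if tightness holds and $L(\theta) > S(\theta)$ at some $\theta$, the saturated max must be realized by its second argument, which forces both $\mathds{1}_{[\theta_0,\theta_1]}(\theta) = 1$ and $L(\theta) = M(\theta)$. For the converse I would split on the sign of $L(\theta) - S(\theta)$. When $L(\theta) > S(\theta)$ the reformulated hypothesis supplies $\theta \in [\theta_0,\theta_1]$ and $L(\theta) = M(\theta)$, so the max equals $\max(S,M) = L$. When $L(\theta) = S(\theta)$, the already-established bound $L \ge M$ on $[\theta_0,\theta_1]$ yields $M(\theta)\mathds{1}_{[\theta_0,\theta_1]}(\theta) \le L(\theta) = S(\theta)$, and thus the max again equals $S(\theta) = L(\theta)$. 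In either case~(\ref{eqn:CCtight}) holds.

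The observation is essentially an algebraic rearrangement, so no significant obstacle arises; it is a two-line case split once the right preliminary inequalities are in hand. The one delicate point worth flagging is that the equivalence leans implicitly on the running hypothesis (inherited from the setting of Lemma~\ref{lem:integral}) that $m$ is collision-free and net counter-clockwise. Without this the inequality $L \ge M$ on $[\theta_0,\theta_1]$ could fail, and one could have $L(\theta) = S(\theta) < M(\theta)$ on $[\theta_0,\theta_1]$, which would satisfy the reformulation vacuously while violating~(\ref{eqn:CCtight}); so I would state this proviso explicitly before carrying out the split.
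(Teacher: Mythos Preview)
Your argument is correct; the paper states this observation without proof, so your unpacking of the max into the two-case split is exactly the intended justification. Your flag about the implicit hypothesis is well taken: the converse direction genuinely requires $L(\theta) \ge M(\theta)$ on $[\theta_0,\theta_1]$, which comes from Observation~\ref{obs:point-wise} together with $[\theta_0,\theta_1]\subseteq I_m$, and hence presupposes that $m$ is collision-free and net counter-clockwise --- a hypothesis the paper does not spell out in the statement of Observation~\ref{obs:tightness} but which holds in every application (Remark~\ref{rem:strategy} and the case analysis of Section~\ref{sec:standardproof}).
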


We are now in a position to state our basic tool for establishing the optimality of co-motions:

\begin{lem}
\label{lem:optkey}
Let $m = (\xi_{\bA}, \xi_{\bB})$ be any 
collision-free co-motion from initial configuration $\bA\bB_0$ to goal configuration $\bA\bB_1$.
If $m$ is net counter-clockwise and convex, and
the pair $(\widearc{\xi_{\bA}}, \widearc{\xi_{\bB}})$ 
is counter-clockwise tight
then $m$ is counter-clockwise optimal. 
\end{lem}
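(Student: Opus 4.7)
The plan is to show that the length of $m$ exactly matches the lower bound from Lemma~\ref{lem:integral}; since that lower bound applies to every collision-free net counter-clockwise co-motion between $\bA\bB_0$ and $\bA\bB_1$, this immediately yields the counter-clockwise optimality of $m$.

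First I would use the convexity hypothesis to rewrite $\ell(m)$ in closed form. By the definition of a convex motion applied to each of $\xi_{\bA}$ and $\xi_{\bB}$,
\begin{equation*}
\ell(m)=\ell(\xi_{\bA})+\ell(\xi_{\bB})=\bigl(\ell(\widearc{\xi_{\bA}})-|\overline{A_0A_1}|\bigr)+\bigl(\ell(\widearc{\xi_{\bB}})-|\overline{B_0B_1}|\bigr).
\end{equation*}
Since $\widearc{\xi_{\bA}}$ and $\widearc{\xi_{\bB}}$ are closed convex curves in the plane, Corollary~\ref{cor:cauchyorig} lets me convert this into an integral of reaches:
\begin{equation*}
\ell(\widearc{\xi_{\bA}})+\ell(\widearc{\xi_{\bB}})=\int_{S^1}\!\left(r_{\widearc{\xi_{\bA}}}(\theta)+r_{\widearc{\xi_{\bB}}}(\pi+\theta)\right)\,\textrm{d}\theta.
\end{equation*}

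Next, I would invoke the hypothesis that $(\widearc{\xi_{\bA}},\widearc{\xi_{\bB}})$ is counter-clockwise tight, which is exactly the pointwise identity~(\ref{eqn:CCtight}) substituting the integrand into the closed-form $\max$ expression. Combining this with the preceding display and subtracting $|\overline{A_0A_1}|+|\overline{B_0B_1}|$ gives
\begin{equation*}
\ell(m)=\int_{S^1}\!\max\!\left(r_{\overline{A_0A_1}}(\theta)+r_{\overline{B_0B_1}}(\pi+\theta),\; r_{\bA\!+\!\bB}(\theta)\cdot\mathds{1}_{[\theta_0,\theta_1]}(\theta)\right)\textrm{d}\theta - |\overline{A_0A_1}|-|\overline{B_0B_1}|.
\end{equation*}
The right-hand side is precisely the lower bound that Lemma~\ref{lem:integral} asserts for the length of \emph{any} collision-free net counter-clockwise co-motion $m'$ between $\bA\bB_0$ and $\bA\bB_1$. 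Hence $\ell(m)\le \ell(m')$ for every such $m'$, proving that $m$ is counter-clockwise optimal.

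There is not really a hard step here; the argument is essentially bookkeeping once the two prior ingredients are in place. The only subtle point to double-check is that the collision-freeness and net counter-clockwise hypotheses on $m$ are used only implicitly, via the tightness condition: collision-freeness together with the orientation range $I_m\supseteq[\theta_0,\theta_1]$ is what made the inequality in Observation~\ref{obs:point-wise} available for the $\bA\!+\!\bB$ term of the tightness identity in the first place, so the $\max$ in~(\ref{eqn:CCtight}) is genuinely attained (not merely bounded above) by $r_{\widearc{\xi_{\bA}}}(\theta)+r_{\widearc{\xi_{\bB}}}(\pi+\theta)$. Once this is noted, the equality chain above goes through verbatim.
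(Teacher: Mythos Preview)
Your proof is correct and follows essentially the same approach as the paper: use convexity to rewrite $\ell(m)$ in terms of $\ell(\widearc{\xi_{\bA}})+\ell(\widearc{\xi_{\bB}})$, apply Corollary~\ref{cor:cauchyorig} to convert to an integral of reaches, substitute using the tightness identity~(\ref{eqn:CCtight}), and compare with the lower bound of Lemma~\ref{lem:integral}. Your closing remark about the role of collision-freeness and the net counter-clockwise hypothesis is a nice clarification, but the core argument is identical to the paper's.
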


\begin{proof}
Let $m' = ({\xi}'_{\bA}, {\xi}'_{\bB})$ be any collision-free net counter-clockwise co-motion from $\bA\bB_0$ to $\bA\bB_1$.
It follows that
\begin{align*}
\ell(m) 
&= \ell(\xi_{\bA}) + \ell(\xi_{\bB}) \\
&= \ell(\widearc{\xi_{\bA}}) + \ell(\widearc{\xi_{\bB}}) - |\overline{A_0A_1}| - |\overline{B_0B_1}|,  \mbox{   by (1)} \\
&= \int_{S^1} \left(r_{\widearc{\xi_{\bA}}}(\theta)
+r_{\widearc{\xi_{\bB}}}(\pi+\theta) \right) 
\textrm{d} \theta -
|\overline{A_0A_1}| - |\overline{B_0B_1}|,  
\mbox{  by Corollary~\ref{cor:cauchyorig}} \\
&= \int_{S^1} \max\left(r_{\overline{A_0A_1}}(\theta) + r_{\overline{B_0B_1}}(\pi +\theta), 
r_{\bA\!+\!\bB}(\theta)
\cdot\mathds{1}_{[\theta_0,\theta_1]}(\theta) \right) -
|\overline{A_0A_1}| - |\overline{B_0B_1}|,
\mbox{  by (\ref{eqn:CCtight})}\\
&\le \ell(m'), 
\mbox{    by Lemma~\ref{lem:integral}}.
\end{align*}

\end{proof}

\begin{figure}[ht]
\centering
\includegraphics[width=0.9\textwidth]
{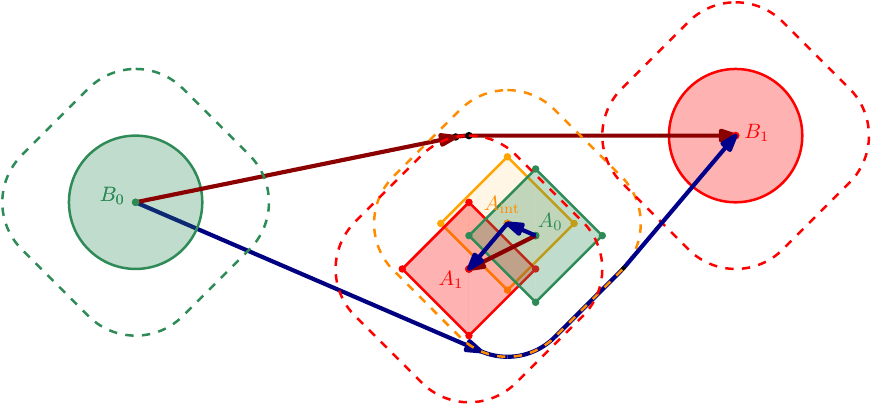}
\caption{The trace of two collision-free co-motions from the configuration $(\bA[A_0], \bB[B_0)]$ (green) to the configuration $(\bA[A_1], \bB[B_1)]$ (red). 
The dark blue co-motion is net counter-clockwise and the dark red co-motion is net clockwise.}\label{fig:introexample2}

\end{figure}

Figure~\ref{fig:introexample2} illustrates two co-motions from the configuration $(\bA[A_0], \bB[B_0)]$ (green) to the configuration $(\bA[A_1], \bB[B_1)]$ (red). 
The blue co-motion consists of three sub-motions: (i) first $\bA$ translates from placement  
$\bA[A_0]$ to placement
$\bA[A_{\rm int}]$ (yellow);
(ii) next $\bB$ translates from placement $\bB[B_0]$ to placement
$\bB[B_{1}]$, avoiding 
$\bA[A_{\rm int}]$; 
(iii) finally $\bA$ translates from placement  
$\bA[A_{\rm int}]$ to placement
$\bA[A_{1}]$.
The red co-motion consists of just two sub-motions: (i) first $\bA$ translates from placement  
$\bA[A_0]$ to placement
$\bA[A_{1}]$; (ii) next 
$\bB$ moves from placement $\bB[B_0]$ to placement
$\bB[B_{1}]$, avoiding placement
$\bA[A_{1}]$.
As we will see, the blue co-motion is 
counter-clockwise optimal and the red co-motion is clockwise optimal. Given this, it is apparent (in this case) that the clockwise optimal co-motion is globally optimal. 

In the next sections we give constructions describing  optimal co-motions for a pair $(\bA, \bB)$ of robots, from an arbitrary initial configuration
$\bA\bB_0$ to an arbitrary goal configuration $\bA\bB_1$.
This includes, of course, all those co-motions whose associated motions 
have a straight trace,
what we refer to as
\emph{straight-line co-motions}. 
In some cases globally optimal co-motions are described explicitly; in other cases, we construct both clockwise and counter-clockwise optimal co-motions, the shorter of which must be optimal among all co-motions.


\section{Optimal co-motions for two symmetric convex robots}
\label{sec:mainprf}

\subsection{An example of a counter-clockwise optimal co-motion}
\label{sec:special}

\begin{figure}[ht]
\centering
\includegraphics[width=0.9\textwidth]
{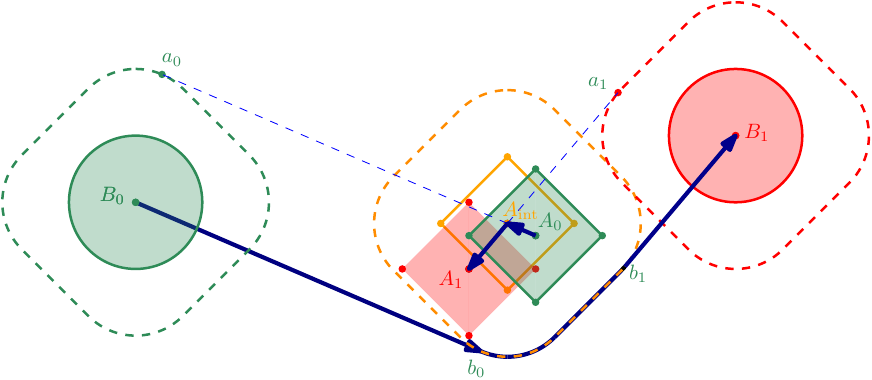}
\caption{The trace of an optimal counter-clockwise co-motion from the configuration $\bA\bB_0$ (green) to the configuration $\bA\bB_1$ (red).} \label{fig:introexample3}
\end{figure}

\begin{figure}[ht]
\centering
\includegraphics[width=0.9\textwidth]
{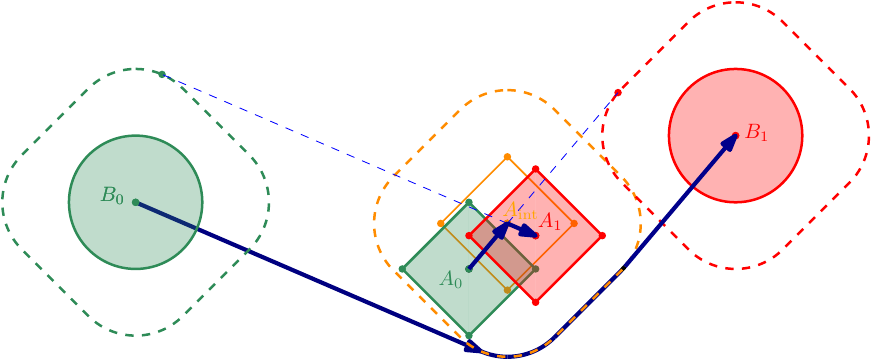}
\caption{The trace of another optimal counter-clockwise  co-motion from the configuration $\bA\bB_0$ (green) to the configuration $\bA\bB_1$ (red)}. \label{fig:introexample4}
\end{figure}

Before proceeding to classify all optimal co-motions for the pair $(\bA, \bB)$, it will be instructive to examine a special case, illustrated in Figure~\ref{fig:introexample3}, one of the co-motions illustrated previously in Figure~\ref{fig:introexample2}. 
This case will provide the simplest non-trivial example of a counter-clockwise optimal (though not globally optimal) co-motion, as well as an illustration of the form of the proofs that support our claims of optimality.

We define some points useful to our construction of the optimal co-motion. 
Let $a_0$ and $a_1$ be the upper tangent points from $A_0$ to 
$(\bA\!+\!\bB)[B_0]$, and $A_1$ to 
$(\bA\!+\!\bB)[B_1]$ respectively. 
These two tangents intersect in a point $A_\text{int}$. 
Let $b_0$ and $b_1$ be the lower tangent points from $B_0$ and $B_1$ to 
$(\bA\!+\!\bB)[A_\text{int}]$ respectively.
Note that by construction, the tangent from $A_0$ to $a_0$ passes through $A_\text{int}$ and by symmetry is parallel to the tangent from $B_0$ to $b_0$
(Similarly, the tangent from $A_1$ to $a_1$ is parallel to the tangent from $B_1$ to $b_1$.)

\begin{clm}
\label{claim:special}
The following is a counter-clockwise optimal co-motion from initial configuration $\bA\bB_0$ to goal configuration $\bA\bB_1$ (see bolded outline in Figure~\ref{fig:introexample3}):

\begin{enumerate}
\item Translate $\bA$ from position $A_0$ to position $A_\text{int}$;
\item Move $\bB$ from position $B_0$ to $B_1$, avoiding $(\bA\!+\!\bB)[A_\text{int}]$. This involves (i) translating $\bB$ from position $B_0$ to $b_0$, (ii) moving $\bB$ around the boundary of $(\bA\!+\!\bB)[A_\text{int}]$ from position $b_0$ to $b_1$, and
(iii) translating $\bB$ from position $b_1$ to $B_1$.
Then
\item Translate $\bA$ from position $A_\text{int}$ to $A_1$.
\end{enumerate}
\end{clm}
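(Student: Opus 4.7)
The plan is to invoke Lemma~\ref{lem:optkey}: it suffices to verify that the described co-motion $m = (\xi_{\bA}, \xi_{\bB})$ is (i) collision-free, (ii) net counter-clockwise, (iii) convex, and (iv) that the pair $(\widearc{\xi_{\bA}}, \widearc{\xi_{\bB}})$ is counter-clockwise tight with respect to $(\bA\bB_0, \bA\bB_1)$.

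For collision-freeness, phases 1 and 3 each move $\bA$ along a tangent line from $A_0$ through $a_0$ (resp.\ from $A_1$ through $a_1$), which by Observation~\ref{obs:Minkowski} keeps the centre of $\bA$ outside $(\bA\!+\!\bB)[B_0]$ (resp.\ $(\bA\!+\!\bB)[B_1]$). In phase 2, the centre of $\bB$ tracks an external tangent, then the boundary of $(\bA\!+\!\bB)[A_\text{int}]$, then another external tangent, and so stays outside this placement; by Observation~\ref{obs:Minkowski} again, this means $\bB[B]$ avoids $\bA[A_\text{int}]$ throughout. Net counter-clockwise monotonicity follows from the fact that the outward normal to $(\bA\!+\!\bB)[A_\text{int}]$ at the centre of $\bB$ rotates monotonically from $\theta_0$ to $\theta_1$ as $\bB$ traverses the arc from $b_0$ to $b_1$, while the two tangent segments maintain the fixed orientations $\theta_0$ and $\theta_1$.

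For convexity, the trace of $\xi_{\bA}$ is the polyline $A_0 \to A_\text{int} \to A_1$ whose convex hull is the triangle $A_0 A_\text{int} A_1$, giving $\ell(\widearc{\xi_{\bA}}) = \ell(\xi_{\bA}) + |\overline{A_0 A_1}|$ directly. The trace of $\xi_{\bB}$ concatenates the external tangent segments $\overline{B_0 b_0}$ and $\overline{b_1 B_1}$ with the convex arc from $b_0$ to $b_1$ along the boundary of $(\bA\!+\!\bB)[A_\text{int}]$; together with $\overline{B_0 B_1}$, this bounds a convex region, so $\xi_{\bB}$ is convex as well.

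The main work, and the expected obstacle, is verifying counter-clockwise tightness. The plan is to partition $S^1$ into angular intervals according to which piece of $\widearc{\xi_{\bA}}$ (respectively $\widearc{\xi_{\bB}}$) realizes the support in direction $\theta$ (respectively $\pi+\theta$). For $\theta \in [\theta_0, \theta_1]$ the support points must lie on the ``inner'' portions of the two traces (the wedge at $A_\text{int}$ for $\widearc{\xi_{\bA}}$, and the arc $b_0 \to b_1$ together with its adjacent tangent segments for $\widearc{\xi_{\bB}}$); the parallelism of the tangents at $a_0, b_0$ and at $a_1, b_1$, together with the fact that the $\bB$-arc is a translated copy of a portion of $\partial(\bA\!+\!\bB)$, will force $r_{\widearc{\xi_{\bA}}}(\theta) + r_{\widearc{\xi_{\bB}}}(\pi+\theta) = r_{\bA\!+\!\bB}(\theta)$ on this range. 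For $\theta \notin [\theta_0, \theta_1]$ the support points degenerate to endpoints in $\{A_0, A_1\}$ and $\{B_0, B_1\}$, yielding exactly $r_{\overline{A_0 A_1}}(\theta) + r_{\overline{B_0 B_1}}(\pi+\theta)$. The delicate part is confirming consistency at the transition angles $\theta_0$ and $\theta_1$ and ruling out a reach sum exceeding the stated maximum; both follow from the symmetric parallel-tangent construction defining $A_\text{int}, a_0, b_0, a_1, b_1$, together with the characterization given in Observation~\ref{obs:tightness}.
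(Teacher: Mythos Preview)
Your proposal is correct and follows essentially the same approach as the paper: invoke Lemma~\ref{lem:optkey} after checking collision-freeness, convexity, and counter-clockwise tightness by partitioning $S^1$ into the angular range where the reach sum is realized by $A_{\text{int}}$ together with a point on the arc of $\partial(\bA\!+\!\bB)[A_{\text{int}}]$ (yielding $r_{\bA+\bB}(\theta)$), versus the complementary range where it is realized by endpoints in $\{A_0,A_1\}\times\{B_0,B_1\}$. One minor imprecision: the angular interval on which the arc determines the reach sum is not exactly $[\theta_0,\theta_1]$ but the (generally smaller) range $\widehat{I_m}$ between the orientations of $(\bA[A_{\text{int}}],\bB[b_0])$ and $(\bA[A_{\text{int}}],\bB[b_1])$; the paper handles this by noting that on $[\theta_0,\theta_1]\setminus\widehat{I_m}$ the endpoint term already dominates, so the $\max$ is unaffected.
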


\begin{proof}
Let $\xi_{\bA}$ (respectively $\xi_{\bB}$) denote the motion of $\bA$ (respectively $\bB$), and 
$m = (\xi_{\bA}, \xi_{\bB})$.
It is straightforward to confirm that $m$ is a collision-free, convex and 
strictly counter-clockwise co-motion.
By Lemma~\ref{lem:optkey}, it remains to show that 
the pair $(\widearc{\xi_{\bA}}, \widearc{\xi_{\bB}})$
is counter-clockwise tight.

Let $\widehat{I_m}$ 
denote the range of 
orientations realized by $m$ in step 2 (i.e. counter-clockwise between the orientation of configuration 
$(\bA[A_{\rm int}], \bB[b_0])$ and the
orientation of configuration 
$(\bA[A_{\rm int}], \bB[b_1])$).
For angles  $\theta$ in $S^1 - \widehat{I_m}$, $r_{\widearc{\xi_{\bA}}}(\theta)$
is determined by one of the points $A_0$ or $A_1$,
and
$r_{\widearc{\xi_{\bB}}}(\pi+\theta)$ is determined by one of the points $B_0$ or $B_1$, 
and hence
$r_{\widearc{\xi_{\bA}}}(\theta)
+r_{\widearc{\xi_{\bB}}}(\pi+\theta) = r_{\overline{A_0A_1}}(\theta) + r_{\overline{B_0B_1}}(\pi +\theta)$.

For angles $\theta \in \widehat{I_m}$, 
$r_{\widearc{\xi_{\bA}}}(\theta)
+r_{\widearc{\xi_{\bB}}}(\pi+\theta)$
is determined by $A_{\rm int}$ and a support point on $(\bA\!+\!\bB)[A_{\rm int}]$ and hence
$r_{\widearc{\xi_{\bA}}}(\theta)
+r_{\widearc{\xi_{\bB}}}(\pi+\theta) = 
r_{\bA\!+\!\bB}(\theta)$.
Thus, for all $\theta \in S^1$, 
\begin{align*}
r_{\widearc{\xi_{\bA}}}(\theta)
+r_{\widearc{\xi_{\bB}}}(\pi+\theta) 
&= \max\left(r_{\overline{A_0A_1}}(\theta) + r_{\overline{B_0B_1}}(\pi +\theta), 
r_{\bA\!+\!\bB}(\theta)
\cdot\mathds{1}_{\widehat{I_m}}(\theta) \right)\\
&= \max\left(r_{\overline{A_0A_1}}(\theta) + r_{\overline{B_0B_1}}(\pi +\theta), 
r_{\bA\!+\!\bB}(\theta)
\cdot\mathds{1}_{I_m}(\theta) \right).
\end{align*}

\end{proof}

\begin{rem}
\label{rem:strategy}
The proof that 
the pair $(\widearc{\xi_{\bA}}, \widearc{\xi_{\bB}})$
associated with the
co-motion $m = (\xi_{\bA}, \xi_{\bB})$ in the example above 
is counter-clockwise tight, 
follows a pattern, 
supported by Observation~\ref{obs:tightness},
that we will 
use throughout
in our consideration of  convex and net counter-clockwise co-motions $m$, taking 
initial configuration
$\bA\bB_0$ to goal configuration $\bA\bB_1$.
Specifically, observe that, 
$r_{\widearc{\xi_{\bA}}}(\theta)$
is determined by either $A_0$ or $A_1$,
and $r_{\widearc{\xi_{\bB}}}(\pi+\theta)$ is determined by either $B_0$ or $B_1$ 
(and hence $r_{\widearc{\xi_{\bA}}}(\theta)
+r_{\widearc{\xi_{\bB}}}(\pi+\theta) = 
r_{\overline{A_0A_1}}(\theta) + r_{\overline{B_0B_1}}(\pi +\theta)$)
except for angles $\theta$ (all of which lie in the range $[\theta_0, \theta_1]$),
where $r_{\widearc{\xi_{\bA}}}(\theta)
+r_{\widearc{\xi_{\bB}}}(\pi+\theta) = r_{\bA\!+\!\bB}(\theta)$.
\end{rem}

\begin{obs}
\label{rem:reversal}
If the positions of $A_0$ and $A_1$ are interchanged in the co-motion $m$ of Figure~\ref{fig:introexample3}, 
there is a co-motion $m'$, specified using exactly the same steps as described in Claim~\ref{claim:special}, that has the identical  counter-clockwise tight trace (see Figure~\ref{fig:introexample4}). 
In general, this interchange of endpoints could result in a motion that is not collision-free, but whose trace is nevertheless counter-clockwise tight.
Fortunately, in such situations we can show that the globally optimal co-motion is net clockwise.
\end{obs}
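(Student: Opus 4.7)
My plan is to break the observation into three sub-claims and address them in turn.

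First I would show that the construction of Claim~\ref{claim:special}, when applied with the positions of $A_0$ and $A_1$ interchanged, produces a co-motion $m'$ whose trace coincides with that of $m$. The key point is that the two tangent lines used to define $A_\text{int}$ — the line through $A_0$ tangent to $(\bA\!+\!\bB)[B_0]$ and the line through $A_1$ tangent to $(\bA\!+\!\bB)[B_1]$ — are an unordered pair: swapping $A_0$ with $A_1$ merely relabels which of the two lines plays each role, so their intersection point $A_\text{int}$, and consequently the tangent points $b_0,b_1$ on $(\bA\!+\!\bB)[A_\text{int}]$, all coincide with the corresponding objects of the original construction (with the labels $0$ and $1$ swapped). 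Hence the trace of $m'$ is literally the same set of points as that of $m$. The counter-clockwise tightness of this trace, now interpreted with respect to the swapped endpoints $(\bA\bB_0,\bA\bB_1)$, then follows by exactly the argument in the proof of Claim~\ref{claim:special}: outside $\widehat{I_{m'}}$ the reaches $r_{\widearc{\xi'_{\bA}}}(\theta)$ and $r_{\widearc{\xi'_{\bB}}}(\pi+\theta)$ are realized by the new endpoints, and inside $\widehat{I_{m'}}$ they sum to $r_{\bA\!+\!\bB}(\theta)$.

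Next I would exhibit an example in which $m'$ fails to be collision-free. After the swap, $\bA$ first translates from the new $A_0$ (the old $A_1$) to $A_\text{int}$; if $\bB$ sits at $B_0$ throughout step~1, this segment may sweep across $(\bA\!+\!\bB)[B_0]$ even though the original segment from the old $A_0$ to $A_\text{int}$ did not. A small concrete configuration — for instance, placing the old $A_1$ on the far side of $(\bA\!+\!\bB)[B_0]$ from $A_\text{int}$ — suffices to demonstrate this. The trace is unchanged, so tightness is preserved, but the \emph{parameterization} dictated by steps 1–3 of Claim~\ref{claim:special} is no longer viable.

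The main obstacle is the final assertion: whenever $m'$ fails to be collision-free, the globally optimal co-motion is net clockwise. My plan is to combine the CCW-tightness of $m'$ with Lemma~\ref{lem:integral} to conclude that every collision-free net CCW co-motion from $\bA\bB_0$ to $\bA\bB_1$ has length at least $\ell(m')$, and then to argue that no collision-free net CCW co-motion of length exactly $\ell(m')$ can exist — because any such co-motion would have to realize the same reach functions $r_{\widearc{\xi_{\bA}}},r_{\widearc{\xi_{\bB}}}$, which forces its trace to coincide with that of $m'$ and inherits the same collision obstruction (any rescheduling of convex traces preserves the incidence structure that causes the collision). One would then pair this with an explicit net clockwise construction — using the $\flip$ symmetry of Observation~\ref{obs:flip} applied to a suitable companion configuration — whose length is strictly below $\ell(m')$ precisely in the obstructed regime. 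The delicate part is ruling out all reparameterizations and perturbations of the CCW-tight trace rather than just the specific three-step schedule of Claim~\ref{claim:special}; I would defer the full case analysis supporting this to Section~\ref{sec:modified}, where the corresponding clockwise constructions are classified.
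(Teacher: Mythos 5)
Your first sub-claim rests on the assertion that the tangent pair used to build $A_\text{int}$ is invariant under interchanging $A_0$ and $A_1$, but this is not literally true. Applying the steps of Claim~\ref{claim:special} verbatim after the swap uses the upper tangent from (new) $A_0$ to $(\bA\!+\!\bB)[B_0]$ and from (new) $A_1$ to $(\bA\!+\!\bB)[B_1]$, i.e.\ from the \emph{old} $A_1$ to $(\bA\!+\!\bB)[B_0]$ and from the \emph{old} $A_0$ to $(\bA\!+\!\bB)[B_1]$ --- a genuinely different pair of lines from the original $\{\text{tangent}(A_0,B_0),\text{tangent}(A_1,B_1)\}$, hence in general a different $A_\text{int}$. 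The trace coincides only if one instead draws the \emph{cross} tangents, from $A_0$ to $(\bA\!+\!\bB)[B_1]$ and $A_1$ to $(\bA\!+\!\bB)[B_0]$; this is precisely what the paper does in Section~4.2 and what underlies the ``right of both upper tangents'' case in Section~\ref{subsec:rightofboth}. So the phrase ``exactly the same steps'' must be read loosely, and your ``unordered pair'' justification needs to be repaired accordingly.

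On the main assertion, your step of ``arguing that no collision-free net CCW co-motion of length exactly $\ell(m')$ can exist'' is both unnecessary and a potential rabbit hole. The paper's argument is simpler and sidesteps it entirely: CCW-tightness of the trace, via Lemma~\ref{lem:integral}, already furnishes a lower bound $L$ on the length of \emph{every} collision-free net CCW co-motion, and one then exhibits a collision-free net clockwise co-motion $m''$ with $\ell(m'') < L$. Strict inequality at the clockwise witness is all that is needed; one never has to rule out reparameterizations or perturbations of the CCW-tight trace, nor establish that the bound is unattained. Finally, the mechanism for building the shorter clockwise motion is not the $\flip$ symmetry of Observation~\ref{obs:flip} (which is a reflection across the $x$-axis, used to reduce the global case analysis from CW to CCW). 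It is the central symmetry of $\bA\!+\!\bB$, invoked by reflecting the trace of $\xi_\bB$ through the midpoint of $\overline{B_0B_1}$, which preserves $\ell(\xi_\bB)$ while converting the motion into one that can be scheduled clockwise and collision-free; see the remark closing Section~4.2 and the constructions in Sections~\ref{subsec:rightofboth}ff. Your appeal to a ``suitable companion configuration'' and $\flip$ conflates these two distinct symmetries.
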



\subsection{Certifying non-optimality of certain counter-clockwise co-motions}

As we have noted, the proofs we use in our case analysis of counter-clockwise optimal co-motions will largely resemble the special case discussed in Remark~\ref{rem:strategy}.
Nevertheless, as we have also noted, for certain initial and goal configurations, the strategy succeeds only in identifying a co-motion $m$ that is not collision-free, but whose trace is net counter-clockwise tight.

It turns out that, in all such situations, we can use $m$ to construct a
collision-free \emph{clockwise} co-motion $m'$, 
whose length $\ell(m')$ satisfies $\ell(m') < \ell(m)$. 
It follows that the globally optimal co-motion must be realized by a clockwise co-motion in such situations.
 
\begin{figure}[ht]
\centering
\includegraphics
{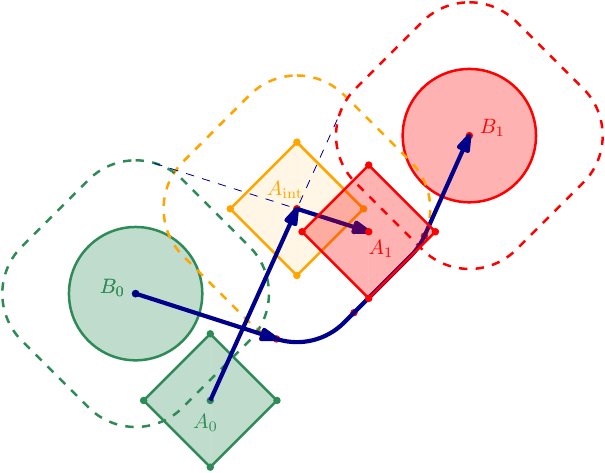}
\caption{The trace of counter-clockwise co-motion
that has no collision-free realization.}  \label{fig:introexample5}
\end{figure}

\begin{figure}[ht]
\centering
\includegraphics
{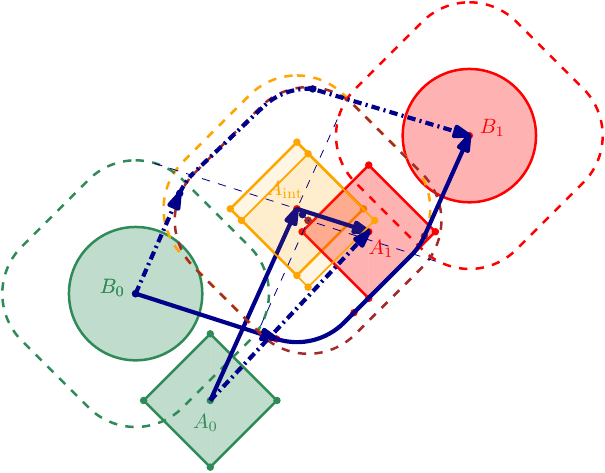}
\caption{The trace of a shorter collision-free clockwise co-motion (dashed)} \label{fig:introexample6}
\end{figure}

For example, consider the example illustrated in Figure~\ref{fig:introexample5}, a variant of the example in Figure~\ref{fig:introexample4} where the initial and goal $\bB$ placements are closer together. 
After drawing the upper tangents from $A_0$ to $(\bA\!+\!\bB)[B_1]$ and $A_1$ to $(\bA\!+\!\bB)[B_0]$, whose intersection determines the point $A_{\rm int}$, we have a co-motion $m$ (depicted in bold blue) whose 
trace is counter-clockwise tight 
(following the same argument used with the example in Figure~\ref{fig:introexample4}).
Unfortunately the co-motion $m$ is not collision-free; in fact, no co-motion 
that is trace-equivalent to $m$ 
is collision-free, since the motion of $\bA$ from $A_0$ to $A_{\rm int}$ intersects $(\bA\!+\!\bB)[B_0]$ and the motion of $\bB$ from $B_0$ to $B_1$ intersects $(\bA\!+\!\bB)[A_0]$. 

However, consider the clockwise co-motion $m'$ (dashed) illustrated in Figure~\ref{fig:introexample6}, where\\
(i) first $\bB$ moves from $B_0$ to $B_1$ following a path that is the reflection of the path followed by $\bB$ in co-motion $m$ (solid) through the midpoint of the line segment joining $B_0$ and $B_1$, then\\
(ii) $\bA$ moves from $A_0$ to $A_1$.\\
We note that $m'$ is collision-free, since $A_0$ does not intersect $(\bA\!+\!\bB)[B]$, for any point $B$ on the trace of $\bB$ from $B_0$ to $B_1$
(the tangent through $A_0$ to $(\bA\!+\!\bB)[B_0]$ and $(\bA\!+\!\bB)[B_1]$ serve as separators), and $B_1$ does not intersect $(\bA\!+\!\bB)[A]$, for any point $A$ on the trace of $\bA$ from $A_0$ to $A_1$ (the tangent from $B_1$ to $(\bA\!+\!\bB)[A_1]$ serves as a separator).
Since $m'$ is clearly shorter than $m$, we reach the following:

\begin{clm}
The optimal co-motion for the initial/goal configurations in Figure~\ref{fig:introexample6} is net clockwise. \label{claim:special-clockwise}
\end{clm}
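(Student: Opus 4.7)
The plan is to combine a lower bound on all collision-free net counter-clockwise co-motions (derived from the tightness of $m$) with an upper bound given by the explicit clockwise co-motion $m'$, and show the upper bound is strictly smaller. This immediately forces the global optimum to be net clockwise.

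First I would invoke Lemma~\ref{lem:integral}: any collision-free net counter-clockwise co-motion $\tilde{m}$ from $\bA\bB_0$ to $\bA\bB_1$ satisfies
\[
\ell(\tilde{m}) \geq \int_{S^1} \max\!\bigl(r_{\overline{A_0A_1}}(\theta) + r_{\overline{B_0B_1}}(\pi+\theta),\; r_{\bA\!+\!\bB}(\theta)\cdot\mathds{1}_{[\theta_0,\theta_1]}(\theta)\bigr)\,d\theta - |\overline{A_0A_1}| - |\overline{B_0B_1}|.
\]
The (non-collision-free) co-motion $m$ described by the bold blue trace in Figure~\ref{fig:introexample5} is convex and its pair $(\widearc{\xi_{\bA}}, \widearc{\xi_{\bB}})$ is counter-clockwise tight (verified by the same argument as in the proof of Claim~\ref{claim:special}, following the pattern of Remark~\ref{rem:strategy}). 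Using Corollary~\ref{cor:cauchyorig} and convexity exactly as in the chain of equalities in the proof of Lemma~\ref{lem:optkey}, the length $\ell(m)$ equals the integral on the right-hand side above. Hence $\ell(\tilde{m}) \geq \ell(m)$ for every collision-free net counter-clockwise $\tilde{m}$.

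Next I would verify that the candidate clockwise co-motion $m' = (\xi'_{\bA}, \xi'_{\bB})$ described in the paragraph preceding the claim is collision-free. For the phase in which $\bB$ moves along the reflection of its $m$-trace through the midpoint of $\overline{B_0B_1}$ while $\bA$ remains at $A_0$: the line through $A_0$ that is tangent to both $(\bA\!+\!\bB)[B_0]$ and $(\bA\!+\!\bB)[B_1]$ (whose existence is what makes $A_{\text{int}}$ well-defined on the other side) separates $A_0$ from the reflected trace of $\bB$, so $A_0 \notin (\bA\!+\!\bB)[B]$ for every $B$ on that trace. For the second phase in which $\bA$ translates in a straight line from $A_0$ to $A_1$ while $\bB$ sits at $B_1$: the tangent from $B_1$ to $(\bA\!+\!\bB)[A_1]$ separates $B_1$ from the segment $\overline{A_0A_1}$, giving collision-freeness of this phase by Observation~\ref{obs:Minkowski}.

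Finally I would compare lengths. Since reflection is an isometry, the reflected $\bB$-trace in $m'$ has the same length as the $\bB$-trace in $m$, so $\ell(\xi'_{\bB}) = \ell(\xi_{\bB})$. The $\bA$-motion in $m'$ is the straight segment $\overline{A_0A_1}$, whereas in $m$ it is the two-segment path $A_0 \to A_{\text{int}} \to A_1$; in the configuration of Figure~\ref{fig:introexample5} the point $A_{\text{int}}$ does not lie on $\overline{A_0A_1}$, so the strict triangle inequality gives $|\overline{A_0A_1}| < |\overline{A_0A_{\text{int}}}| + |\overline{A_{\text{int}}A_1}| = \ell(\xi_{\bA})$. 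Adding the two inequalities yields $\ell(m') < \ell(m)$. Combined with the lower bound from the first step, every collision-free net counter-clockwise co-motion is strictly longer than $m'$, so the globally optimal co-motion must be net clockwise. The main subtlety is the first step: $m$ itself is not collision-free, so its length is not a priori achievable, but tightness of its \emph{trace} is exactly what turns the Cauchy-formula calculation into a valid lower bound against all collision-free counter-clockwise competitors, which is the only place where we use the (otherwise inadmissible) co-motion $m$.
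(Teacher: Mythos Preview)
Your proposal is correct and follows essentially the same route as the paper: you establish that the counter-clockwise tight (but not collision-free) co-motion $m$ realizes the integral lower bound of Lemma~\ref{lem:integral}, verify that the reflected clockwise co-motion $m'$ is collision-free, and compare lengths via isometry of the reflection and the strict triangle inequality for the $\bA$-trace. One small imprecision: in your collision-freeness check for the first phase you speak of ``the line through $A_0$ that is tangent to both $(\bA\!+\!\bB)[B_0]$ and $(\bA\!+\!\bB)[B_1]$''; the paper instead uses the two separate tangents through $A_0$ (one to each placement) as separators, which is what is actually guaranteed to exist here.
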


The idea, embodied in the example above, of transforming a motion of $\bB$, from 
$\bB[B_0]$ to $\bB[B_1]$ to another motion with the same initial and goal placements, by reflection across the midpoint of the line segment joining $B_0$ and $B_1$, exploiting the symmetry of $\bA\!+\!\bB$, makes the comparison of co-motion lengths almost trivial.

\section{Optimal co-motions in standard form}
\label{sec:modified}

We now proceed to demonstrate the existence of optimal co-motions for arbitrary initial and goal configurations that take a simple standard form, using the ideas illustrated in the examples of the preceding section. We begin by dispensing with cases for which minimum length co-motions are readily apparent. 

\subsection{Straight-line co-motions}

\begin{defn}
Let $Z_0$ and $Z_1$ be arbitrary points in the plane, and let $\bZ$ denote an arbitrary centrally-symmetric robot. We denote by $\tcorr_{\bZ}(Z_0,Z_1)$ the \emph{$\bZ$-corridor} associated with placements $\bZ[Z_0]$ and $\bZ[Z_1]$, defined to be the Minkowski sum of the line segment $\overline{Z_0 Z_1}$ and $\bZ$. 
\end{defn}

As previously observed, if $\bA$ and $\bB$ are robots and $p$ is an arbitrary point, then $(\bA\!+\!\bB)[p]$
corresponds to the locations forbidden to the centre of $\bA$ in a viable placement with $\bB[p]$ and, equivalently, the locations forbidden to the centre of $\bB$ in a viable placement with $\bA[p]$  (see Figure~\ref{fig:Minkowski1}). 
Moving $\bA\!+\!\bB$ between two points $p$ and $q$ gives a range of forbidden locations whose union is exactly $\tcorr_{\bA\!+\!\bB}(p,q)$. 
When the robots have initial and goal configurations 
$(\bA[A_0], \bB[B_0])$
and $(\bA[A_1], \bB[B_1])$ respectively, the corridors $\tcorr_{\bA\!+\!\bB}(A_0,A_1)$ and $\tcorr_{\bA\!+\!\bB}(B_0,B_1)$ play a critical role in identifying initial and goal placement pairs for which  straight-line trajectories (which are clearly optimal) are possible. 

\begin{obs}
(i) Translating $\bA$ from placement $\bA[A_0]$ to placement $\bA[A_1]$ while keeping $\bB$ fixed in placement $\bB[B]$ is a 
    collision-free motion if and only if 
    $B \not\in \tcorr_{\bA\!+\!\bB}(A_0,A_1)$; and\\
    (ii)  Translating $\bB$ from placement $\bB[B_0]$ to placement $\bB[B_1]$ while keeping $\bA$ fixed in placement $\bA[A]$ is a 
    collision-free motion if and only if 
    $A \not\in \tcorr_{\bA\!+\!\bB}(B_0,B_1)$.
 \end{obs}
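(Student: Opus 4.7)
The plan is to reduce both parts to Observation~\ref{obs:Minkowski} via the definition of the corridor. For part (i), the motion consists of the family of configurations $(\bA[A], \bB[B])$ as $A$ ranges over the segment $\overline{A_0 A_1}$, so collision-freeness is equivalent to the assertion that $\bA[A] \cap \bB[B] = \emptyset$ for every $A \in \overline{A_0 A_1}$. By Observation~\ref{obs:Minkowski} (and central symmetry, which gives $(\bA\!+\!\bB)[A] = (\bA\!+\!\bB)[-A]$ translated appropriately), $\bA[A] \cap \bB[B] \neq \emptyset$ if and only if $B \in (\bA\!+\!\bB)[A]$. Hence the motion fails to be collision-free if and only if $B$ belongs to $(\bA\!+\!\bB)[A]$ for some $A \in \overline{A_0 A_1}$.

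The second step is simply to recognize that
\[
\bigcup_{A \in \overline{A_0 A_1}} (\bA\!+\!\bB)[A] \;=\; \overline{A_0 A_1} + (\bA\!+\!\bB) \;=\; \tcorr_{\bA\!+\!\bB}(A_0, A_1),
\]
by the definition of the Minkowski sum and of the corridor. Combining this with the previous paragraph, the straight-line motion is collision-free if and only if $B \notin \tcorr_{\bA\!+\!\bB}(A_0, A_1)$, which is exactly (i).

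Part (ii) is obtained by symmetric reasoning: the roles of $\bA$ and $\bB$ are exchanged, and one uses the characterization $\bA[A] \cap \bB[B] \neq \emptyset \iff A \in (\bA\!+\!\bB)[B]$ from Observation~\ref{obs:Minkowski}. The same union identity then yields the corridor $\tcorr_{\bA\!+\!\bB}(B_0, B_1)$.

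I do not foresee any substantive obstacle; the only subtlety is making the central-symmetry step explicit, so that the ``forbidden region for the centre of $\bA$'' and the ``forbidden region for the centre of $\bB$'' can both be written as translates of $\bA\!+\!\bB$, which is precisely the content of Observation~\ref{obs:Minkowski}. Once that symmetry is invoked, parts (i) and (ii) are formally dual.
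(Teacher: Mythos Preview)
Your proposal is correct and matches the reasoning the paper has in mind: the observation is stated without proof precisely because it is an immediate consequence of Observation~\ref{obs:Minkowski} together with the definition of $\tcorr_{\bA+\bB}$, and your two-step reduction (intersection criterion via Observation~\ref{obs:Minkowski}, then union over the segment equals the corridor) is exactly that. The only cosmetic point is that your parenthetical about central symmetry is phrased a bit loosely; the clean statement is that $A \in (\bA\!+\!\bB)[B] \iff A-B \in \bA\!+\!\bB \iff B-A \in \bA\!+\!\bB \iff B \in (\bA\!+\!\bB)[A]$, with the middle equivalence using central symmetry of $\bA\!+\!\bB$.
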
  

It follows immediately that if 
$A_0 \not\in \tcorr_{\bA\!+\!\bB}(B_0,B_1)$
and $B_1 \not\in \tcorr_{\bA\!+\!\bB}(A_0,A_1)$ then the straight-line co-motion that first takes $\bB$ from placement $\bB[B_0]$ to placement $\bB[B_1]$, and then takes $\bA$ from placement $\bA[A_0]$ to placement $\bA[A_1]$ is feasible.
Exchanging the roles of $\bA$ and $\bB$
gives another sufficient condition for the existence of a feasible straight-line co-motion. Not surprisingly, if neither of these conditions hold then a feasible straight-line co-motion is not possible, a conclusion that follows directly from our analysis of all other cases. 

\begin{rem}
Hereafter, we simplify our illustrations, depicting $\bA\!+\!\bB$ as a disc. This special case, arising when both $\bA$ and $\bB$ are discs, is designed to help the reader visualize the general case analysis and optimality arguments in their simplest setting.  This comes with no significant loss of generality since, as we detail in Section~\ref{sec:arbitraryrobots}, the only properties of discs that we use are convexity and central symmetry. 
\end{rem}

\subsection{A normal form for initial/goal robot configurations}

We will assume, with no loss of generality, that $B_0$ and $B_1$ are horizontally aligned, and that
$B_0$ lies to the left of $B_1$.
We distinguish \emph{wide} $B$-corridors, where $(\bA\!+\!\bB)[B_0]$ and $(\bA\!+\!\bB)[B_1]$ do not intersect, and \emph{narrow} $B$-corridors, where $(\bA\!+\!\bB)[B_0]$ and $(\bA\!+\!\bB)[B_1]$ intersect. 

\begin{figure}[ht]
\centering
\includegraphics[width=0.9\textwidth]{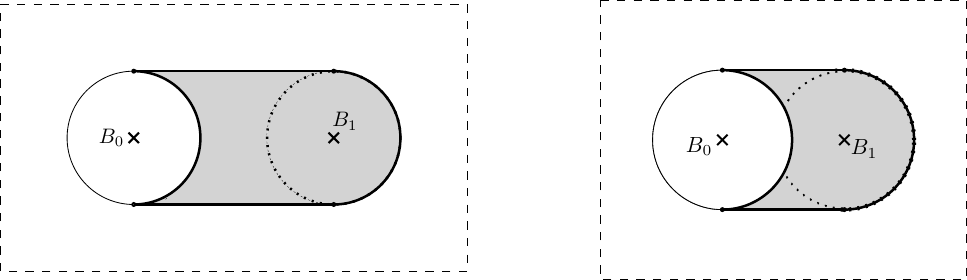}
\caption{Wide $B$-corridor (left) and narrow $B$-corridor (right). Without loss of generality $A_0 \in \tcorr(B_0,B_1) \setminus (\bA\!+\!\bB)[B_0]$ (shaded)}. \label{fig:A_0Zone}
\end{figure}

Since we will only have need to consider corridors of the form $\tcorr_{\bA\!+\!\bB}(p,q)$, we hereafter simplify notion by dropping the subscript on $\tcorr_{\bA\!+\!\bB}$. 
It is immediate from the preceding subsection that if 
(i) $A_0 \not \in \tcorr(B_0,B_1)$ and
$B_{1} \not \in \tcorr(A_0,A_1)$, or
(ii) $A_1 \not \in \tcorr(B_0,B_1)$ and
$B_{0} \not \in \tcorr(A_0,A_1)$
then a straight-line co-motion is possible. 
Hence, for the remainder of the paper, where we characterize optimal feasible co-motions when neither of these conditions hold,
we will assume, by relabelling robots and initial/goal configurations if necessary, that $A_0 \in \tcorr(B_0,B_1) \setminus (\bA\!+\!\bB)[B_0]$ and 
either (i) $A_1 \in \tcorr(B_0,B_1) \setminus (\bA\!+\!\bB)[B_1]$ or (ii) $B_0 \in \tcorr(A_0,A_1) \setminus (\bA\!+\!\bB)[A_0]$ (cf. Figure~\ref{fig:nontrivial}). 

\begin{obs} 
Note that, with these assumptions,
$\theta_0$, the orientation of $\bA\bB_0$, is in the range $(-\pi/2, \pi/2)$. Furthermore,
(i) if $A_1 \in \tcorr(B_0,B_1) \setminus (\bA\!+\!\bB)[B_1]$, then  $\theta_1$, the orientation of $\bA\bB_1$ is in the range $(\pi/2, -\pi/2)$; and
(ii) if $B_0 \in \tcorr(A_0,A_1) \setminus (\bA\!+\!\bB)[A_0]$, then $\theta_1$ corresponds to an angle
in the set
$\{ \phi_{\overrightarrow{A_0 Z}} \;|\; Z \in (\bA\!+\!\bB)[B_0] \}$
which contains $\theta_0$ in its complement.
Since the range $[\theta_0, \theta_1]$ does not contain $-\pi/2$, we refer to points on the boundary of $(\bA\!+\!\bB)[O]$ with outer normal in the range $[\theta_0, \theta_1]$ 
(respectively, $S^1 - [\theta_0, \theta_1]$, as the \emph{top} (respectively, \emph{bottom}) of  $(\bA\!+\!\bB)[O]$, with respect to the pair $(\bA\bB_0, \bA\bB_1)$.
\end{obs}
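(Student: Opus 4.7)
The observation splits into three principal claims about the orientations $\theta_0, \theta_1$, from which the closing remark about the range $[\theta_0, \theta_1]$ not containing $-\pi/2$ is a direct corollary. My plan is to reduce each claim to a statement about the position of a single external point relative to $(\bA+\bB)[O]$ and then invoke the supporting-hyperplane characterization of orientation from Observation~\ref{obs:Minkowski}: $\theta_i$ is the angle of the outward normal at the (unique) closest boundary point of $(\bA+\bB)[B_i]$ to $A_i$, equivalently the angle of the vector from that closest point to $A_i$.

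For $\theta_0 \in (-\pi/2, \pi/2)$, I translate by $-B_0$, so $A_0 - B_0$ plays the role of the external point, $(\bA+\bB)[O]$ is the Minkowski sum at the origin, and $B_1 - B_0 = (L, 0)$ with $L > 0$. The hypothesis $A_0 \in \tcorr(B_0, B_1)$ says $A_0 - B_0$ lies in the rightward sweep of $(\bA+\bB)[O]$ by $L$ units, while $A_0 \notin (\bA+\bB)[B_0]$ excludes the starting copy; on the (convex, non-empty) horizontal slice of $(\bA+\bB)[O]$ at height $y_0 = (A_0 - B_0)_y$, these two conditions together force the $x$-coordinate of $A_0 - B_0$ to be strictly greater than the rightmost slice coordinate $x^+_{y_0}$. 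Applying the supporting-hyperplane inequality at the closest boundary point (with outward normal $\hat\theta_0$) to the point $(x^+_{y_0}, y_0) \in (\bA+\bB)[O]$ yields $\cos\theta_0 \cdot \bigl((A_0 - B_0)_x - x^+_{y_0}\bigr) \ge 0$, whence $\cos\theta_0 \ge 0$. Claim (i) is the mirror image: translating by $-B_1$, the hypothesis now places $A_1 - B_1$ strictly to the \emph{left} of the corresponding horizontal slice, so $\cos\theta_1 \le 0$ and $\theta_1$ lies in the counter-clockwise range $(\pi/2, -\pi/2)$.

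Claim (ii) decomposes into a complement statement and an inclusion statement. The complement is direct: the hypothesis gives (by central symmetry) $A_0 \notin (\bA+\bB)[B_0]$, so the shadow cone $S = \{\phi_{\overrightarrow{A_0 Z}} : Z \in (\bA+\bB)[B_0]\}$ is the cone of directions from an external point to a convex body, with angular span strictly less than $\pi$; since $\theta_0 + \pi$ already lies in $S$ (being the direction from $A_0$ toward the closest boundary point of $(\bA+\bB)[B_0]$), $\theta_0$ itself cannot lie in $S$. The inclusion $\theta_1 \in S$ is the main obstacle. The hypothesis $B_0 \in \tcorr(A_0, A_1)$ is equivalent by central symmetry to $\overline{A_0 A_1}$ meeting $(\bA+\bB)[B_0]$, which immediately gives $\phi_{\overrightarrow{A_0 A_1}} \in S$; the hard part is transferring from $\phi_{\overrightarrow{A_0 A_1}}$ to the \emph{intrinsic} orientation $\theta_1$, which is determined by the $(A_1, B_1)$ configuration alone. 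I expect this transfer to hinge on the horizontal alignment of $B_0, B_1$ and the corridor constraint on $A_0$ working in tandem to force the closest point of $(\bA+\bB)[B_1]$ to $A_1$ to lie along a direction from $A_0$ that still pierces $(\bA+\bB)[B_0]$. Once the three claims are in hand, the closing statement that $-\pi/2 \notin [\theta_0, \theta_1]$ follows immediately: in each case the counter-clockwise arc from $\theta_0$ to $\theta_1$ stays in the open upper half of $S^1$, and the top/bottom terminology simply labels the corresponding portions of $\partial(\bA+\bB)[O]$ by their outward normals.
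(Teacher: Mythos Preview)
The paper provides no proof of this Observation; it is simply asserted. So there is nothing in the paper to compare against, and your proposal must stand on its own.

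Your reduction for $\theta_0 \in (-\pi/2,\pi/2)$ and for claim (i) is sound: translating to place $B_i$ at the origin, reading off the strict horizontal-slice inequality from the corridor hypothesis, and then applying the support inequality at the closest boundary point is exactly the right mechanism, and your argument that $\theta_0\notin S$ via $\theta_0+\pi\in \overline S$ and $\mathrm{diam}(S)<\pi$ is also correct.

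There is, however, a genuine gap in claim (ii). You yourself flag the inclusion $\theta_1\in S$ as ``the main obstacle'' and then write only that you ``expect this transfer to hinge on'' the horizontal alignment and the corridor constraint, without carrying out any argument. That is not a proof. The difficulty is real: $\theta_1$ is determined solely by $A_1-B_1$, whereas $S$ is determined solely by $A_0$ and $B_0$; the hypothesis $B_0\in\tcorr(A_0,A_1)$ gives you $\phi_{\overrightarrow{A_0A_1}}\in S$, but that angle is not $\theta_1$. To close the gap you must actually produce, for the specific direction $\theta_1$, a point $Z\in(\bA+\bB)[B_0]$ with $\phi_{\overrightarrow{A_0Z}}=\theta_1$, and this requires invoking \emph{both} corridor hypotheses together with the horizontal alignment of $B_0,B_1$ --- none of which you have done.

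Your closing sentence also skips a step. Even granting $\theta_0\notin S$ and $\theta_1\in S$, it does not follow ``immediately'' that the counter-clockwise arc $[\theta_0,\theta_1]$ avoids $-\pi/2$: you must first locate $S$ relative to $-\pi/2$ (e.g., observe that since $A_0-B_0$ lies strictly to the right of the slice of $(\bA+\bB)[O]$, the shadow cone $S$ is centred near $\pi$ with half-width strictly less than $\pi/2$, so $-\pi/2\notin\overline S$), and then argue that one can pass counter-clockwise from $\theta_0$ into $S$ without crossing $-\pi/2$.
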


\begin{figure}[ht]
\centering
\includegraphics[width=0.9\textwidth]{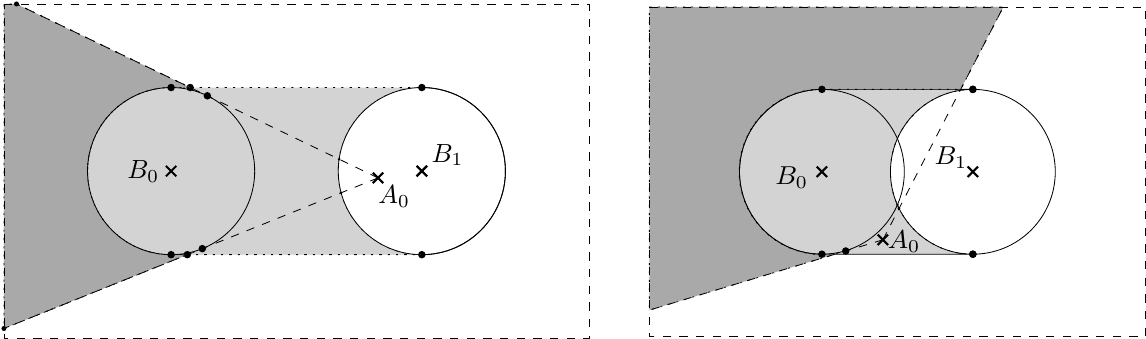}
\caption{$A_1 \in \tcorr(B_0,B_1) \setminus (\bA\!+\!\bB)[B_1]$ (light grey) or 
$B_0 \in \tcorr(A_0,A_1)$ (dark grey)}. \label{fig:nontrivial}
\end{figure}

At this point we begin to focus on counter-clockwise co-motions; the identical analysis applied to configurations formed by reflecting placements across the line through $B_0$ and $B_1$ treats the case of clockwise co-motions.

\subsection{Standard-form co-motions}
The counter-clockwise co-motions
considered so far, including straight-line co-motions, all have a three-step  (any of which may be degenerate) form that we refer to as \emph{standard-form} counter-clockwise co-motions:

\begin{enumerate}
\item Move $\bA$ from position $A_0$ to some intermediate position $A_\text{int}$, along a counter-clockwise oriented shortest path that avoids $\bB[B_0]$ (equivalently, move the centre of $\bA$ along along a counter-clockwise oriented shortest path, from $A_0$ to $A_\text{int}$, that avoids $(\bA\!+\!\bB)[B_0]$);
\item Move $\bB$ from position $B_0$ to position $B_1$, along a counter-clockwise oriented shortest path that avoids $\bA[A_\text{int}]$
(equivalently, move the centre of $\bB$ along along a counter-clockwise oriented shortest path, from $B_0$ to $B_1$, that avoids $(\bA\!+\!\bB)[A_\text{int}]$), and
\item Move $\bA$ from position $A_\text{int}$ to position $A_1$, along a counter-clockwise oriented shortest path that avoids $\bB[B_1]$
(equivalently, move the centre of $\bA$ along along a counter-clockwise oriented shortest path, from $A_\text{int}$ to $A_1$, that avoids $(\bA\!+\!\bB)[B_0]$).
\end{enumerate}

\noindent
\begin{obs}
By definition, standard-form co-motions are \emph{decoupled} and collision-free. 
In each step, one robot is stationary while the other undergoes a shortest motion whose trace 
is either (a) a straight line segment or (b) 
a convex curve consisting of three pieces (the first or last of which may be degenerate): (i) a straight segment, (ii) a counter-clockwise portion of the boundary of $\bA\!+\!\bB$ centred at the position of the robot placement being avoided, and (iii) another straight segment. 
Note that, the motion in each of the three steps is necessarily convex. 
Thus, the full standard-form counter-clockwise co-motion is convex if and only if the combined motion of $\bA$ (steps 1 and 3) is convex.
\end{obs}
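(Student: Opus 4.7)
The plan is to verify each assertion of the observation in turn, as they rest on rather different ingredients. The first two claims—that a standard-form co-motion is decoupled and collision-free—are essentially by construction. In each of the three steps exactly one robot translates while the other is held fixed, so decoupling is immediate; and each step explicitly follows a shortest path that avoids the Minkowski obstacle associated with the stationary robot, which is exactly the viability condition. The equivalent reformulations using $(\bA\!+\!\bB)[\cdot]$ stated in the definition are instances of Observation~\ref{obs:Minkowski}.

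For the characterization of a single step's trace, I would appeal to the standard characterization of shortest paths in the plane avoiding a single convex obstacle. When the centre of the moving robot must travel from $p$ to $q$ along a counter-clockwise oriented shortest path avoiding $(\bA\!+\!\bB)[c]$, the trace is either (a) the straight segment $\overline{pq}$, when this segment does not cross the interior of $(\bA\!+\!\bB)[c]$, or (b) the concatenation of an initial tangent from $p$ to a support point on the obstacle, a counter-clockwise boundary arc of $(\bA\!+\!\bB)[c]$, and a final tangent from a second support point to $q$. The initial or terminal tangent degenerates to a single point precisely when $p$ or $q$ already lies on $\partial(\bA\!+\!\bB)[c]$.

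For the convexity of each step, I would invoke the definition of a convex motion directly. In case (a) the trace coincides with its chord, so inequality~(\ref{eq:inequal}) is trivially tight. In case (b) the three trace pieces together with the chord $\overline{pq}$ bound a convex region: the arc piece lies on the boundary of the convex set $(\bA\!+\!\bB)[c]$, the tangent segments meet this arc along support lines, and the chord lies on the opposite side. Hence $\ell(\widearc{\xi}) = \ell(\xi) + |\overline{pq}|$ and the motion is convex as required.

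Finally, for the last line, the motion of $\bB$ in step 2 is a single convex motion, whose trace with $\overline{B_0 B_1}$ already bounds a convex region, contributing nothing conditional to overall convexity. The motion of $\bA$ however is assembled from two sub-motions (steps 1 and 3) that share only the point $A_{\rm int}$, and the convexity of the combined $\bA$-motion is exactly the requirement that the union of the two sub-traces together with $\overline{A_0 A_1}$ bounds a convex region. Thus convexity of the full standard-form co-motion reduces to convexity of $\bA$'s combined motion. The main subtlety I anticipate is ensuring that the tangent directions at $A_{\rm int}$, inherited from the two distinct Minkowski obstacles $(\bA\!+\!\bB)[B_0]$ and $(\bA\!+\!\bB)[B_1]$, are consistent with convexity at the junction; this is not automatic and is precisely where the subsequent case analysis in Section~\ref{sec:modified} must do work.
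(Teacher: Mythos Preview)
Your proposal is correct and essentially unpacks what the paper leaves as an immediate observation without proof; the paper treats each assertion as evident from the definition of standard-form co-motions and the standard tangent-arc-tangent structure of shortest paths around a convex obstacle, and your justification follows exactly that route. Your closing remark about the tangent directions at $A_{\rm int}$ being the non-automatic part is spot on and correctly anticipates where the case analysis in Section~\ref{sec:modified} is needed.
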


It turns out that in looking for globally optimal co-motions we can restrict attention to those in standard-form:

\begin{lem}\label{lem:standard}
For any given initial and goal configurations, there is a standard-form counter-clockwise co-motion $m^* = (\xi^*_{\bA}, \xi^*_{\bB})$ for which the
pair $(\widearc{\xi^*_{\bA}}, \widearc{\xi^*_{\bB}})$ 
is counter-clockwise tight.   
\end{lem}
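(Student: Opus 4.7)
The plan is, for any given initial and goal configurations in the normal form established earlier in Section~\ref{sec:modified}, to construct explicitly a standard-form counter-clockwise co-motion $m^* = (\xi^*_{\bA}, \xi^*_{\bB})$ by choosing a suitable intermediate $\bA$-position $A_\text{int}$, following the template of Claim~\ref{claim:special}, and then to verify that the associated convex hull pair $(\widearc{\xi^*_{\bA}}, \widearc{\xi^*_{\bB}})$ satisfies the tightness identity (\ref{eqn:CCtight}).

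First I would define $A_\text{int}$ by the tangent-line rule illustrated in Section~\ref{sec:special}: take the upper tangent line from $A_0$ to $(\bA\!+\!\bB)[B_0]$, and, in case~(i) of the normal form ($A_1 \in \tcorr(B_0, B_1)\setminus(\bA\!+\!\bB)[B_1]$), the upper tangent line from $A_1$ to $(\bA\!+\!\bB)[B_1]$, letting $A_\text{int}$ be their intersection; in case~(ii), where $B_0 \in \tcorr(A_0, A_1) \setminus (\bA\!+\!\bB)[A_0]$, an analogous construction with a tangent supported by $(\bA\!+\!\bB)[B_0]$ and a suitable tangent at $A_1$ plays the symmetric role. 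With $A_\text{int}$ fixed, the three standard-form steps are determined: $\xi^*_{\bA}$ is the two-segment polygonal path $A_0 \to A_\text{int} \to A_1$, each segment tangent to the corresponding obstacle $(\bA\!+\!\bB)[B_i]$; and $\xi^*_{\bB}$ consists of a straight tangent from $B_0$ to a tangent point $b_0$ on $(\bA\!+\!\bB)[A_\text{int}]$, a counter-clockwise arc from $b_0$ to a second tangent point $b_1$, and a straight tangent from $b_1$ to $B_1$. Convexity of the combined motion of $\bA$ follows because both tangent segments lie on the boundary of the convex hull of $\{A_0, A_1\} \cup (\bA\!+\!\bB)[A_\text{int}]$.

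To verify tightness I would follow the pattern of Remark~\ref{rem:strategy} and Observation~\ref{obs:tightness}. Let $\widehat{I}$ denote the range of orientations swept by the boundary arc in $\xi^*_{\bB}$; by the tangency construction, $\widehat{I}$ is exactly the interval of outward normals between $b_0$ and $b_1$ on $(\bA\!+\!\bB)[A_\text{int}]$, and one argues from the placement of $A_\text{int}$ above the line $\overline{B_0 B_1}$ together with the definitions of $\theta_0, \theta_1$ that $\widehat{I} \subseteq [\theta_0, \theta_1]$. For $\theta \in \widehat{I}$, the support of $\widearc{\xi^*_{\bA}}$ is at $A_\text{int}$ and the support of $\widearc{\xi^*_{\bB}}$ at angle $\pi+\theta$ is the diametrically opposite point on the arc, so by Observation~\ref{obs:Minkowski} their sum equals $r_{\bA\!+\!\bB}(\theta)$. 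For $\theta \notin \widehat{I}$, tangency ensures the supports shift to endpoints, so $r_{\widearc{\xi^*_{\bA}}}(\theta) + r_{\widearc{\xi^*_{\bB}}}(\pi+\theta) = r_{\overline{A_0 A_1}}(\theta) + r_{\overline{B_0 B_1}}(\pi+\theta)$. Combining the two regimes yields exactly (\ref{eqn:CCtight}).

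The main obstacle will be the casework imposed by the two normal-form sub-cases and the degenerate positions of $A_\text{int}$ --- when $A_\text{int}$ coincides with $A_0$ or $A_1$, when one of the tangent segments is empty, or when the nominal intersection of tangent lines places $A_\text{int}$ inside one of the Minkowski obstacles. In these boundary situations the construction must be modified (for example, by letting $\bB$ trace a longer arc around a fixed $A_0$ or $A_1$, still in standard form) while ensuring that the partition of $S^1$ into $\widehat{I}$ and its complement continues to align with the interval $[\theta_0, \theta_1]$. Setting up this case split and confirming that the tangency-based bookkeeping on where supports transition between endpoints and arc still yields the required tight hull pair is where the bulk of the effort lies.
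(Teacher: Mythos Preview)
Your overall approach matches the paper's: choose an intermediate point $A_\text{int}$, build the three-step standard-form co-motion, and verify tightness by the support-point bookkeeping of Remark~\ref{rem:strategy}. That is exactly what Section~\ref{sec:standardproof} does.

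The gap is that your single ``tangent-line rule'' for $A_\text{int}$ --- intersection of the upper tangent from $A_0$ to $(\bA\!+\!\bB)[B_0]$ with the upper tangent from $A_1$ to $(\bA\!+\!\bB)[B_1]$ --- is not correct across all positions of $A_1$, and the exceptions are not merely degenerate. The paper's case analysis splits according to where $A_1$ lies relative to the \emph{two upper tangents through $A_0$} (to $(\bA\!+\!\bB)[B_0]$ and to $(\bA\!+\!\bB)[B_1]$), and the resulting choice of $A_\text{int}$ changes qualitatively: when $A_1$ lies above both tangents one takes $A_\text{int}=A_1$; when below both, $A_\text{int}=A_0$; when $A_1$ lies \emph{to the right} of both, the relevant intersection is of the \emph{swapped} pair of tangents (through $A_1$ to $(\bA\!+\!\bB)[B_0]$ and through $A_0$ to $(\bA\!+\!\bB)[B_1]$); and when $A_1$ lies between doubly-intersecting tangents, both tangents are taken through $A_0$. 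In the last two situations the resulting $\xi^*_\bA$ is \emph{not} convex, yet the hull pair is still tight --- so your convexity claim fails there, though convexity is not actually required by the lemma. Your framing of the remaining work as handling ``degenerate positions of $A_\text{int}$'' understates this: these are distinct open regions of configuration space, each requiring its own choice of tangent pair, and in some of them your proposed intersection point would give a hull pair that is not tight (the support at $A_\text{int}$ would exceed $r_{\bA\!+\!\bB}(\theta)$ over an interval not contained in $[\theta_0,\theta_1]$). A minor further issue: the steps of a standard-form co-motion are shortest paths avoiding an obstacle, so $\xi^*_\bA$ is in general not polygonal --- it may contain arcs of $\partial(\bA\!+\!\bB)[B_i]$, and the tightness argument must account for those arcs as well.
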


It follows from Lemma~\ref{lem:optkey} that in those cases where the standard-form co-motion identified in Lemma~\ref{lem:standard} is convex, the co-motion is counter-clockwise optimal. 
In those cases where the identified co-motion is not convex we will see that any globally optimal co-motion must be clockwise optimal. 
Taken together, along with Observation~\ref{obs:flip}, this provides a constructive proof of the following more precise expression of Theorem~\ref{thm:mainthm}:

\begin{thm}\label{thm:standardoptimality}
  For any given initial and goal configurations, there is a globally optimal co-motion $m^* = (\xi^*_\bA, \xi^*_\bB)$ of standard form. 
  The length of $m^*$, and hence any globally optimal motion, satisfies
  $\ell(m^*) = \ell(\widearc{tr(\xi^*_\bA) - tr(\xi^*_\bB)}) - |\overline{A_0A_1}| - |\overline{B_0B_1}|$.
\end{thm}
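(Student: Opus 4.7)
The plan is to combine Lemma~\ref{lem:standard} with its clockwise analogue, Lemma~\ref{lem:optkey}, and an auxiliary argument in the spirit of Claim~\ref{claim:special-clockwise}, in order to exhibit, in every case, a globally optimal co-motion of standard form, and then to read off its length via Cauchy.

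First, I would invoke Lemma~\ref{lem:standard} on $(\bA\bB_0, \bA\bB_1)$ to produce a standard-form counter-clockwise co-motion $m^*_{ccw} = (\xi^*_\bA, \xi^*_\bB)$ whose trace pair is counter-clockwise tight, and then use Observation~\ref{obs:flip} (apply Lemma~\ref{lem:standard} to the reflected pair $(\flip(\bA\bB_0), \flip(\bA\bB_1))$ and reflect back) to obtain a standard-form clockwise co-motion $m^*_{cw}$ whose trace pair is clockwise tight. By Observation~\ref{monotone}, every collision-free co-motion is either net counter-clockwise or net clockwise, so it suffices to identify the shorter of a counter-clockwise optimum and a clockwise optimum and show that it is standard-form.

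Second, I would split on convexity. If $m^*_{ccw}$ is convex then, being collision-free by the standard-form construction, Lemma~\ref{lem:optkey} immediately certifies it as counter-clockwise optimal; the symmetric statement holds for $m^*_{cw}$. When both are convex, the shorter of the two is globally optimal and of standard form, as required. The delicate case, and the step I expect to be the main obstacle, is when one of the standard-form candidates produced by Lemma~\ref{lem:standard} is not convex: then the associated $\bA$-trace doubles back through $A_{\mathrm{int}}$ and the candidate may even fail to be collision-free (as in Figure~\ref{fig:introexample5}). For this regime I would follow the template of Claim~\ref{claim:special-clockwise} and the midpoint-reflection construction at the end of Section~\ref{sec:mainprf}: reflect the central phase of the opposing motion across the midpoint of the appropriate segment, exploit the central symmetry of $\bA\!+\!\bB$ to certify collision-freeness using the tangent lines as separators, and verify that the resulting trace is strictly shorter than the non-convex candidate. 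This forces any globally optimal co-motion to be net clockwise in that subcase (and symmetrically on the other side), so that applying the convex arm of the case analysis to $m^*_{cw}$ still yields a standard-form global optimum. The genuine work here is ensuring, uniformly across the case analysis foreshadowed in Section~\ref{sec:modified} (wide versus narrow $B$-corridor, and the two possible positions of $A_1$ and $B_0$ relative to the corridors), that at least one of $m^*_{ccw}$, $m^*_{cw}$ is convex so the dichotomy never collapses.

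Finally, the length formula falls out of the machinery already in place. Since the globally optimal $m^* = (\xi^*_\bA, \xi^*_\bB)$ exhibited above is standard-form and convex, Observation~\ref{obs:convex} applied with equality to each of $\xi^*_\bA$ and $\xi^*_\bB$ gives
\begin{equation*}
\ell(m^*) = \ell(\widearc{\xi^*_\bA}) + \ell(\widearc{\xi^*_\bB}) - |\overline{A_0A_1}| - |\overline{B_0B_1}|,
\end{equation*}
and Observation~\ref{obs:Cauchy2} collapses the sum $\ell(\widearc{\xi^*_\bA}) + \ell(\widearc{\xi^*_\bB})$ to $\ell(\widearc{tr(\xi^*_\bA) - tr(\xi^*_\bB)})$, yielding the stated identity.
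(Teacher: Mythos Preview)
Your proposal is correct and follows essentially the same approach as the paper: invoke Lemma~\ref{lem:standard} (and its flipped analogue via Observation~\ref{obs:flip}) to get tight standard-form candidates, apply Lemma~\ref{lem:optkey} when convex, and use the reflection argument to rule out the non-convex side; then derive the length formula from convexity plus Observation~\ref{obs:Cauchy2}. The one small step you elide, which the paper makes explicit, is that Observation~\ref{obs:Cauchy2} literally yields $\ell(\widearc{\widearc{\xi^*_\bA} - \widearc{\xi^*_\bB}})$ rather than $\ell(\widearc{tr(\xi^*_\bA) - tr(\xi^*_\bB)})$, and one must observe that the extreme points of $\widearc{\xi^*_\bA} - \widearc{\xi^*_\bB}$ coincide with those of $tr(\xi^*_\bA) - tr(\xi^*_\bB)$ to pass between the two.
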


\begin{proof}
We use, as well, Observation~\ref{obs:Cauchy2} and
the fact that the extreme points of $\widearc{\xi_\bA} - \widearc{\xi_\bB}$ coincide with those of $tr(\xi_\bA) -tr(\xi_\bB)$.   
\end{proof}

\section{The proof of Lemma~\ref{lem:standard}}\label{sec:standardproof}

In this section we present a proof of Lemma~\ref{lem:standard} through an exhaustive treatment of the different cases 
that arise from the placement of $A_0$ and $A_1$.  
For both types of $B$-corridor, 
we look at various regions that $A_1$ could occupy, beginning with cases where $A_1 \notin \tcorr(B_0,B_1)$.

Different cases are characterized by the the position of $A_1$ relative
to upper tangents through $A_0$ to $(\bA\!+\!\bB)[B_0]$ and $(\bA\!+\!\bB)[B_1]$ (cf. Figure~\ref{fig:uppertangents}).
Note that when $A_0 \in (\bA\!+\!\bB)[B_1]$, what we refer to as the upper tangent through $A_0$ to 
$(\bA\!+\!\bB)[B_1]$ first follows the upper tangent through $A_0$ to $(\bA\!+\!\bB)[B_0]$ until it intersects with the boundary of $(\bA\!+\!\bB)[B_1]$
(see the rightmost illustrations in Figure~\ref{fig:uppertangents}).
In general the shaded regions describe possible positions for $A_1$ that give rise to structurally identical co-motions.

\begin{figure}[h!]
\centering
\includegraphics[width=0.85\textwidth]{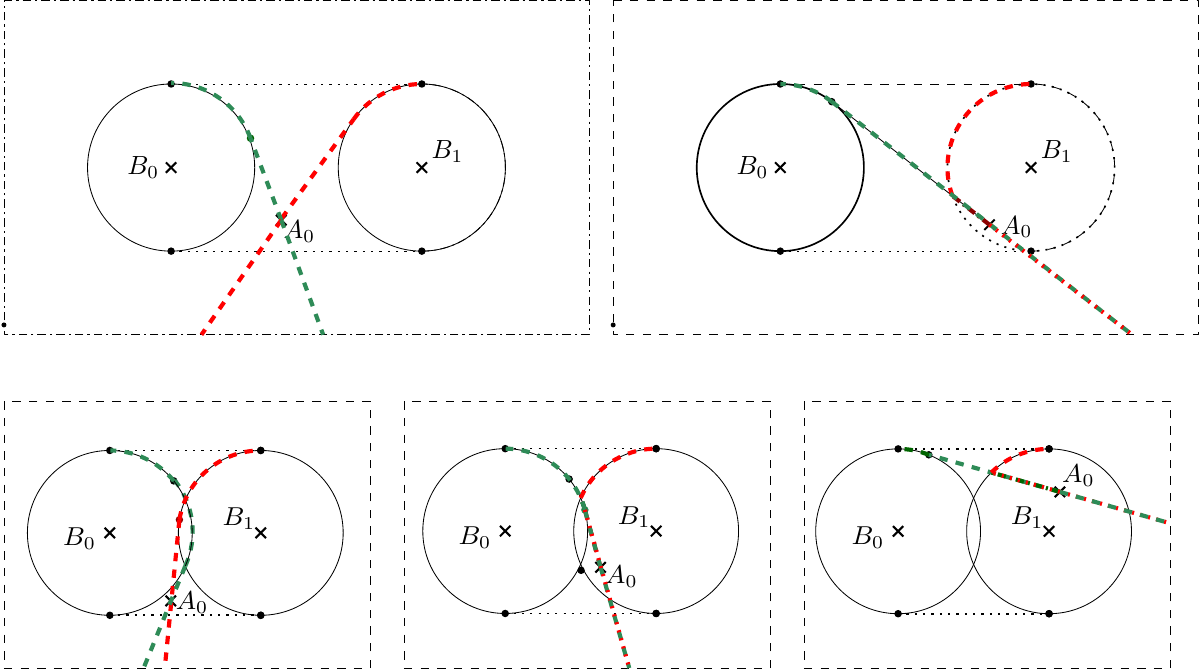}
\caption{Upper tangent through $A_0$ to $(\bA\!+\!\bB)[B_0]$ (green) and $(\bA\!+\!\bB)[B_1]$ (red).}
\label{fig:uppertangents}
\end{figure}

Unless otherwise noted, the standard co-motions that we specify have the property that the orientation of configurations changes monotonically, from the initial to the goal configuration.
Thus the counter-clockwise tightness of the trace of such standard co-motions $m = (\xi_{\bA}, \xi_{\bB})$ will follow from the observation that 
$r_{\widearc{\xi_{\bA}}}(\theta)$
is determined by either $A_0$ or $A_1$,
and
$r_{\widearc{\xi_{\bB}}}(\pi+\theta)$ is determined by either $B_0$ or $B_1$,
except for angles $\theta$ 
where either:\\
(i) $r_{\widearc{\xi_{\bB}}}(\pi +\theta)$ is determined by $B_0$ and 
$r_{\widearc{\xi_{\bA}}}(\theta)$ is determined by a point on the trace of $\xi_{\bA}$ between $A_0$ and $A_\text{int}$ that lies on the boundary of $(\bA\!+\!\bB)[B_0]$; or\\
(ii) $r_{\widearc{\xi_{\bA}}}(\theta)$ is determined by $A_\text{int}$ and 
$r_{\widearc{\xi_{\bB}}}(\pi +\theta)$ is determined by a point on the trace of $\xi_{\bB}$ between $B_0$ and $B_1$ that lies on the boundary of $(\bA\!+\!\bB)[A_\text{int}]$; or\\
(iii) $r_{\widearc{\xi_{\bB}}}(\pi +\theta)$ is determined by $B_1$ and 
$r_{\widearc{\xi_{\bA}}}(\theta)$ is determined by a point on the trace of $\xi_{\bA}$ between $A_\text{int}$ and $A_1$ that lies on the boundary of $(\bA\!+\!\bB)[B_0]$.

Given this, it suffices to specify the location of the transition point $A_\text{int}$ for all possible initial/goal configurations in our normal form.

\subsection{Cases with 
$B_0 \in \tcorr(A_0,A_1)$}

\begin{figure}[ht]
\centering
\includegraphics[width=0.9\textwidth]{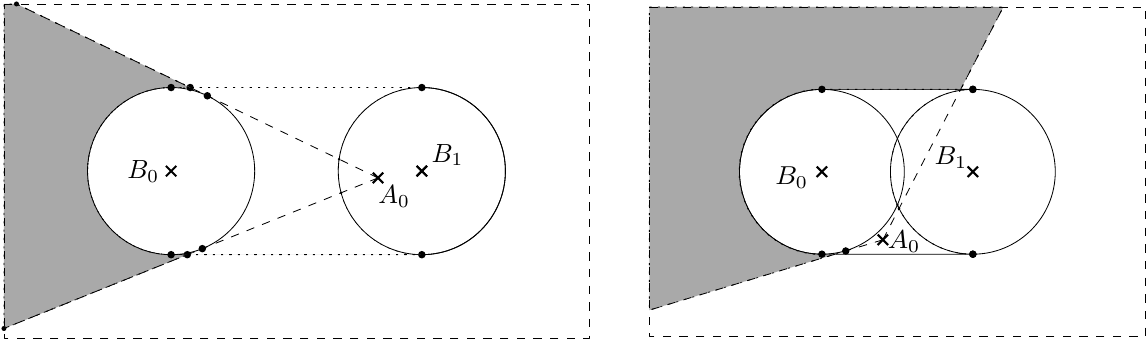}
\caption{
$B_0 \in \tcorr(A_0,A_1)$}. \label{fig:nontrivial2}
\end{figure}
\subsubsection{$A_1$ lies above the line forming the upper boundary of $\tcorr(B_0,B_1)$}

\begin{figure}[h!]
\centering
\includegraphics[width=0.9\textwidth]{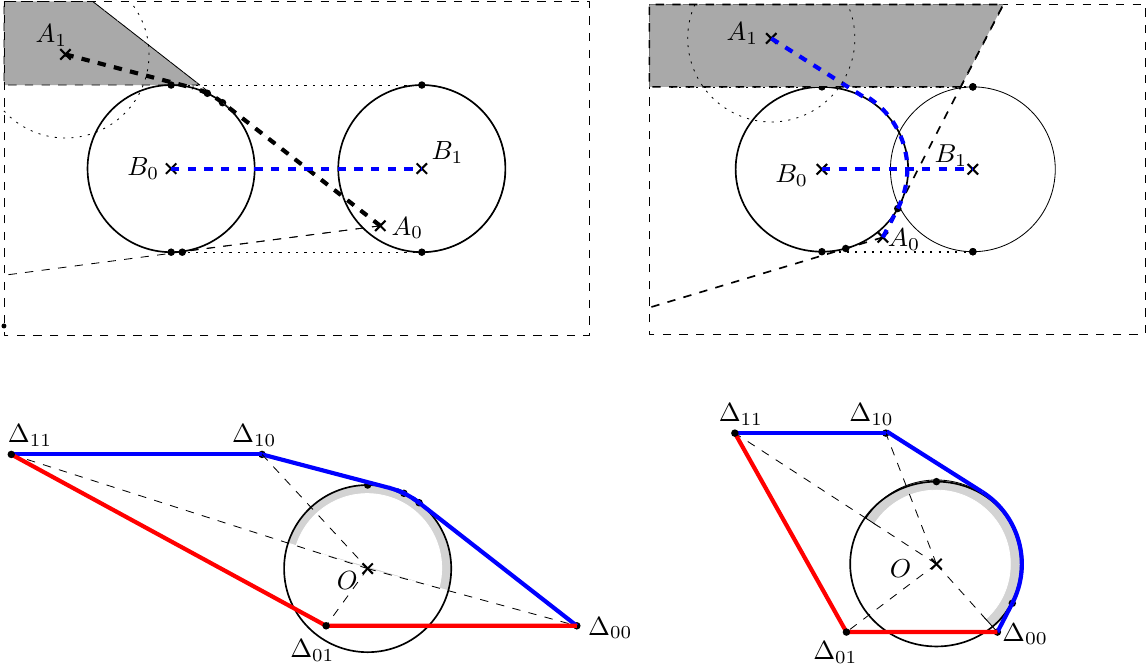}
\caption{(top) $A_1 \notin \tcorr(B_0,B_1)$: case I. (bottom) the convex hull of $(\widearc{\xi_{\bA}} -\widearc{\xi_{\bB}})$}. \label{fig:NearTrivial}
\end{figure}

In this case, illustrated in Figure~\ref{fig:NearTrivial} (top),
we take $A_\text{int}$ to be the point $A_1$. The standard-form co-motion
 $m = (\xi_{\bA}, \xi_{\bB})$ (i) first moves the centre of $\bA$ along the shortest path from $A_0$ to $A_1$ avoiding $(\bA\!+\!\bB)[B_0]$, and (ii) then moves the centre of $\bB$ along the straight path from $B_0$ to $B_1$.

Note that, except for those angles $\theta$ where 
$r_{\widearc{\xi_{\bA}}}(\theta)$ is determined by a point where the trace of $\xi_{\bA}$ coincides with a point on the boundary of $(\bA\!+\!\bB)[B_0]$,
$r_{\widearc{\xi_{\bA}}}(\theta)$ is determined by either $A_0$ or $A_1$. 
Furthermore, for all angles $\theta$,
$r_{\widearc{\xi_{\bB}}}(\pi +\theta)$
is determined by either $B_0$ or $B_1$, and in particular for those angles $\theta$ where 
$r_{\widearc{\xi_{\bA}}}(\theta)$ is determined by a point where the trace of $\xi_{\bA}$ coincides with a point on the boundary of $(\bA\!+\!\bB)[B_0]$,
$r_{\widearc{\xi_{\bB}}}(\pi +\theta)$
is determined by $B_0$. 
It follows immediately that
the pair $(\widearc{\xi_{\bA}}, \widearc{\xi_{\bB}})$ is 
counter-clockwise tight, and hence, by Lemma~\ref{lem:optkey}, $m$ is counter-clockwise optimal.

\begin{obs}
Let $\Delta_{ij}$ denote $A_i-B_j$,
and let $H^{\top}_{\bA\bB_0, \bA\bB_1}$ 
(resp. $H^{\bot}_{\bA\bB_0, \bA\bB_1}$) 
denote the convex hull of the set of points
$\{\Delta_{ij} \;|\; i,j\in \{0.1\} \}$ together with the points forming the top (resp. bottom) of $(\bA\!+\!\bB)[O]$.
It is worth noting here that, in this case,\\
(i) the convex hull of the set $(\widearc{\xi_{\bA}} -\widearc{\xi_{\bB}})$
(shown with coloured boundary in Figure~\ref{fig:NearTrivial} (bottom)
coincides with $H^{\top}_{\bA\bB_0, \bA\bB_1}$; and\\ 
(ii) the path traced by the motion $\xi_{\bA-\bB}$ (defined as $\xi_\bA(t) - \xi_\bB(t))$, shown in blue in Figure~\ref{fig:NearTrivial} (bottom), together with the edges 
$\overline{\Delta_{01} \Delta_{11}}$
and $\overline{\Delta_{00} \Delta_{01}}$
(translates of edges $\overline{A_0A_1}$ and $\overline{B_0B_1}$, shown in red), forms the boundary of $H^{\top}_{\bA\bB_0, \bA\bB_1}$.\\
A similar observation holds for all counter-clockwise optimal paths described in the following subsections, raising the question: do these properties serve as a characterization of \emph{all} counter-clockwise optimal paths? We return to this question in Section~\ref{sec:uniqueness}. 
\end{obs}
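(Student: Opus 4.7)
To prove (i), the plan is to reduce the set equality to a direct comparison of support functions. By the standard identity $\mathrm{conv}(X-Y) = \mathrm{conv}(X) + (-\mathrm{conv}(Y))$ for arbitrary planar sets $X,Y$ (a convex region, since Minkowski sums of convex sets are convex), we obtain
\[
\mathrm{conv}(\widearc{\xi_{\bA}} - \widearc{\xi_{\bB}})
 \;=\; \mathrm{conv}(tr(\xi_{\bA})) - \mathrm{conv}(tr(\xi_{\bB})),
\]
and two convex regions coincide iff their support functions agree. By the additivity of the support function under Minkowski sums and Observation~\ref{obs:oppositereach}, the support function of the left-hand side in direction $\theta$ is $r_{\widearc{\xi_{\bA}}}(\theta) + r_{\widearc{\xi_{\bB}}}(\pi+\theta)$, which the counter-clockwise tightness condition~(\ref{eqn:CCtight}) established in the preceding paragraph of this subsection rewrites as $\max\bigl(r_{\overline{A_0A_1}}(\theta) + r_{\overline{B_0B_1}}(\pi+\theta),\; r_{\bA\!+\!\bB}(\theta)\cdot \mathds{1}_{[\theta_0,\theta_1]}(\theta)\bigr)$. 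The support function of $H^{\top}_{\bA\bB_0,\bA\bB_1}$ is the maximum of the reach over the four parallelogram vertices $\{\Delta_{ij}\}$---which coincides with $r_{\overline{A_0A_1}-\overline{B_0B_1}}(\theta) = r_{\overline{A_0A_1}}(\theta) + r_{\overline{B_0B_1}}(\pi+\theta)$, since $\overline{A_0A_1}-\overline{B_0B_1}$ is precisely that parallelogram---and the reach over the top arc of $(\bA\!+\!\bB)[O]$, which equals $r_{\bA\!+\!\bB}(\theta)$ exactly on $[\theta_0,\theta_1]$ by definition of the top. Matching the two expressions in each case yields~(i).

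For (ii), I would unwind $\xi_{\bA-\bB}(t) = \xi_{\bA}(t) - \xi_{\bB}(t)$ phase by phase. In phase one, $\bB$ rests at $B_0$ and $\bA$ follows the shortest counter-clockwise path from $A_0$ to $A_{\mathrm{int}}=A_1$ avoiding $(\bA\!+\!\bB)[B_0]$; hence $\xi_{\bA-\bB}$ traces the same curve shifted by $-B_0$, running from $\Delta_{00}$ to $\Delta_{10}$, and any arc along $\partial(\bA\!+\!\bB)[B_0]$ becomes an arc of $\partial(\bA\!+\!\bB)[O]$ with outward normals in $[\theta_0,\theta_1]$. In phase two, $\bA$ rests at $A_1$ while $\bB$ translates straight from $B_0$ to $B_1$; hence $\xi_{\bA-\bB}$ traces the straight segment from $\Delta_{10}$ to $\Delta_{11}$ (a translate of $-\overline{B_0 B_1}$). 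Concatenation yields exactly the portion of $\partial H^{\top}$ that carries the top arc in (i); the boundary is then closed by $\overline{\Delta_{11}\Delta_{01}}$ (a translate of $\overline{A_0A_1}$, since $\Delta_{11}-\Delta_{01}=A_1-A_0$) and $\overline{\Delta_{01}\Delta_{00}}$ (a translate of $\overline{B_0B_1}$, since $\Delta_{00}-\Delta_{01}=B_1-B_0$), as asserted.

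The main obstacle is verifying, in the first paragraph, that the support function of $H^{\top}$ actually matches the tight formula for all directions $\theta \notin [\theta_0,\theta_1]$, i.e., that the top arc's reach is dominated there by the parallelogram's reach. This is not an immediate consequence of tightness; instead it uses Observation~\ref{obs:Minkowski} to identify $\theta_i$ as the direction of the outward normal to $(\bA\!+\!\bB)[O]$ through $\Delta_{ii}$, so that the tangent line to the top arc at its $\theta_i$-endpoint $p_i$ passes through $\Delta_{ii}$, and the edge from $\Delta_{ii}$ to $p_i$ lies on $\partial H^{\top}$. A continuity argument then shows that as $\theta$ exits $[\theta_0,\theta_1]$ the support of $H^{\top}$ transitions from the arc to these tangent edges and onto the parallelogram vertices, never returning to the arc's interior, which secures the matching of support functions required for~(i) and the boundary decomposition claimed in~(ii).
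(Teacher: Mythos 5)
Your high-level plan is the right one and mirrors what the paper does later in the proof of Lemma~\ref{lem:character}: reduce part~(i) to the support-function identity
$r_{H^{\top}_{\bA\bB_0,\bA\bB_1}}(\theta) = \max\bigl(r_{\overline{A_0A_1}}(\theta)+r_{\overline{B_0B_1}}(\pi+\theta),\ r_{\bA\!+\!\bB}(\theta)\cdot\mathds{1}_{[\theta_0,\theta_1]}(\theta)\bigr)$, and obtain part~(ii) by tracking $\xi_{\bA-\bB}$ phase by phase. You also correctly put your finger on the nontrivial step, namely that the top arc's reach must be dominated outside $[\theta_0,\theta_1]$ (a fact the paper itself only asserts, inside the word ``Hence'', in the proof of Lemma~\ref{lem:character}).

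The argument you give for that step is, however, geometrically incorrect. By Observation~\ref{obs:Minkowski}, $\Delta_{ii}$ lies in the \emph{outward normal direction} $\theta_i$ from the endpoint $p_i$; the tangent line to $\partial(\bA\!+\!\bB)[O]$ at $p_i$ is perpendicular to that direction and therefore does \emph{not} pass through $\Delta_{ii}$ (it is the normal line at $p_i$ that does). Consequently the segment $\overline{\Delta_{ii}\,p_i}$ is in general not an edge of $\partial H^{\top}$: a supporting line of $H^{\top}$ containing that segment would have to have the whole arc and all four $\Delta_{jk}$ on one side of the normal line through $p_i$, and this fails generically. Indeed, in the very case of the Observation, the curved piece of $\partial H^{\top}$ is the image of the sub-arc of $\partial(\bA\!+\!\bB)[B_0]$ traced by $\bA$ between the tangent points $a_0$ and $a_1$; its normals form a \emph{proper} sub-interval of $[\theta_0,\theta_1]$, and the endpoints $p_0,p_1$ lie strictly \emph{inside} $H^{\top}$. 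The boundary near the arc is a tangent segment from a $\Delta_{jk}$ to a tangent point with normal strictly interior to $(\theta_0,\theta_1)$, not the segment $\overline{\Delta_{ii}\,p_i}$. The ``continuity argument'' that you build on the false tangent-line claim therefore does not go through, and a correct proof of the domination needs a different geometric argument (e.g.\ comparing $r_{p_0}(\theta)$ with $r_{\Delta_{00}}(\theta)$ and $r_{\Delta_{11}}(\theta)$ over the two sub-ranges of $S^1\setminus[\theta_0,\theta_1]$ determined by the switch of the arc's support point between $p_0$ and $p_1$, using the normal-form constraints on $\theta_0,\theta_1$). Part~(ii) of your write-up is essentially fine, modulo the same imprecision that the arc carried by $\xi_{\bA-\bB}$ occupies a sub-interval, not all of $[\theta_0,\theta_1]$.
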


\subsubsection{$A_1$ lies 
below the upper tangent through $A_0$ to $(\bA\!+\!\bB)[B_1]$}

\begin{figure}[ht]
\centering
\includegraphics[width=0.9\textwidth]{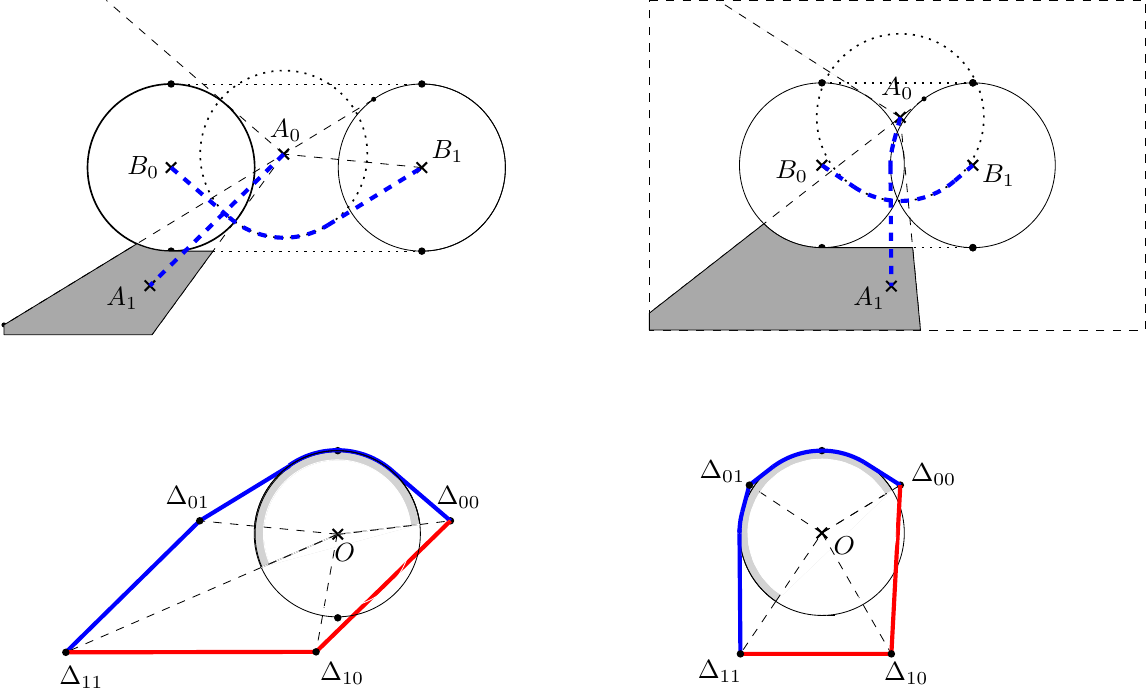}
\caption{(top) $A_1 \notin \tcorr(B_0,B_1)$:case II; (bottom) the convex hull of $(\widearc{\xi_{\bA}} -\widearc{\xi_{\bB}})$.}. \label{fig:NearTrivial4}
\end{figure}

In this case, illustrated in Figure~\ref{fig:NearTrivial4}, consider the 
standard-form co-motion $m = (\xi_{\bA}, \xi_{\bB})$ with point $A_0$ chosen as $A_\text{int}$.  Under $m$, first (i) the centre of $\bB$ moves along the shortest path from $B_0$ to $B_1$ avoiding $(\bA\!+\!\bB)[A_0]$, and then (ii) the 
centre of $\bA$ moves along the shortest path from $A_0$ to $A_1$ avoiding $(\bA\!+\!\bB)[B_1]$.

Here we note that, 
except for those angles $\theta$
where $r_{\widearc{\xi_{\bB}}}(\pi+\theta)$ is determined by a point where the trace of $\xi_{\bB}$ coincides with a point on the boundary of $(\bA\!+\!\bB)[A_0]$
(and $r_{\widearc{\xi_{\bA}}}(\theta)$ is determined by $A_0$),
$r_{\widearc{\xi_{\bB}}}(\pi +\theta)$
is determined by either $B_0$ or $B_1$.
Furthermore, except for those angles $\theta$ where 
$r_{\widearc{\xi_{\bA}}}(\theta)$ is determined by a point where the trace of $\xi_{\bA}$ coincides with a point on the boundary of $(\bA\!+\!\bB)[B_1]$
(and $r_{\widearc{\xi_{\bB}}}(\pi+\theta)$ is determined by $B_1$),
$r_{\widearc{\xi_{\bA}}}(\theta)$ is determined by either $A_0$ or $A_1$. 
It follows immediately that
the pair $(\widearc{\xi_{\bA}}, \widearc{\xi_{\bB}})$ is 
counter-clockwise tight, and hence, by Lemma~\ref{lem:optkey}, $m$ is counter-clockwise optimal.

\subsubsection{$A_1$ lies below the line forming the upper boundary of $\tcorr(B_0,B_1)$
but above the upper tangent through $A_0$ to $(\bA\!+\!\bB)[B_1]$}\label{subsub:caseIII}

\begin{figure}[ht]
\centering
\includegraphics[width=0.9\textwidth]{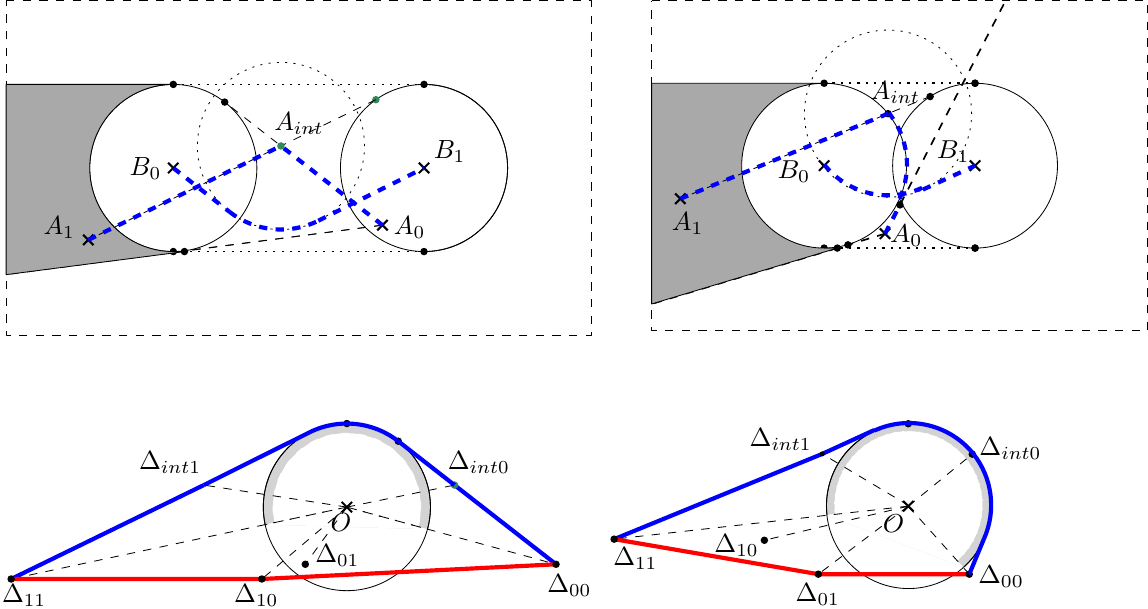}
\caption{(top) $A_1 \notin \tcorr(B_0,B_1)$: case III;
(bottom) the convex hull of $(\widearc{\xi_{\bA}} -\widearc{\xi_{\bB}})$.}. \label{fig:NearTrivial3}
\end{figure}

In this case, illustrated in Figure~\ref{fig:NearTrivial3}, let $p$ denote the point where the upper tangent through $A_0$ to $(\bA\!+\!\bB)[B_0]$
intersects the upper tangent through $A_1$ to $(\bA\!+\!\bB)[B_1]$.
Consider the 
standard-form co-motion $m = (\xi_{\bA}, \xi_{\bB})$ with point $p$ chosen as the location of $A_\text{int}$.
Under $m$, first (i) the centre of $\bA$ moves along the shortest path from $A_0$ to $p$ avoiding $(\bA\!+\!\bB)[B_0]$, then 
(ii) the centre of $\bB$ moves along the shortest path from $B_0$ to $B_1$ avoiding $(\bA\!+\!\bB)[p]$, and finally 
(iii) the centre of $\bA$ moves along the shortest path from $p$ to $A_1$ avoiding $(\bA\!+\!\bB)[B_1]$.

Here we note that
$r_{\widearc{\xi_{\bA}}}(\theta)$ is determined by either $A_0$ or $A_1$
and $r_{\widearc{\xi_{\bB}}}(\pi+\theta)$ is determined by either $B_0$ or $B_1$, except for those angles $\theta$ where 
(i) (a) $r_{\widearc{\xi_{\bA}}}(\theta)$ is determined by a point where the trace of $\xi_{\bA}$ coincides with a point on the boundary of $(\bA\!+\!\bB)[B_0]$, and 
(b) $r_{\widearc{\xi_{\bB}}}(\pi +\theta)$ is determined by $B_0$; or\\
(ii) (a) $r_{\widearc{\xi_{\bA}}}(\theta)$ is determined by the point $A_\text{int}$, and 
(b) $r_{\widearc{\xi_{\bB}}}(\pi +\theta)$ is determined by a point where the trace of $\xi_{\bB}$ coincides with a point on the boundary of $(\bA\!+\!\bB)[A_\text{int}]$
or\\
(iii) (a) $r_{\widearc{\xi_{\bA}}}(\theta)$ is determined by a point where the trace of $\xi_{\bA}$ coincides with a point on the boundary of $(\bA\!+\!\bB)[B_1]$, and 
(b) $r_{\widearc{\xi_{\bB}}}(\pi +\theta)$ is determined by $B_1$.

Points (i) and (iii) follow by the same arguments used in the preceding cases. For point (ii), we use the fact that 
(a) the tangent from $A_\text{int}$ to $(\bA\!+\!\bB)[B_0]$ coincides with the tangent from $A_0$ to $(\bA\!+\!\bB)[B_0]$, and  has the same slope as the tangent from $B_0$ to $(\bA\!+\!\bB)[A_\text{int}]$,
and (b) the tangent from $A_\text{int}$ to $(\bA\!+\!\bB)[B_1]$ coincides with the tangent from $A_1$ to $(\bA\!+\!\bB)[B_1]$, and has the same slope as the tangent from $B_1$ to $(\bA\!+\!\bB)[A_\text{int}]$.
It follows immediately that
the pair $(\widearc{\xi_{\bA}}, \widearc{\xi_{\bB}})$ is 
counter-clockwise tight, and hence, by Lemma~\ref{lem:optkey}, $m$ is counter-clockwise optimal.

\subsection{Cases with $A_1 \in \tcorr(B_0,B_1)$}

\begin{figure}[ht]
\centering
\includegraphics[width=0.9\textwidth]{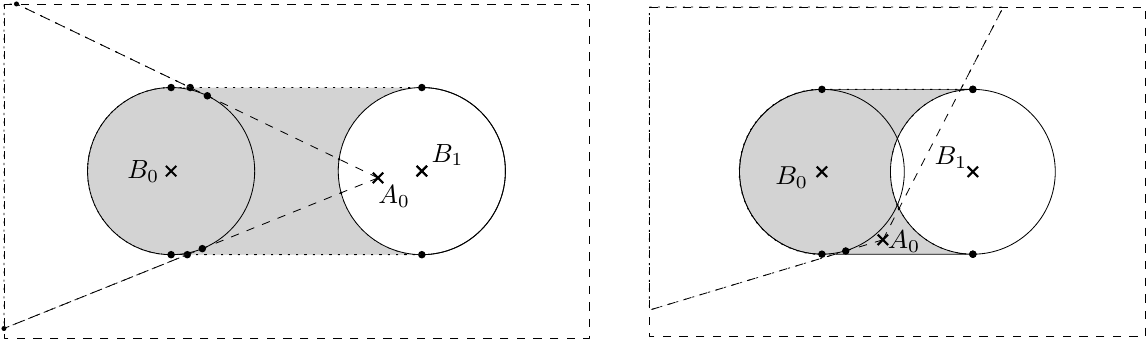}
\caption{$A_1 \in \tcorr(B_0,B_1) \setminus (\bA\!+\!\bB)[B_1]$ }. \label{fig:nontrivial3}
\end{figure}

\subsubsection{$A_1$ lies above both upper tangents through $A_0$}

\begin{figure}[h!]
\centering
\includegraphics[width=0.9\textwidth]{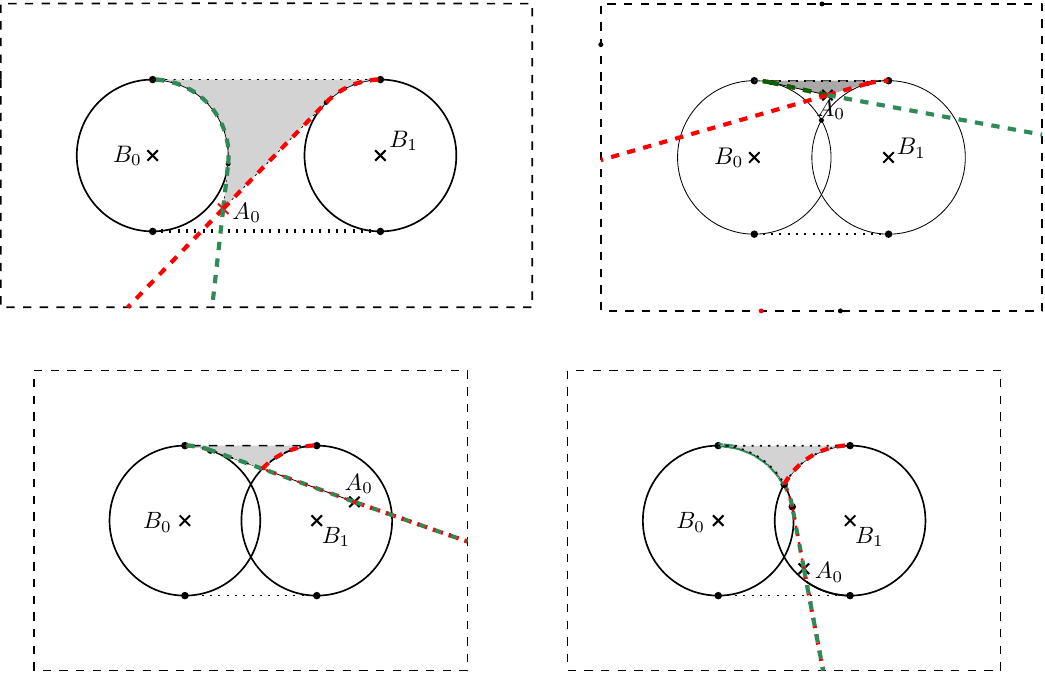}
\caption{$A_1$ lies above both upper tangents through $A_0$; $A_\text{int} = A_1$}
\label{fig:Above both}
\end{figure}

\begin{figure}[h!]
\centering
\includegraphics[width=0.9\textwidth]{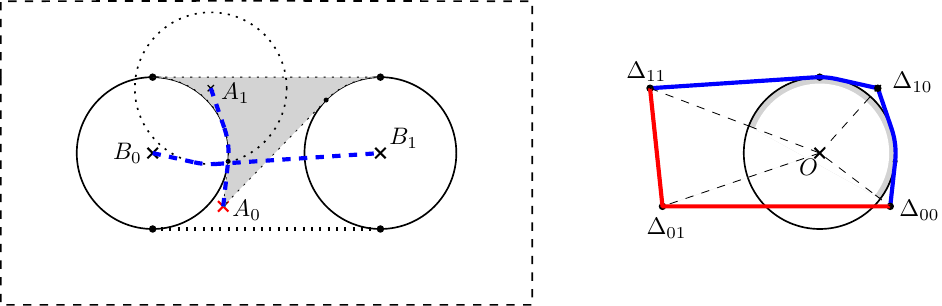}
\caption{(left) A typical co-motion when $A_1$ lies above both upper tangents through $A_0$;
(right) the convex hull of $(\widearc{\xi_{\bA}} -\widearc{\xi_{\bB}})$ in this case.}
\label{fig:Above both2}
\end{figure}

In this case, illustrated in Figure~\ref{fig:Above both}, consider the 
standard-form co-motion $m = (\xi_{\bA}, \xi_{\bB})$ with point $A_1$ chosen as the location of $A_\text{int}$.
Note that
$r_{\widearc{\xi_{\bA}}}(\theta)$
is determined by either $A_0$ or $A_1$,
and
$r_{\widearc{\xi_{\bB}}}(\pi+\theta)$ is determined by either $B_0$ or $B_1$,
except for angles $\theta$ (all of which lie in the range $[\theta_0, \theta_1]$,
where either:\\
(i) $r_{\widearc{\xi_{\bB}}}(\pi +\theta)$ is determined by $B_0$ and 
$r_{\widearc{\xi_{\bA}}}(\theta)$ is determined by a point on the trace of $\xi_{\bA}$ between $A_0$ and $A_1$ that lies on the boundary of $(\bA\!+\!\bB)[B_0]$; or\\
(ii) $r_{\widearc{\xi_{\bA}}}(\theta)$ is determined by $A_1$ and 
$r_{\widearc{\xi_{\bB}}}(\pi +\theta)$ is determined by a point on the trace of $\xi_{\bB}$ between $B_0$ and $B_1$ that lies on the boundary of $(\bA\!+\!\bB)[A_1]$.

For point (ii) we use the fact that 
the tangent from $A_1$ to $(\bA\!+\!\bB)[B_0]$ has the same slope as the tangent from $B_0$ to $(\bA\!+\!\bB)[A_1]$,
and the tangent from $A_1$ to $(\bA\!+\!\bB)[B_1]$ has the same slope as the tangent from $B_1$ to $(\bA\!+\!\bB)[A_1$. 
It follows immediately that 
the pair $(\widearc{\xi_{\bA}}, \widearc{\xi_{\bB}})$ is 
counter-clockwise tight and hence is counter-clockwise optimal, by Lemma~\ref{lem:optkey}.

\subsubsection{$A_1$ lies below both upper tangents through $A_0$}

\begin{figure}[h!]
\centering
\includegraphics[width=0.9\textwidth]{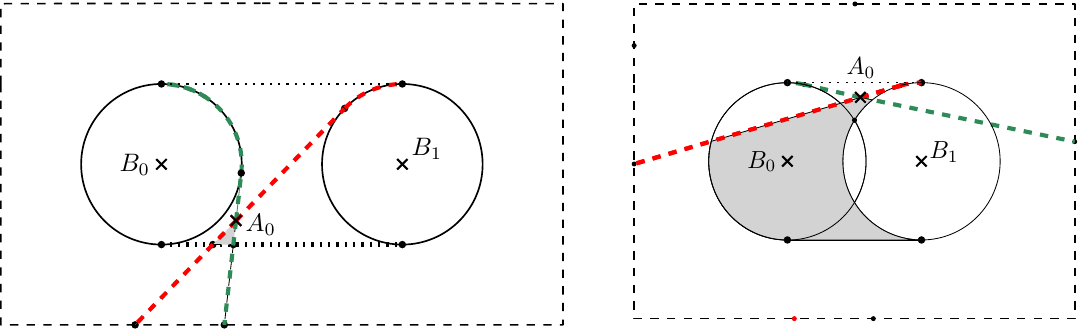}
\caption{$A_1$ lies below both upper tangents through $A_0$; $A_\text{int} = A_0$}
\label{fig:Below both}
\end{figure}

This case is illustrated in Figure~\ref{fig:Below both}. Since $A_0$ lies above both upper tangent through $A_1$, this case reduces to the preceding case, by interchanging the roles of $A_0$ and $A_1$, as well as $B_0$ and $B_1$ (noting, of course, that the reversal of any collision-free co-motion from $\bA\bB_1$ to $\bA\bB_0$ is a collision-free co-motion from  $\bA\bB_0$ to $\bA\bB_1$).
Accordingly we choose point $A_0$ as the location of $A_\text{int}$ 
in the standard-form co-motion, and the counter-clockwise optimality of this co-motion follows directly from the argument in the preceding case.

\subsubsection{$A_1$ lies left of both upper tangents through $A_0$}\label{subsubsec:leftboth}

\begin{figure}[h!]
\centering
\includegraphics[width=0.9\textwidth]{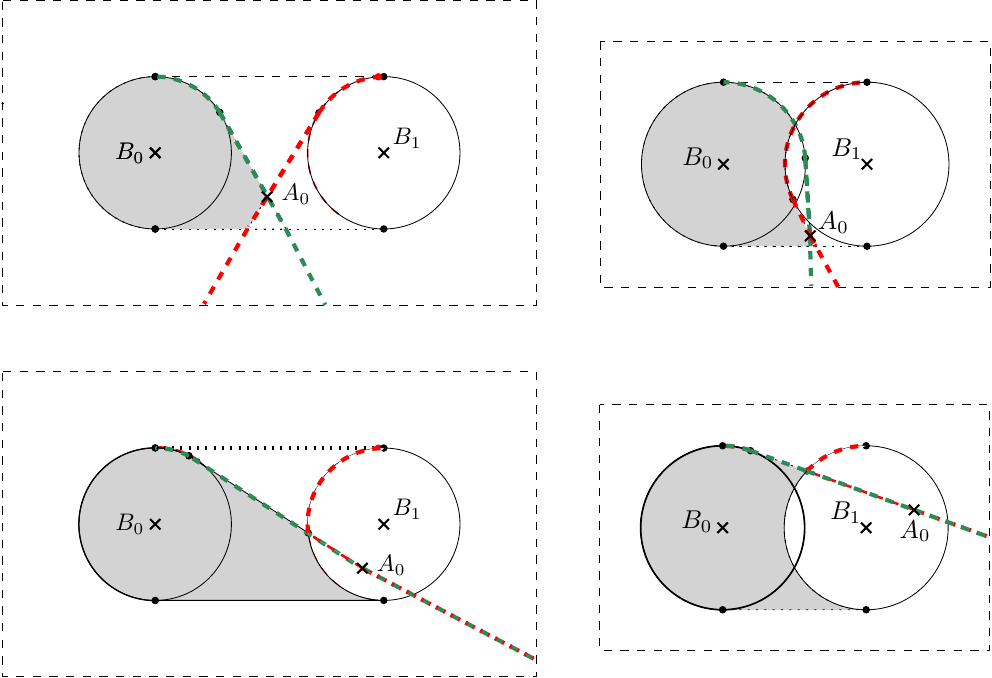}
\caption{$A_1$ lies left of both upper tangents through $A_0$; $A_\text{int}$ is the point where the tangent through $A_0$ to $(\bA\!+\!\bB)[B_0]$ crosses the tangent through $A_1$ to $(\bA\!+\!\bB)[B_1]$}
\label{fig:A1left}
\end{figure}

\begin{figure}[h!]
\centering
\includegraphics[width=.95\textwidth]{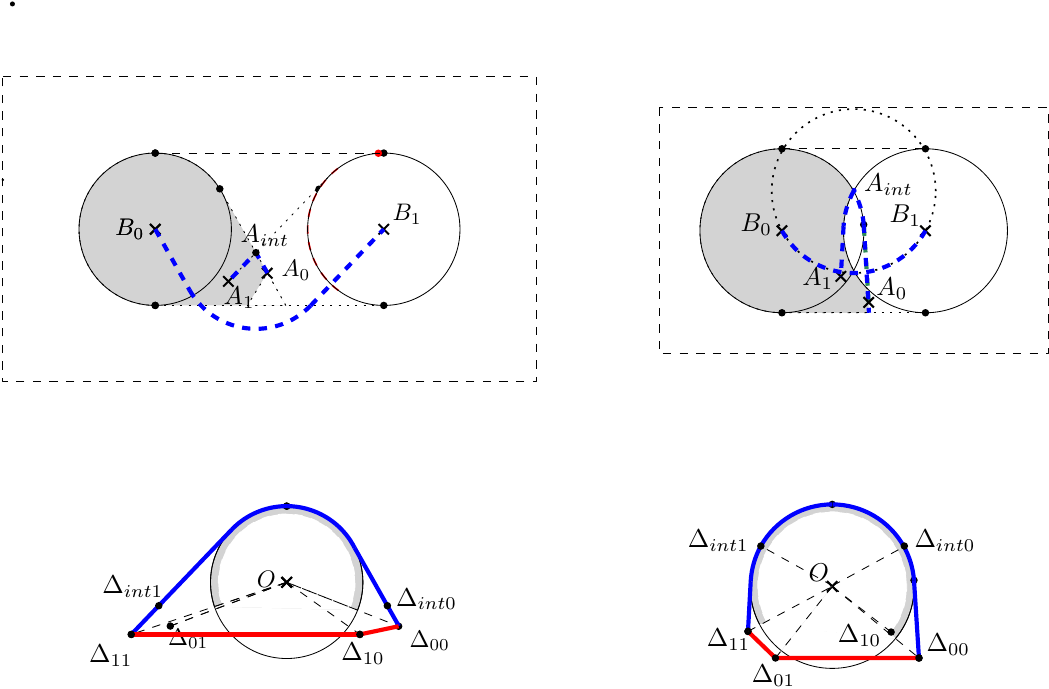}
\caption{(top) typical co-motions when $A_1$ lies left of both upper tangents through $A_0$; 
(bottom) the convex hull of $(\widearc{\xi_{\bA}} -\widearc{\xi_{\bB}})$ in these cases.}
\label{fig:A1left2}
\end{figure}

In this case, illustrated in Figure~\ref{fig:A1left}, consider the 
standard-form co-motion $m = (\xi_{\bA}, \xi_{\bB})$ with $A_\text{int}$ located at the point $p$, where the upper tangent through $A_1$ to $(\bA\!+\!\bB)[B_1]$ intersects 
the upper tangent through $A_0$ to $(\bA\!+\!\bB)[B_0]$.
The convexity of the motion of $\bA$ is apparent from inspection of the various sub-cases (shown in Figure~\ref{fig:A1left}---wide corridor (left) and narrow corridor (right), $A_0 \notin (\bA\!+\!\bB)[B_1]$ (top) and $A_0 \in (\bA\!+\!\bB)[B_1]$ (bottom)---
that determine the location of the intersection point $p$.

As we saw in the case analysed in Section~\ref{subsub:caseIII},
$r_{\widearc{\xi_{\bA}}}(\theta)$ is determined by either $A_0$ or $A_1$
and $r_{\widearc{\xi_{\bB}}}(\pi+\theta)$ is determined by either $B_0$ or $B_1$, except for those angles $\theta$ where 
(i) (a) $r_{\widearc{\xi_{\bA}}}(\theta)$ is determined by a point where the trace of $\xi_{\bA}$ coincides with a point on the boundary of $(\bA\!+\!\bB)[B_0]$, and 
(b) $r_{\widearc{\xi_{\bB}}}(\pi +\theta)$ is determined by $B_0$; or\\
(ii) (a) $r_{\widearc{\xi_{\bA}}}(\theta)$ is determined by the point $A_\text{int}$, and 
(b) $r_{\widearc{\xi_{\bB}}}(\pi +\theta)$ is determined by a point where the trace of $\xi_{\bB}$ coincides with a point on the boundary of $(\bA\!+\!\bB)[A_\text{int}]$
or\\
(iii) (a) $r_{\widearc{\xi_{\bA}}}(\theta)$ is determined by a point where the trace of $\xi_{\bA}$ coincides with a point on the boundary of $(\bA\!+\!\bB)[B_1]$, and 
(b) $r_{\widearc{\xi_{\bB}}}(\pi +\theta)$ is determined by $B_1$.
The analysis here follows in exactly the same way as that presented in Section~\ref{subsub:caseIII}

By Observation~\ref{obs:tightness}, it follows that 
the pair $(\widearc{\xi_{\bA}}, \widearc{\xi_{\bB}})$ is 
counter-clockwise tight,
and hence is counter-clockwise optimal, by Lemma~\ref{lem:optkey}.

\subsubsection{$A_1$ lies right of both upper tangents through $A_0$}\label{subsec:rightofboth}

\begin{figure}[h!]
\centering
\includegraphics[width=1.0\textwidth]{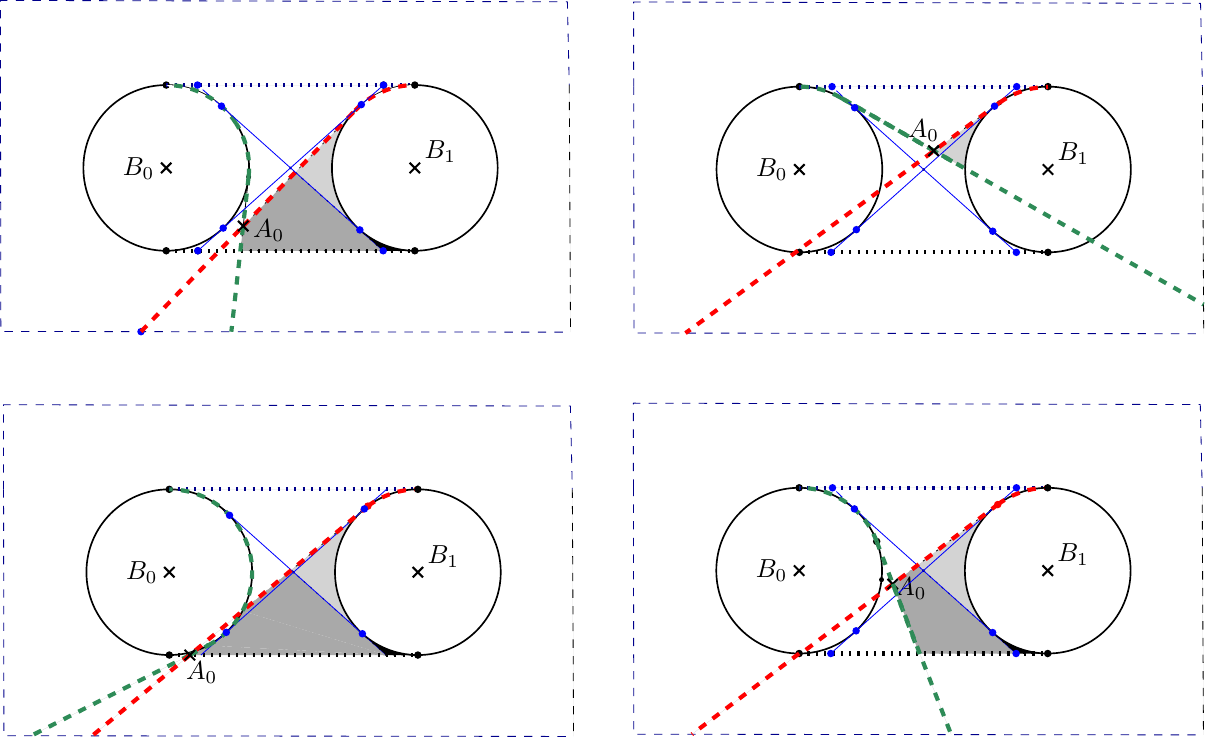}
\caption{$A_1$ lies right of both upper tangents through $A_0$}
\label{fig:A1right}
\end{figure}

\begin{figure}[h!]
\centering
\includegraphics[width=1.0\textwidth]{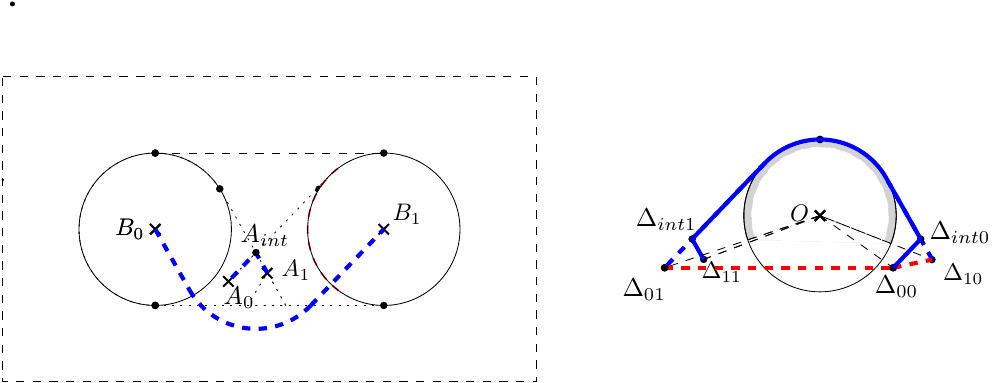}
\caption{Example when $\xi_\bA$ is convex.}
\label{fig:A1right2}
\end{figure}

In this case, illustrated in Figure~\ref{fig:A1right}, it is instructive to consider first
a standard-form co-motion $m = (\xi_\bA, \xi_\bB)$ with $A_\text{int}$ located at the point $p$, where the upper tangent through $A_1$ to $(\bA\!+\!\bB)[B_0]$ intersects 
the upper tangent through $A_0$ to $(\bA\!+\!\bB)[B_1]$.
In the event that the $\xi_\bA$, the motion of $\bA$, is convex (i.e. the first and third steps take $\bA$ along straight line segments; cf. Figure~\ref{fig:A1right2}), 
the pair $(\widearc{\xi_{\bA}}, \widearc{\xi_{\bB}})$ coincides with the corresponding pair encountered in Section~\ref{subsubsec:leftboth}. 
Thus, the counter-clockwise tightness of 
the pair $(\widearc{\xi_{\bA}}, \widearc{\xi_{\bB}})$,
and hence the counter-clockwise optimality of $m$, follows exactly as argued in the preceding case, with points $A_0$ and $A_1$ interchanged.

For the same reason, even when $\xi_\bA$ is not convex, 
the pair $(\widearc{\xi_{\bA}}, \widearc{\xi_{\bB}})$ remains 
counter-clockwise tight. (Here we use the fact that, as in the convex case, $\widearc{\xi_\bA}$ is just the triangle formed by the points $A_0$, $A_\text{int}$, and $A_1$).
Thus, by Lemma~\ref{lem:integral}, 
$\ell(\widearc{\xi_{\bA}}) + \ell(\widearc{\xi_{\bB}}) - |\overline{A_0A_1}| - |\overline{B_0B_1}|$ forms a lower bound on the 
length of any counter-clockwise collision-free co-motion.
Since either segment $A_0 A_\text{int}$ intersects $(\bA\!+\!\bB)[B_0]$, or 
segment $A_\text{int} A_1$ intersects $(\bA\!+\!\bB)[B_1]$ (or both),
we can assume, by exchanging the roles of $\bA\bB_0$ and $\bA\bB_1$ if necessary, that segment $A_0 A_\text{int}$ intersects $(\bA\!+\!\bB)[B_0]$. 
Accordingly, $A_0$ lies below the lower tangent to $(\bA\!+\!\bB)[B_0]$ through $A'_\text{int}$, 
the reflection of the point $A_\text{int}$ across the midpoint of the segment joining $B_0$ to $B_1$.

In this situation it is straightforward to construct a collision-free net clockwise co-motion $m' = (\xi'_\bA, \xi'_\bB)$, 
taking configuration $\bA\bB_0$ to configuration $\bA\bB_1$, that satisfies
$\ell(\xi'_{\bB}) = \ell(\xi_{\bB}) = \ell(\widearc{\xi_\bB}) - |\overline{B_0B_1}|$ and 
$\ell(\xi'_{\bA}) < \ell(\xi_{\bA})= \ell(\widearc{\xi_\bA}) - |\overline{A_0A_1}|$.
From this, 
it follows immediately that any globally optimal co-motion must be net clockwise.

\begin{figure}
[h!]
\centering
\includegraphics[scale=0.6]{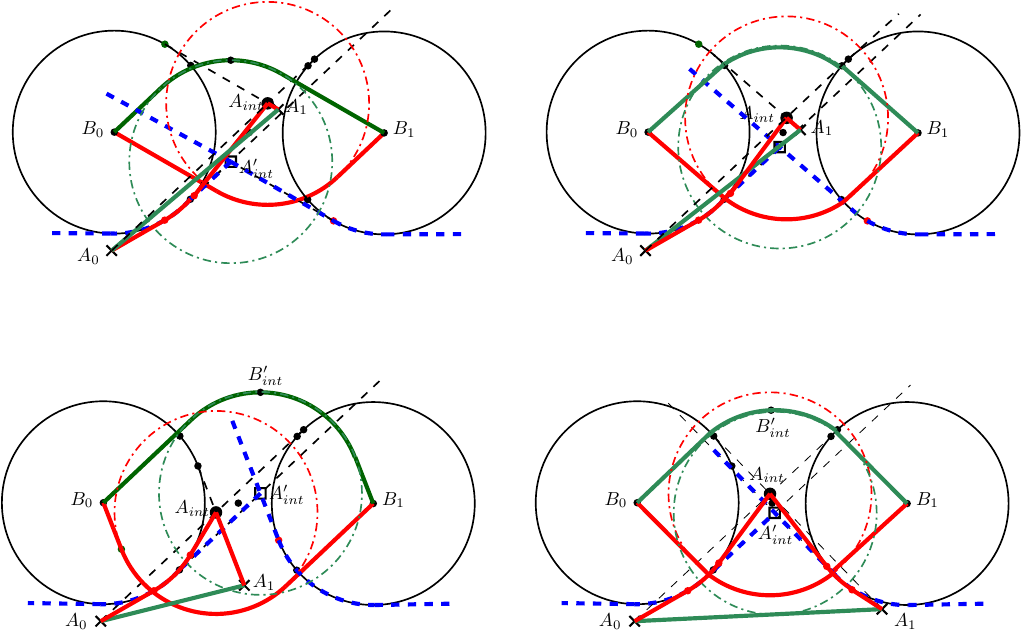}
\caption{Cases where optimal counter-clockwise co-motions are not globally optimal 
} 
\label{fig:sub-optimal}
\end{figure}

There are just two cases to consider:\\ 
(i) \emph{$A_1$ lies above the lower tangent through $A'_\text{int}$ to $(\bA\!+\!\bB)[B_1]$.}\\
In this case, illustrated in Figure~\ref{fig:sub-optimal} (top), the co-motion $m'$  first translates $\bB$ from position $B_0$ to position $B_1$ along the shortest clockwise-oriented path (shown in green) that avoids $(\bA\!+\!\bB)[A'_\text{int}]$. Note that the trace of $\xi'_\bB$ 
is the reflection, across the midpoint of the segment joining $B_0$ to $B_1$, of the trace of $\xi_\bB$
(shown in red), and hence $\ell(\xi'_\bB) = \ell(\xi_\bB) $.  Since the trace of $\xi'_\bB$ does not intersect $(\bA\!+\!\bB)[A_0]$ (from our assumption that the line segment from $A_0$ to $A_\text{int}$ intersects $(\bA\!+\!\bB)[B_0]$) it follows that this step is collision-free. 
The co-motion $m'$ completes by translating $\bA$ from position $A_0$ to position $A_1$ directly. This motion of $\bA$ is collision-free since 
$A_1$ must lie between the upper and lower tangent through $A_0$ to $(\bA\!+\!\bB)[B_1]$.  
Clearly, $\ell(\xi'_\bA) = \ell(\xi_\bA) $.

(ii) \emph{$A_1$ lies below the lower tangent through $A'_\text{int}$ to $(\bA\!+\!\bB)[B_1]$.}\\
In this case, illustrated in Figure~\ref{fig:sub-optimal} (bottom), the co-motion $m'$  first translates $\bB$ from position $B_0$ to position $B'_\text{int}$ on $(\bA\!+\!\bB)[A'_\text{int}]$, along the shortest clockwise-oriented path (shown in green) that avoids $(\bA\!+\!\bB)[A'_\text{int}]$, where the orientation of configuration $(\bA[A'_\text{int}, \bB[B'_\text{int})$ is vertical. 
Next $m'$ translates $\bA$ from position $A_0$ to position $A_1$ directly. 
Finally $m'$ translates $\bB$ from position $B'_\text{int}$ to position $B_1$, along the shortest clockwise-oriented path that avoids $(\bA\!+\!\bB)[A'_\text{int}]$.
Note that, as in the previous case, the full trace of $\xi'_\bB$ 
is the reflection, across the midpoint of the segment joining $B_0$ to $B_1$, of the trace of $\xi_\bB$
(shown in red).  
Since the trace of $\xi'_\bB$ in the first step does not intersect $(\bA\!+\!\bB)[A_0]$ (from our assumption that the line segment from $A_0$ to $A_\text{int}$ intersects $(\bA\!+\!\bB)[B_0]$),
and the trace of $\xi'_\bB$ in the third step does not intersect $(\bA\!+\!\bB)[A_1]$
(from our assumption that $A_1$ lies below the lower tangent through $A'_\text{int}$ to $(\bA\!+\!\bB)[B_1]$), it follows that the motion $\xi'_\bB$ is collision-free. Similarly the motion $\xi'_\bA$ is collision-free since the entire trace of this motion is separated from $(\bA\!+\!\bB)[B'_\text{int}]$ by the horizontal line through $A_\text{int}$.\\
As in the preceding case, 
$\ell(\xi'_\bB) = \ell(\xi_\bB) $ and
$\ell(\xi'_\bA) < \ell(\xi_\bA) $.

\subsubsection{$A_1$ lies between doubly intersecting upper tangents through $A_0$}

\begin{figure}[ht]
\centering
\includegraphics[width=0.9\textwidth]{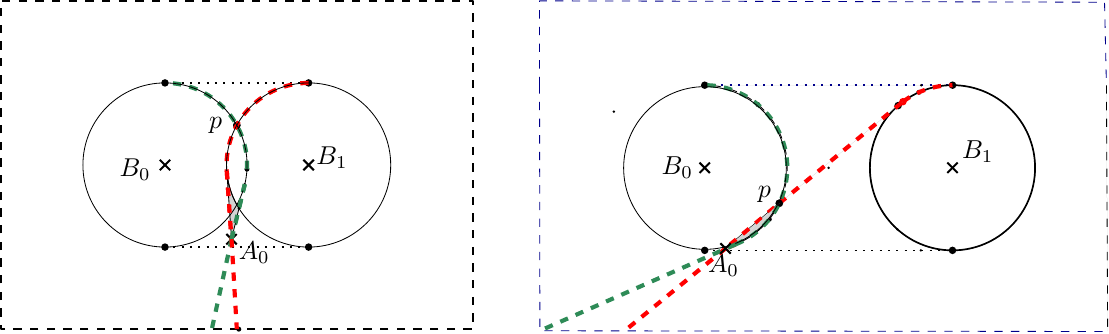}
\caption{$A_1$ lies between intersecting upper tangents through $A_0$}
\label{fig:Between tangents}
\end{figure}

\begin{figure}[ht]
\centering
\includegraphics[width=0.8\textwidth]{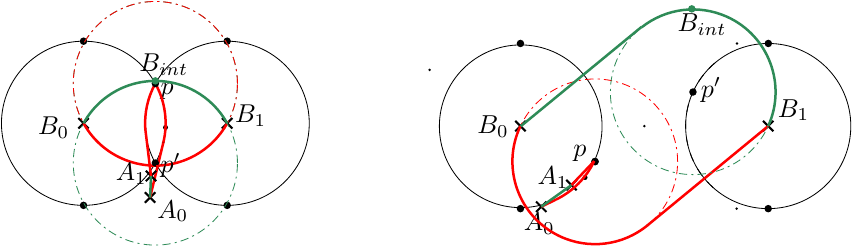}
\caption{Demonstration of non-optimality of counter-clockwise co-motion (red)}
\label{fig:Between tangents2}
\end{figure}

When $A_1$ lies between doubly intersecting upper tangent through $A_0$, illustrated in Figure~\ref{fig:Between tangents}, consider the standard-form counter-clockwise co-motion $m = (\xi_{\bA}, \xi_{\bB})$, with $A_\text{int}$ located at the point $p$, where the upper tangent through $A_0$ to $(\bA\!+\!\bB)[B_0]$ intersects 
the upper tangent through $A_0$ to $(\bA\!+\!\bB)[B_1]$ (shown in red in Figure~\ref{fig:Between tangents2}). 
As with the case described in the previous subsection, co-motion $m$
is not convex. 
In particular, 
$\widearc{\xi_{\bA}}$, 
the convex hull of the trace of $\xi_{\bA}$, contains position $A_1$ in its interior. 
Nevertheless, 
following the same argument used in the case analysed in Section~\ref{subsubsec:leftboth}, when
$A_1$ coincides with $A_0$, it is straightforward to confirm that 
the pair $(\widearc{\xi_{\bA}}, \widearc{\xi_{\bB}})$ is 
counter-clockwise tight.
Thus, by Lemma~\ref{lem:integral}, 
$\ell(\widearc{\xi_{\bA}}) + \ell(\widearc{\xi_{\bB}}) - |\overline{A_0A_1}| - |\overline{B_0B_1}|$ forms a lower bound on the 
length of any counter-clockwise collision-free co-motion from $\bA\bB_0$ to $\bA\bB_1$.

As an alternative, consider the net \emph{clockwise} 
co-motion $m'= (\xi'_{\bA}, \xi'_{\bB})$, where $\xi'_{\bB}$ is the reflection of  $\xi_{\bB}$ across the midpoint of the segment $B_0 B_1$ and $\xi'_{\bA}$ takes $\bA$ on the straight path from $A_0$ to $A_1$. 
Since $\xi'_{\bB}$ does not intersect $(\bA\!+\!\bB)[A_0]$, and the segment $A_0 A_1$ does not intersect $(\bA\!+\!\bB)[B_1]$, $m'$ is collision-free.
But, by construction, $\ell(\xi'_{\bB}) = \ell(\xi_{\bB})$
and $\ell(\xi'_{\bA}) = |\overline{A_0A_1}|$.
Furthermore, since $\overline{A_0A_1}$ forms a sudset of a chord of $\widearc{\xi_{\bA}}$, $|\overline{A_0A_1}|| < \ell(\widearc{\xi_{\bA}})/2$,
i.e. $|\overline{A_0A_1}|| < \ell(\widearc{\xi_{\bA}}) - |\overline{A_0A_1}|$.
Hence, $\ell(m')$ is less than the
length of the shortest counter-clockwise collision-free co-motion from $\bA\bB_0$ to $\bA\bB_1$.

\subsubsection{$A_1$ lies below doubly intersecting upper tangents through $A_0$}

\begin{figure}[ht]
\centering
\includegraphics[width=0.9\textwidth]{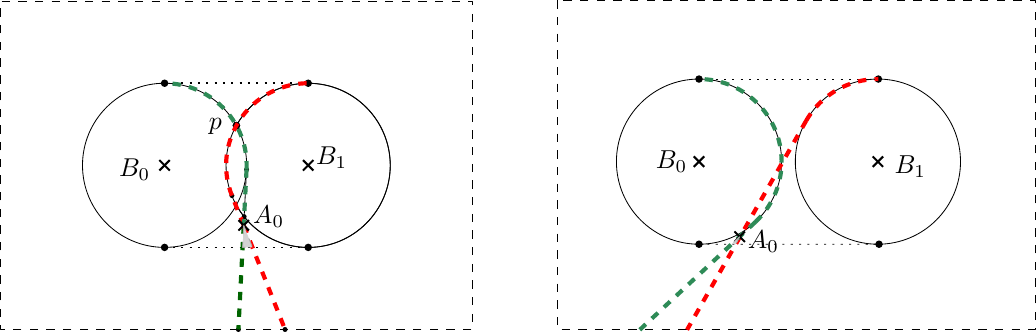}
\caption{$A_1$ lies below intersecting upper tangents through $A_0$}
\label{fig:Below intersecting tangents}
\end{figure}

Since $A_1$ lies below doubly intersecting upper tangents through $A_0$ if and only if $A_1$ lies between doubly intersecting upper tangent through $A_0$ (cf. Figure~\ref{fig:Below intersecting tangents}), this case is identical to the preceding case, after interchange of $A_0$ and $A_1$.

\subsection{Arbitrary convex and centrally-symmetric robots}\label{sec:arbitraryrobots}

For simplicity, our case analysis and optimization arguments have been illustrated using discs to describe the Minkowski sum of the participant robots. 
However, nowhere have we relied on properties of this Minkowski sum other than  convexity and central symmetry which, as we have noted, follow immediately, given that these same properties hold for the individual robots.
To be precise, we only use the fact that for such robots $\bA$ and $\bB$:\\
(i) $\bA\!+\!\bB = \bA\!-\!\bB$;\\
(ii) the upper tangent from point $A$ to $(\bA\!+\!\bB)[B]$ is parallel to the lower tangent from point $B$ to $(\bA\!+\!\bB)[A]$; and\\
(iii) the shortest path from $B_0$ to $B_1$ below $(\bA\!+\!\bB)[A]$ has the same length as the shortest path from $B_0$ to $B_1$ above $(\bA\!+\!\bB)[A']$, where $A'$ denotes the reflection of $A$ across the midpoint of the segment $\overline{B_0B_1}$ (cf. Section\ref{sec:special}).

\begin{obs}
The reader might wonder if the same or similar arguments might make it possible to further generalize our results, for example by relaxing the central symmetry constraint. 
Although our techniques allow us to identify optimal motions of arbitrary convex robots in many cases, specifically those for which we are able identify both counterclockwise optimal and clockwise optimal motions, there remain cases which seem to require new insights. 
\end{obs}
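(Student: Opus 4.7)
The plan is to substantiate this observation by a careful audit of where central symmetry is actually invoked in the preceding arguments. First I would catalogue the three distinct uses of central symmetry that appear in Sections~\ref{sec:backgrd}--\ref{sec:modified}: (a) the identity $\bA\!+\!\bB = \bA\!-\!\bB$, which is needed to pass from the Cauchy integrand $r_{C_1}(\theta)+r_{C_2}(\pi+\theta)$ to $r_{\bA\!+\!\bB}(\theta)$ in Observation~\ref{obs:Cauchy2} and in the lower bound of Lemma~\ref{lem:integral}; (b) the parallelism of opposite tangents, which drives the slope-matching bookkeeping establishing counter-clockwise tightness at the transition angles throughout Section~\ref{sec:standardproof}; and (c) the reflection trick exploited in Section~\ref{subsec:rightofboth} and Claim~\ref{claim:special-clockwise}, where a motion of $\bB$ is point-reflected across the midpoint of $\overline{B_0B_1}$ to produce a clockwise collision-avoiding motion of equal length.

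I would then argue the positive half of the observation: for arbitrary convex $\bA$ and $\bB$, properties (a) and (b) can be \emph{replaced} rather than lost. Specifically, Cauchy's formula and convexity of $\widearc{\xi_{\bA}} - \widearc{\xi_{\bB}}$ still give a reach-based lower bound; one simply carries the asymmetric expression $r_{\bA}(\theta)+r_{\bB}(\pi+\theta)$ in place of $r_{\bA\!+\!\bB}(\theta)$ through the tightness definition and through the proof of Lemma~\ref{lem:optkey}. The tangent support relations between $(\bA\!+\!\bB)[A]$ and $(\bA\!+\!\bB)[B]$ at an extremal angle $\theta$ likewise have natural asymmetric analogues. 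Whenever the standard constructions of Section~\ref{sec:standardproof} can be carried out and yield convex, collision-free motions in \emph{both} rotational classes, the shorter of the two is globally optimal by Observation~\ref{monotone} and Lemma~\ref{lem:optkey}. This would account for the ``many cases'' half of the observation.

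The hard part, and the core of the negative half of the observation, is to isolate and certify the obstruction at (c). I would focus on the non-convex cases (analogues of Section~\ref{subsec:rightofboth} and the ``doubly intersecting tangents'' subcase), where in the centrally-symmetric setting we dispatched the case precisely by reflecting the $\bB$-motion across the midpoint of $\overline{B_0B_1}$. That reflection exchanges avoidance of $(\bA\!+\!\bB)[A]$ for avoidance of $(\bA\!+\!\bB)[A']$ only because $\bA\!+\!\bB$ is itself invariant under point reflection through its centre, and a convex body has this invariance precisely when it is centrally symmetric. Thus if either $\bA$ or $\bB$ lacks central symmetry, the reflected path of $\bB$ need not avoid $(\bA\!+\!\bB)[A']$, and no corresponding shorter clockwise motion is produced by our template. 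The proposal is therefore to exhibit a concrete witness --- for instance with $\bA$ a triangle and $\bB$ a disc, arranged so that the ``doubly intersecting upper tangent'' configuration of Figure~\ref{fig:Between tangents} arises --- and to verify that neither standard-form construction yields a convex collision-free co-motion in either rotational class. A globally optimal co-motion in such an example would then necessarily leave the standard-form template, demonstrating that techniques beyond tightness and midpoint reflection are genuinely required.
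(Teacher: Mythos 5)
The statement you are "proving" is a narrative remark, not a theorem, and the paper supplies no formal proof of it: the justification is simply the preceding paragraph in Section~\ref{sec:arbitraryrobots}, which lists the three places central symmetry is invoked, namely (i) $\bA\!+\!\bB=\bA\!-\!\bB$, (ii) the opposite-tangent parallelism, and (iii) the midpoint-reflection argument. Your catalogue (a),(b),(c) matches the paper's (i),(ii),(iii) exactly, and your identification of (c) as the essential obstruction --- because a convex body is invariant under point reflection through its centre precisely when it is centrally symmetric, so the reflected $\bB$-trace need not avoid the reflected obstacle --- is the correct diagnosis and is precisely the reason the paper makes this remark. Your positive half is also in line with the paper's hint ``many simple cases of optimal co-motions for general convex robots can still be proven'': replacing $r_{\bA\!+\!\bB}(\theta)$ by $r_{\bA\!-\!\bB}(\theta)=r_\bA(\theta)+r_\bB(\pi+\theta)$ in the tightness definition and in Lemma~\ref{lem:integral} is exactly the right move, and the tangent parallelism survives because the two forbidden regions $(\bB\!-\!\bA)[B]$ and $(\bA\!-\!\bB)[A]$ are reflections of one another through the origin even without central symmetry of $\bA$ and $\bB$ individually.

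One small imprecision: in (a) you describe the identity $r_{C_1}(\theta)+r_{C_2}(\pi+\theta)=r_{C_1\!-\!C_2}(\theta)$ as relying on central symmetry; it does not --- that step of Observation~\ref{obs:Cauchy2} is a general fact. What central symmetry buys is only the subsequent replacement of $\bA\!-\!\bB$ by $\bA\!+\!\bB$, and, more importantly, the fact that the \emph{single} set $\bA\!+\!\bB$ serves as the forbidden region both for $A$ relative to $B$ and for $B$ relative to $A$. Also, your proposed witness (triangle plus disc in the doubly-intersecting-tangents configuration) is plausible but remains unverified in the write-up; since the observation in the paper is itself stated without a witness, this is acceptable as a proposal, but you should be aware that the heavy lifting of the negative half --- showing that no co-motion of the proposed standard form (or of any form the tightness machinery can certify) is globally optimal in that instance --- is not yet done. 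As an audit of where the paper's techniques do and do not extend, your proposal is faithful to the paper's own reasoning and slightly more explicit than what the paper states.
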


\section{Further constrained optimal co-motions}

We have seen that to minimize co-motion length it suffices to use co-motions of standard form. 
While such co-motions have certain desirable properties, their simplicity, particularly the fact that their constituent motions are decoupled, means that other useful properties do not always hold.  
In this section, we show that straightforward modifications of standard-form co-motions make it possible to attain several such additional properties.

\subsection{Orientation monotonicity}

\begin{figure}[ht]
\centering
\includegraphics[width=0.7\textwidth]{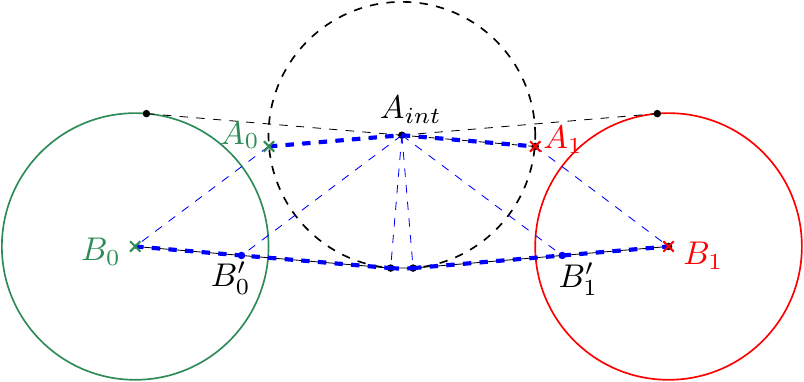}
\caption{Standard co-motion can give rise to non-monotone orientations}
\label{fig:non-monotone}
\end{figure}

By definition, standard-form co-motions are \emph{decoupled}: at any fixed time, only one of $\bA$ or $\bB$ is moving. As previously noted, in almost all cases the orientation of the associated configurations changes monotonically. However, there are situations, 
one is illustrated in Figure~\ref{fig:non-monotone}, where this is not the case. (In this situation the first and third steps result in a clockwise change in orientation.)
However, a straightforward modification to any optimal co-motion $m = (\xi_\bA, \xi_\bB)$, based on the following observation, produces a collision-free orientation-monotone co-motion with the same trace.
\begin{obs}\label{obs:monotonize}
  If configurations $(\bA[\xi_\bA(t)], \bB[\xi_\bB(t)]$ and  $(\bA[\xi_\bA(t')], \bB[\xi_\bB(t')]$ have the same orientation $\theta$, for some $t < t'$, then
  replacing sub-motion $\xi_\bA$ on the interval $[t, t']$ by the straight-line sub-motion $\xi'_\bA$ where 
  $\xi_\bA(\alpha t + (1-\alpha) t') = \alpha \xi_\bA(t) + (1-\alpha) \xi_\bA(t')$, 
  and replacing sub-motion $\xi_\bB$ on the interval $[t, t']$ by the straight-line sub-motion $\xi'_\bB$ where 
  $\xi_\bB(\alpha t + (1-\alpha) t') = \alpha \xi_\bB(t) + (1-\alpha) \xi_\bB(t')$, for $0 \le \alpha \le 1$, produces a co-motion whose configurations over the interval $[t, t']$ all have orientation $\theta$.
\end{obs}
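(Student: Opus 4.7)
The plan is to express the claim in terms of the single relative‑position curve $p(s) := \xi_{\bA}(s) - \xi_{\bB}(s)$ and reduce it to a convexity property of the set of planar positions that realize a fixed orientation $\theta$ with respect to the Minkowski sum $K := (\bA\!+\!\bB)[O]$.

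By Observation~\ref{obs:Minkowski}, a configuration $(\bA[A], \bB[B])$ is viable if and only if $A-B \notin \operatorname{int}(K)$, and when viable its orientation is the angle of the outward normal to $\partial K$ at the boundary point nearest $A-B$. Let $\hat\theta = (\cos\theta, \sin\theta)$ and let $F_\theta \subseteq \partial K$ denote the support face of $K$ in direction $\theta$ (the set of points achieving $r_K(\theta)$; either a single point or a line segment). First I would verify that the locus of external points whose configuration has orientation $\theta$ is exactly
\[
S_\theta \;=\; F_\theta + \{s\hat\theta : s \geq 0\},
\]
i.e.\ the Minkowski sum of a convex segment and a ray, and therefore itself convex. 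The containment $S_\theta \cap \operatorname{int}(K) = \emptyset$ is immediate since every $p \in S_\theta$ satisfies $p \cdot \hat\theta \geq r_K(\theta)$.

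With this in hand the rest is a one‑line check. The hypothesis says $p(t), p(t') \in S_\theta$. Under the replacement, for $\alpha \in [0,1]$, the new relative position is
\[
\xi'_{\bA}(\alpha t + (1-\alpha)t') - \xi'_{\bB}(\alpha t + (1-\alpha)t') \;=\; \alpha\, p(t) + (1-\alpha)\, p(t'),
\]
which lies in $S_\theta$ by convexity. Hence every intermediate configuration on $[t,t']$ is viable (outside $\operatorname{int}(K)$) and has orientation exactly $\theta$; combined with the fact that the modified motion agrees with $m$ off $[t,t']$ and is continuous at the endpoints, this shows the modification produces a collision‑free co‑motion with the asserted monotonicity property.

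The only non‑routine step is establishing the description of $S_\theta$, and it is only subtle when $\partial K$ has flat pieces: one must check that for every external point $p$ with $p\cdot\hat\theta \geq r_K(\theta)$, the foot of the perpendicular from $p$ to the supporting line $\{q : q\cdot\hat\theta = r_K(\theta)\}$ lies in $F_\theta$ and is a nearest point of $K$ to $p$, so that the outward normal at that boundary point is indeed $\hat\theta$. Once this bit of geometric bookkeeping is dispatched, the convexity of $S_\theta$ delivers the observation immediately.
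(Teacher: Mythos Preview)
Your argument is correct and complete in its essentials. The paper itself offers no proof of this observation; it is stated as self-evident, so there is no ``paper's approach'' to compare against. Your reduction to the relative-position curve $p(s)=\xi_\bA(s)-\xi_\bB(s)$ and the convexity of the orientation-$\theta$ locus $S_\theta = F_\theta + \mathbb{R}_{\ge 0}\hat\theta$ is exactly the right way to make the observation precise, and the computation showing that the interpolated relative position $\alpha\,p(t)+(1-\alpha)\,p(t')$ stays in $S_\theta$ (hence outside $\operatorname{int}(K)$ and with orientation $\theta$) is clean.

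One small wording issue in your final paragraph: the claim ``for every external point $p$ with $p\cdot\hat\theta \ge r_K(\theta)$, the foot of the perpendicular \ldots\ lies in $F_\theta$'' is not true as stated --- the half-plane $\{p\cdot\hat\theta \ge r_K(\theta)\}$ is strictly larger than $F_\theta+\mathbb{R}_{\ge 0}\hat\theta$, and points far to the side in that half-plane project outside $F_\theta$. What you actually need (and what follows immediately from the nearest-point characterization of the normal cone) is the containment $S_\theta \subseteq F_\theta+\mathbb{R}_{\ge 0}\hat\theta$: if $p$ has orientation $\theta$ and nearest point $q\in\partial K$, then $p-q$ is a nonnegative multiple of $\hat\theta$ and $\hat\theta$ lies in the normal cone at $q$, so $q\in F_\theta$. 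This is a phrasing slip, not a gap in the argument.
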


\begin{clm}
 Any optimal co-motion $m$ can be transformed to a trace-equivalent orientation-monotone co-motion $m'$.  
\end{clm}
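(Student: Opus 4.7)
The plan is to apply Observation~\ref{obs:monotonize} iteratively to flatten the non-monotone portions of the orientation function, verifying along the way that each replacement leaves the traces of $\xi_\bA$ and $\xi_\bB$ unchanged. By Observation~\ref{obs:flip}, I may assume without loss of generality that $m$ is net counter-clockwise, so that the orientation function lifts to a continuous $\tilde\theta : [0,1] \to \mathbb{R}$ with $\tilde\theta(0) = \theta_0$ and $\tilde\theta(1) = \theta_1$. If $\tilde\theta$ is already monotone non-decreasing, take $m' = m$.

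Otherwise, I will use the running maximum $\bar\theta(t) = \max_{s \le t} \tilde\theta(s)$ to identify the maximal open intervals $(a_i, b_i)$ on which $\tilde\theta(t) < \bar\theta(t)$; at the boundary of each such interval, $\tilde\theta(a_i) = \tilde\theta(b_i)$. For each dip interval I apply Observation~\ref{obs:monotonize}: the sub-motions of $\bA$ and $\bB$ on $[a_i, b_i]$ are replaced by straight-line interpolants from $\xi_\bA(a_i)$ to $\xi_\bA(b_i)$ and from $\xi_\bB(a_i)$ to $\xi_\bB(b_i)$ respectively, and the resulting configurations all have constant orientation $\tilde\theta(a_i)$. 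After processing every dip interval the orientation is monotone non-decreasing, and a symmetric argument run in reversed time handles any ``tail'' dip in which $\tilde\theta$ falls below its terminal value and never returns.

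The hard part will be verifying trace-equivalence. By Lemma~\ref{lem:standard} I may assume $m$ is in standard form, so the orientation changes monotonically within each of its three decoupled sub-steps, and any non-monotonicity arises only at transitions between sub-steps. The crux is to show that in each such transition the dip endpoints $\xi_\bA(a_i), \xi_\bA(b_i)$ and $\xi_\bB(a_i), \xi_\bB(b_i)$ fall on straight portions of the original sub-motions, so that the straight-line interpolants produced by the observation coincide with sub-segments of the original traces. I expect this to follow from the tangency conditions defining $A_{\text{int}}$ together with the central symmetry of $\bA + \bB$ (the same ingredients used throughout Section~\ref{sec:standardproof}): these align the initial and terminal straight portions of the step-2 trace of $\bB$ with the step-1 and step-3 traces of $\bA$, ensuring that the endpoints of the non-monotone dips always land within these straight portions rather than on the curved arc along $\partial(\bA + \bB)[A_{\text{int}}]$.
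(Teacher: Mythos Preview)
Your identification of the dip intervals via the running maximum, and the plan to apply Observation~\ref{obs:monotonize} on each, matches the paper's approach. The gap is in your trace-equivalence argument.

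You write ``By Lemma~\ref{lem:standard} I may assume $m$ is in standard form,'' but this reduction is not available. Lemma~\ref{lem:standard} only asserts the \emph{existence} of a standard-form co-motion with a tight hull pair; it does not say that every optimal co-motion is standard-form, and swapping $m$ for a standard-form co-motion would in general change the trace --- which is exactly what the claim forbids (recall the continuum of optimal co-motions in Figure~\ref{fig:uniqueness}, none of which are standard-form except one). Hence the geometric analysis you sketch for standard-form $m$ does not prove the claim for an arbitrary optimal $m$, and the closing paragraph (``I expect this to follow from\ldots'') is not a proof even in the restricted case.

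The paper's argument for trace-equivalence is much shorter and works for any optimal $m$ directly. On each maximal interval $[a,b]$ with $\tilde\theta(a)=\tilde\theta(b)$, the straight-line replacement of Observation~\ref{obs:monotonize} is itself collision-free (every intermediate difference $\xi'_\bA(t)-\xi'_\bB(t)$ is a convex combination of two points lying on the same side of the support line of $\bA\!+\!\bB$ in direction $\tilde\theta(a)$, hence viable with that orientation), and its length is at most that of the original sub-motions, with equality only if both $\xi_\bA$ and $\xi_\bB$ were already straight on $[a,b]$. Optimality of $m$ forces equality, so the replacement is merely a reparametrisation (a coupling of two straight segments that were previously traversed sequentially) and leaves both traces unchanged. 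No appeal to standard form or to the tangency geometry of $A_{\text{int}}$ is needed.
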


\begin{proof}
 It suffices to apply the modification described in Observation~\ref{obs:monotonize} to all
 maximal (time) intervals (necessarily straight co-motions, if $m$ is optimal) that repeat a configuration orientation $\theta$.
\end{proof}

In the situation illustrated in Figure~\ref{fig:non-monotone},
let $B'_0$ be the point on the trace of $\bB$ where configuration 
$(\bA[A_\text{int}], \bB[B'_0])$ has the same orientation as the initial configuration $(\bA[A_0], \bB[B_0])$, and let
$B'_1$ be the point on the trace of $\bB$ where configuration 
$(\bA[A_\text{int}], \bB[B'_1])$ has the same orientation as the goal configuration $(\bA[A_1], \bB[B_1])$. 
The modified co-motion\\
(i) starts with a straight-line co-motion that \emph{couples} the motion of $\bA$ from $A_0$ to $A_\text{int}$ with a motion 
of $\bB$ from $B_0$ to $B'_0$, preserving the orientation of the initial configuration;\\
(ii) continues with a standard co-motion from configuration $(\bA[A_\text{int}], \bB[B'_0])$ to configuration $(\bA[A_\text{int}], \bB[B'_1])$; and\\
(ii) finishes with a straight-line co-motion that \emph{couples} the motion of $\bA$
from $A_\text{int}$ to $A_1$ with a motion 
of $\bB$ from $B'_1$ to $B_1$, preserving the orientation of the goal configuration.

\subsection{Contact preservation}

One can also couple co-motions to achieve both angle monotonicity and the property that the two robots are in contact for a single connected interval. 
A review of the various cases considered in Section\ref{sec:standardproof} shows that there are just two situations, one illustrated in 
Figure~\ref{fig:NearTrivial3} (upper right) and the other in Figure~\ref{fig:Above both2} (left), where our standard-form co-motions can temporarily lose robot contact.

The optimal co-motions described for these cases can be modified to preserve contact by following the ladder motion strategy described by Icking et al.~\cite{icking} in the interval between contact configurations. The validity of the resulting straight-line co-motion in this interval is guaranteed by the fact that the interval in question either starts and end with what Icking et al. call \emph{critical} configurations (where the configuration orientation is normal to the motion trace).

For example, Figure~\ref{fig:contact-preservation}
(left) illustrates the modification required for the case illustrated in Figure~\ref{fig:Above both2}. 
Here the decoupled co-motion between critical contact configurations $(\bA[A'_0], \bB[B_0])$ and 
$(\bA[A_1], \bB[B'_0])$ is replaced by a sliding straight-line coupled motion, as described in 
Icking et al. 
Similarly, Figure~\ref{fig:contact-preservation}
(right) illustrates the modification required for the case illustrated in Figure~\ref{fig:NearTrivial3}. 
Here the decoupled co-motion between critical contact configurations $(\bA[A_0], \bB[B'_0])$ and 
$(\bA[A'_0], \bB[B_1])$ is replaced by a sliding straight-line coupled motion

\begin{figure}[ht]
\centering
\includegraphics[width=1.0\textwidth]{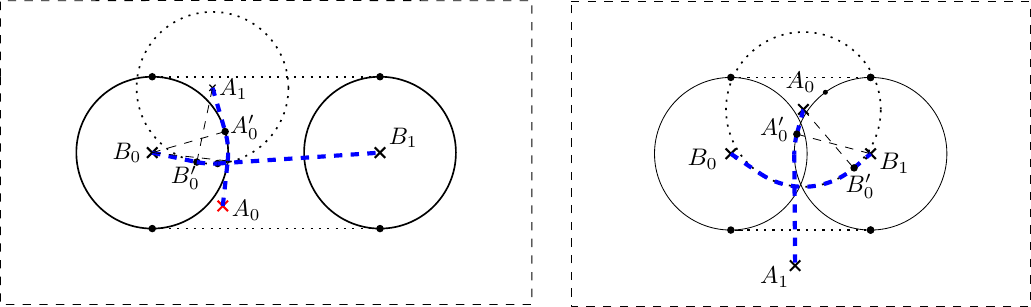}
\caption{Modification to preserve contact}
\label{fig:contact-preservation}
\end{figure}

\section{Towards a complete characterization of optimal co-motions}\label{sec:uniqueness}

Theorem~\ref{thm:standardoptimality} serves to characterize the length of optimal co-motions in terms of properties of the standard co-motions described in Section~\ref{sec:standardproof}. 
Our standard-form co-motions are not uniquely optimal. For example, Figure~\ref{fig:uniqueness} illustrates a case where there is a continuum of optimal co-motions (each tracing the boundary of an object of fixed width) whose length equals that of our standard-form co-motion (shown in blue).

\begin{figure}[ht]
\centering
\includegraphics[width=0.7\textwidth]{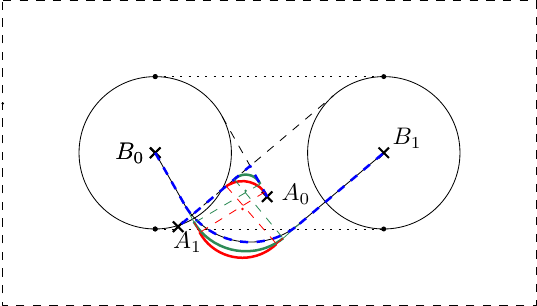}
\caption{A continuum of optimal paths}
\label{fig:uniqueness}
\end{figure}

This makes it natural to ask if there is a broader characterization of optimal co-motions. The remainder of this section demonstrates one such characterization.

 It follows from Lemma~\ref{lem:optkey} and Lemma~\ref{lem:standard} that the counter-clockwise tightness of the pair $(\widearc{\xi_{\bA}}, \widearc{\xi_{\bB}})$ is both necessary and sufficient for a convex net counter-clockwise co-motion $m = (\xi_{\bA}, \xi_{\bB})$ to be globally optimal.

Recall that $H^{\top}_{\bA\bB_0, \bA\bB_1}$ denotes the convex hull of the set of points forming the top of $(\bA\!+\!\bB)[O]$ together with the points $\Delta_{ij} = A_i - B_j$, where $i, j \in \{0, 1 \}$. 
All of the standard-form counter-clockwise optimal co-motions $(\xi_\bA, \xi_\bB)$ that we have identified have the property that the convex hull of the set $(\widearc{\xi_{\bA}} -\widearc{\xi_{\bB}})$ coincides with 
$H^{\top}_{\bA\bB_0, \bA\bB_1}$. 
It turns out that this property is characteristic of \emph{all} globally optimal collision-free net counter-clockwise co-motions.

\begin{lem}\label{lem:character}
 Let $m=(\xi_\bA, \xi_\bB)$  be any optimal collision-free co-motion from $\bA\bB_0$ to $\bA\bB_1$. 
 If $m$ is net counter-clockwise 
 then the convex hull of the set $(\widearc{\xi_{\bA}} -\widearc{\xi_{\bB}})$ coincides with 
$H^{\top}_{\bA\bB_0, \bA\bB_1}$.
\end{lem}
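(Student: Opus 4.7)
The plan is to show that optimality of a net counter-clockwise co-motion $m = (\xi_\bA, \xi_\bB)$ forces the support function of $K := \mathrm{conv}(\widearc{\xi_\bA} - \widearc{\xi_\bB})$ to coincide with that of $H^\top_{\bA\bB_0, \bA\bB_1}$ on every direction of $S^1$, so that the two planar closed convex sets are equal by uniqueness of support functions.

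First I would chain together the ingredients behind Lemma~\ref{lem:integral}: Observation~\ref{obs:convex} applied to each motion gives $\ell(\xi_\bA) + \ell(\xi_\bB) \ge \ell(\widearc{\xi_\bA}) + \ell(\widearc{\xi_\bB}) - |\overline{A_0A_1}| - |\overline{B_0B_1}|$; Corollary~\ref{cor:cauchyorig} rewrites the right-hand side as $\int_{S^1} (r_{\widearc{\xi_\bA}}(\theta) + r_{\widearc{\xi_\bB}}(\pi+\theta))\,d\theta$; and Observation~\ref{obs:point-wise} together with the trivial bound from the segment endpoints of each motion yields the pointwise inequality
\[
r_{\widearc{\xi_\bA}}(\theta) + r_{\widearc{\xi_\bB}}(\pi+\theta) \ge \max\!\bigl(r_{\overline{A_0A_1}}(\theta) + r_{\overline{B_0B_1}}(\pi+\theta),\ r_{\bA+\bB}(\theta)\cdot\mathds{1}_{[\theta_0,\theta_1]}(\theta)\bigr).
\]
Because Lemma~\ref{lem:standard} furnishes a standard-form counter-clockwise co-motion that attains the Lemma~\ref{lem:integral} bound with equality, and $m$ is optimal with the same length, every inequality in the chain collapses to equality for $m$ as well. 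In particular each of $\xi_\bA, \xi_\bB$ is convex, and by continuity of the reach functions the displayed inequality becomes a pointwise identity; i.e., the pair $(\widearc{\xi_\bA}, \widearc{\xi_\bB})$ is counter-clockwise tight.

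Next, applying Observation~\ref{obs:Cauchy2} rewrites the left-hand side of that identity as $r_K(\theta)$, giving
\[
r_K(\theta) = \max\!\bigl(r_{\overline{A_0A_1}}(\theta) + r_{\overline{B_0B_1}}(\pi+\theta),\ r_{\bA+\bB}(\theta)\cdot\mathds{1}_{[\theta_0,\theta_1]}(\theta)\bigr)\quad\text{for all } \theta\in S^1.
\]
I would then identify this max expression with $r_{H^\top_{\bA\bB_0,\bA\bB_1}}(\theta)$. Since $H^\top$ is the convex hull of the top arc of $(\bA+\bB)[O]$ and the four corners $\Delta_{ij}$, its support function is $\max\bigl(r_{\text{top arc}}(\theta), r_{\mathrm{conv}\{\Delta_{ij}\}}(\theta)\bigr)$, and $r_{\mathrm{conv}\{\Delta_{ij}\}}(\theta) = r_{\overline{A_0A_1}}(\theta) + r_{\overline{B_0B_1}}(\pi+\theta)$ by the Minkowski-sum identity. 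For $\theta\in[\theta_0,\theta_1]$, the top arc carries an outward normal in direction $\theta$, so $r_{\text{top arc}}(\theta) = r_{\bA+\bB}(\theta)$, matching the indicator term. For $\theta\notin[\theta_0,\theta_1]$, the support of the top arc is attained at one of its endpoints $p_{\theta_0}$ or $p_{\theta_1}$; using the normal-form assumption (which places $\theta_0, \theta_1$ on opposite sides of $-\pi/2$) together with the identity $\Delta_{jj} = p_{\theta_j} + s_j\,\hat{\theta}_j$ with $s_j\ge 0$ supplied by the orientation definition, one verifies that in every case analysed in Section~\ref{sec:standardproof} the contribution of the arc endpoints is absorbed by $r_{\mathrm{conv}\{\Delta_{ij}\}}(\theta)$. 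The cleanest way to package this is simply to invoke the standard-form co-motion $m^* = (\xi_\bA^*, \xi_\bB^*)$ of Lemma~\ref{lem:standard}: the observations accompanying each case in Section~\ref{sec:standardproof} (e.g.\ those after Figures~\ref{fig:NearTrivial},~\ref{fig:A1left2},~\ref{fig:Above both2}) establish $\mathrm{conv}(\widearc{\xi_\bA^*}-\widearc{\xi_\bB^*}) = H^\top$, and the same counter-clockwise tightness applied to $m^*$ then yields $r_{H^\top}(\theta) = $ the max expression directly.

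Combining the two displays gives $r_K(\theta) = r_{H^\top_{\bA\bB_0,\bA\bB_1}}(\theta)$ for every $\theta \in S^1$, and since a closed convex subset of the plane is uniquely determined by its support function, we conclude $K = H^\top_{\bA\bB_0,\bA\bB_1}$. The main obstacle is the reach identification for $H^\top$ in directions outside $[\theta_0,\theta_1]$; I expect to discharge this by reusing the case analysis of Section~\ref{sec:standardproof} (which already establishes the identity for the particular standard-form $m^*$) rather than by a direct geometric argument on $H^\top$ alone.
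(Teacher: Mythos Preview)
Your proposal is correct and follows essentially the same support-function strategy as the paper: both arguments establish that optimality forces counter-clockwise tightness, identify $r_{\widearc{\xi_\bA}}(\theta)+r_{\widearc{\xi_\bB}}(\pi+\theta)$ with $r_{H^\top}(\theta)$, and conclude equality of convex hulls. The paper simply asserts the identity $r_{H^\top}(\theta) = \max\bigl(r_{\overline{A_0A_1}}(\theta)+r_{\overline{B_0B_1}}(\pi+\theta),\, r_{\bA+\bB}(\theta)\cdot\mathds{1}_{[\theta_0,\theta_1]}(\theta)\bigr)$ directly, whereas you (more cautiously) propose to discharge the case $\theta\notin[\theta_0,\theta_1]$ by routing through the standard-form co-motion of Lemma~\ref{lem:standard}; this is a harmless elaboration of the same idea.
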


\begin{proof}
Note that 
$\max_{i,j \in \{0,1\}} r_{\Delta_{ij}}(\theta) =
 r_{\overline{A_0A_1} - \overline{B_0B_1}}(\theta) =
 r_{\overline{A_0A_1}}(\theta) + r_{\overline{B_0B_1}}(\pi +\theta)$.
Hence $r_{H^{\top}_{\bA\bB_0, \bA\bB_1}}(\theta)
 = \max\left(r_{\overline{A_0A_1}}(\theta) + r_{\overline{B_0B_1}}(\pi +\theta), 
r_{\bA\!+\!\bB}(\theta)
\cdot\mathds{1}_{[\theta_0, \theta_1]}(\theta) \right).$   

Suppose that the convex hull of the set $(\widearc{\xi_{\bA}} -\widearc{\xi_{\bB}})$ does not coincide with 
$H^{\top}_{\bA\bB_0, \bA\bB_1}$.
Then, for some $\theta \in S^1$, 
$r_{\widearc{\xi_{\bA}}}(\theta) + r_{\widearc{\xi_{\bB}}}(\pi + \theta) = r_{\widearc{\xi_{\bA}} -\widearc{\xi_{\bB}}}(\theta) 
\neq r_{H^{\top}_{\bA\bB_0, \bA\bB_1}}(\theta)$,
contradicting the optimality of $m$.
\end{proof}

\begin{thm}
  Let $m=(\xi_\bA, \xi_\bB)$  be any optimal collision-free co-motion from $\bA\bB_0$ to $\bA\bB_1$. 
Then the convex hull of the set $(\widearc{\xi_{\bA}} -\widearc{\xi_{\bB}})$ coincides with either $H^{\top}_{\bA\bB_0, \bA\bB_1}$ or $H^{\bot}_{\bA\bB_0, \bA\bB_1}$, whichever has a shorter boundary.
Furthermore, $\ell(m)$ is just the length of this shorter boundary less $|\overline{A_0B_0}| + |\overline{A_1B_1}|$. 
\end{thm}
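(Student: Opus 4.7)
The plan is to leverage Lemma~\ref{lem:character} for the net counter-clockwise case, derive the analogous statement for the net clockwise case by the reflection symmetry of Observation~\ref{obs:flip}, and then invoke Observation~\ref{monotone} to conclude that any optimal co-motion falls into (at least) one of these two categories and hence must match the shorter of the two candidate convex hulls.

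More concretely, I would first observe that any optimal co-motion $m$ must be convex: if it were not, the trace-equivalent co-motion obtained by passing to the convex hulls (when collision-free) would be strictly shorter, contradicting optimality, while the Cauchy-based lower bound of Lemma~\ref{lem:integral} already accounts only for the reach of the convex hulls. I would then split into cases by Observation~\ref{monotone}. In the net counter-clockwise case, Lemma~\ref{lem:character} directly identifies the convex hull of $\widearc{\xi_\bA}-\widearc{\xi_\bB}$ with $H^{\top}_{\bA\bB_0,\bA\bB_1}$. In the net clockwise case, I would apply the $\flip$ operator of Observation~\ref{obs:flip}: $\flip(m)$ becomes a net counter-clockwise optimal co-motion for the flipped robot pair between flipped configurations, to which the counter-clockwise version of Lemma~\ref{lem:character} applies, yielding (after undoing the flip) that the convex hull of $\widearc{\xi_\bA}-\widearc{\xi_\bB}$ coincides with $H^{\bot}_{\bA\bB_0,\bA\bB_1}$.

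For the length formula, I would chain Observation~\ref{obs:convex}, Corollary~\ref{cor:cauchyorig}, and Observation~\ref{obs:Cauchy2} exactly as in the proof of Lemma~\ref{lem:optkey}: the length of $m$ equals $\ell(\widearc{\xi_\bA})+\ell(\widearc{\xi_\bB})-|\overline{A_0A_1}|-|\overline{B_0B_1}|$, which by Observation~\ref{obs:Cauchy2} equals $\ell(\widearc{\xi_\bA-\xi_\bB})-|\overline{A_0A_1}|-|\overline{B_0B_1}|$, and in turn equals the perimeter of whichever of $H^{\top}$ or $H^{\bot}$ the convex hull coincides with, minus $|\overline{A_0A_1}|+|\overline{B_0B_1}|$. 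Since $m$ is optimal, it must select whichever of the two hulls has shorter boundary; the other hull corresponds to an equally well-defined optimal motion in the opposite rotational class, of length $\ge \ell(m)$.

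The main obstacle I anticipate is not the argument itself, which is largely assembly of pieces already in place, but verifying that the reflection argument in the net clockwise case goes through cleanly: one must check that $H^{\bot}_{\bA\bB_0,\bA\bB_1}$ is really the image of $H^{\top}_{\flip(\bA\bB_0),\flip(\bA\bB_1)}$ under $\flip$ (which follows since $\flip$ sends the top of $\bA+\bB$ to the bottom and commutes with taking convex hulls and Minkowski differences), and that the range $S^1-[\theta_0,\theta_1]$ for the clockwise class really corresponds to the bottom of $(\bA+\bB)[O]$ in the tightness equation. A minor subtlety is that an optimal $m$ could happen to be simultaneously net clockwise and net counter-clockwise (when $[\theta_0,\theta_1]=S^1$), in which case both hulls agree and the ``shorter of'' clause is vacuous; this degenerate case needs only a sentence.
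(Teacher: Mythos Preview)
Your proposal is correct and follows essentially the same route as the paper: split into the net counter-clockwise and net clockwise cases, invoke Lemma~\ref{lem:character} directly for the former, handle the latter via the $\flip$ reduction of Observation~\ref{obs:flip} together with the identity $H^{\bot}_{\bA\bB_0,\bA\bB_1}=\flip(H^{\top}_{\flip(\bA\bB_0),\flip(\bA\bB_1)})$, and read off the length via Observation~\ref{obs:Cauchy2}. Your bookkeeping on the length formula (subtracting $|\overline{A_0A_1}|+|\overline{B_0B_1}|$) is in fact more careful than the paper's own statement and proof, which gloss over that subtraction.
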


\begin{proof}
We know that a globally optimal co-motion is either counter-clockwise optimal or clockwise optimal. If $m$ is counter-clockwise optimal then,
by Lemma~\ref{lem:character}, 
the convex hull of the set $(\widearc{\xi_{\bA}} -\widearc{\xi_{\bB}})$ coincides with $H^{\top}_{\bA\bB_0, \bA\bB_1}$. 
But, by Observation~\ref{obs:Cauchy2}, 
the length of the boundary of the convex hull of the set $(\widearc{\xi_{\bA}} -\widearc{\xi_{\bB}})$ is the same as $\ell(\widearc{\xi_{\bA}}) + \ell(\widearc{\xi_{\bB}})$, which, by convexity, is just $\ell(m)$.

On the other hand, if $m$ is clockwise optimal, then, by Observation~\ref{obs:flip},
$\flip(m)$ is a net counter-clockwise co-motion of the robot pair $(\flip(\bA), \flip(\bB))$ from configuration 
$\flip(\bA\bB_0)$ to $\flip(\bA\bB_1)$. But since 
$\ell(\flip(m))= \ell(m)$ and 
$H^{\bot}_{\bA\bB_0, \bA\bB_1} = 
\flip(H^{\top}_{\flip(\bA\bB_0),\flip( \bA\bB_1)})$,
the result follows.
\end{proof}

\begin{obs}
Note that in all but one case, discussed in subsection~\ref{subsec:rightofboth} and illustrated in Figure~\ref{fig:A1right}, the trace of the motion $\xi_{\bA -\bB}$, defined as $(\xi_{\bA -\bB}(t) = \xi_\bA(t) - \xi_\bB(t)$
follows the boundary of the convex hull of the set $(\widearc{\xi_{\bA}} -\widearc{\xi_{\bB}})$.  
\end{obs}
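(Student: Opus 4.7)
The plan is to verify the observation by a case-by-case inspection of the standard-form co-motions constructed in Section~\ref{sec:standardproof}, showing directly that the trace of $\xi_{\bA-\bB}$ is realized by a concatenation of pieces that already appear on $\partial H^{\top}_{\bA\bB_0,\bA\bB_1}$, which by Lemma~\ref{lem:character} is the same as $\partial\,\mathrm{conv}(\widearc{\xi_\bA}-\widearc{\xi_\bB})$.

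First, I would decompose the trace of $\xi_{\bA-\bB}$ according to the three phases of the standard-form construction. In phases 1 and 3 only $\bA$ moves, so the trace of $\xi_{\bA-\bB}$ is a translate of the trace of $\xi_\bA$ by $-B_0$ and $-B_1$ respectively, consisting of a straight segment together with a (possibly degenerate) arc along $\partial(\bA\!+\!\bB)[B_j]$ translated by $-B_j$, i.e.\ along $\partial(\bA\!+\!\bB)[O]$. In phase 2 only $\bB$ moves and, using central symmetry of $\bA\!+\!\bB$, the trace of $\xi_{\bA-\bB}$ on this interval is the point-reflection of the trace of $\xi_\bB$ through the midpoint $A_{\mathrm{int}}-\tfrac12(B_0+B_1)$, again consisting of (at most) a straight segment, an arc on a translated copy of $\partial(\bA\!+\!\bB)$, and a further segment. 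After translation, every arc in all three phases lies on the same curve $\partial(\bA\!+\!\bB)[O]$.

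Next I would show that these three pieces glue into a single convex arc. The key geometric fact is the tangent-parallelism property used repeatedly in Section~\ref{sec:standardproof} and summarized in Section~\ref{sec:arbitraryrobots}: any tangent from $A_{\mathrm{int}}$ to $(\bA\!+\!\bB)[B_j]$ is parallel to the matching tangent from $B_j$ to $(\bA\!+\!\bB)[A_{\mathrm{int}}]$. Consequently the outward normals agree at the junction points $A_{\mathrm{int}}-B_0$ and $A_{\mathrm{int}}-B_1$, so the three convex pieces are tangent-continuous, orientations advance monotonically across the join, and the assembled curve is a convex arc from $\Delta_{00}$ to $\Delta_{11}$ whose support reach in each direction $\theta\in[\theta_0,\theta_1]$ equals $r_{\bA+\bB}(\theta)$, and otherwise equals $r_{\overline{A_0A_1}}(\theta)+r_{\overline{B_0B_1}}(\pi+\theta)$. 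By Lemma~\ref{lem:character}, this is exactly the portion of $\partial H^{\top}_{\bA\bB_0,\bA\bB_1}$ strictly above the two chords $\overline{\Delta_{00}\Delta_{10}}$ and $\overline{\Delta_{01}\Delta_{11}}$, so the trace indeed lies on $\partial\,\mathrm{conv}(\widearc{\xi_\bA}-\widearc{\xi_\bB})$.

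Finally I would isolate why the argument breaks in the case of Section~\ref{subsec:rightofboth}. There, $\xi_\bA$ is not convex: its trace first travels from $A_0$ up to $A_{\mathrm{int}}$ and then partially retraces down to $A_1$, so $\widearc{\xi_\bA}$ is the triangle $A_0 A_{\mathrm{int}} A_1$ and the interior of this triangle is swept by the actual motion. Correspondingly the trace of $\xi_{\bA-\bB}$ contains an excursion above the chord $\overline{\Delta_{00}\Delta_{11}}$ that lies in the \emph{interior} of $H^{\top}_{\bA\bB_0,\bA\bB_1}$, violating the conclusion. The main obstacle is not the geometry but the bookkeeping: to make the argument rigorous one must, in each of the (many) sub-cases of Section~\ref{sec:standardproof}, check that the two junction points $A_{\mathrm{int}}-B_0$ and $A_{\mathrm{int}}-B_1$ coincide with the orientations where $\partial H^{\top}$ transitions between its straight and curved pieces. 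Fortunately this check uses exactly the parallel-tangent identities already established in each sub-case when proving counter-clockwise tightness, so no new geometric input is required.
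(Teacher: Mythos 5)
The paper states this as an unproved observation, so the question is whether your plan would actually establish it. The overall structure is sound and matches the spirit of the earlier Observation in Section~\ref{sec:standardproof} about $H^{\top}_{\bA\bB_0,\bA\bB_1}$: decompose the trace of $\xi_{\bA-\bB}$ by phase, note that in phases~1 and~3 it is a translate (by $-B_0$, $-B_1$) of the trace of $\xi_\bA$, in phase~2 it is a point-reflection of the trace of $\xi_\bB$, and use the parallel-tangent identities to glue the pieces into a convex arc on $\partial(\bA\!+\!\bB)[O]$. Two issues, however.

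First, a small slip: the phase-2 map is $p\mapsto A_{\mathrm{int}}-p$, which is the point-reflection through $A_{\mathrm{int}}/2$, not through $A_{\mathrm{int}}-\tfrac12(B_0+B_1)$. It does not affect the argument (any point-reflection maps $\partial(\bA\!+\!\bB)[A_{\mathrm{int}}]$ to a translate of $\partial(\bA\!+\!\bB)[O]$ by central symmetry), but the stated centre is wrong.

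Second, and more substantively, your explanation of why the excluded case fails is incomplete. You attribute it to $\xi_\bA$ being non-convex, but Section~\ref{subsec:rightofboth} explicitly treats \emph{two} subcases, and in the first one (Figure~\ref{fig:A1right2}) $\xi_\bA$ \emph{is} convex, yet the conclusion still fails. The actual obstruction is that in this case $A_{\mathrm{int}}$ is taken as the crossing of the tangent through $A_1$ to $(\bA\!+\!\bB)[B_0]$ with the tangent through $A_0$ to $(\bA\!+\!\bB)[B_1]$; consequently the phase-1 segment $\overline{A_0 A_{\mathrm{int}}}$ does \emph{not} lie along the supporting tangent from $A_0$ to $(\bA\!+\!\bB)[B_0]$, so its translate $\overline{\Delta_{00}\,(A_{\mathrm{int}}-B_0)}$ does not lie on a supporting line of $(\bA\!+\!\bB)[O]$ through the hull vertex $\Delta_{00}$, and the trace of $\xi_{\bA-\bB}$ shortcuts through the interior of $H^{\top}_{\bA\bB_0,\bA\bB_1}$ between $\Delta_{00}$ and the point $A_{\mathrm{int}}-B_0$ (and symmetrically at the other end). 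Your non-convexity argument is a correct additional mechanism in the second subcase, but it is not the reason the gluing breaks in the convex subcase. A complete proof must make this distinction, precisely because the gluing step in the other subsections relies on the phase-1 and phase-3 segments lying on those supporting tangents, which is exactly what is violated here.
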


\section{Conclusions}
\label{sec:con}

We have provided constructions of shortest collision-avoiding motions for two arbitrary centrally-symmetric convex robots in a planar obstacle-free environment, taking an arbitrary initial configuration of the robots into an arbitrary goal configuration. 
The total path length of the coordinated motions is neatly characterized by a simple integral.
The coordinated motion has the property that it is \emph{decoupled}, so that only one robot is moving at any given time. 
However, it can also be \emph{coupled} so that the relative orientation of the robots changes monotonically over time. 
The coupled co-motion has the additional property that the robots are in contact for a connected interval of time, that is, once they move out of contact, they are never in contact again. 

One key ingredient that allowed for the integral characterization of the optimal motion was Cauchy's surface area formula, which in 2D allows one to transform lengths of convex traces to an integral in $S^1$. 
As far as we know, the tools that we have used are limited to the case when the robots are in 2D. Indeed, when the robots are spheres in 3D, even if the initial and goal positions of the robot reside in a common plane, we have not been able to show that the shortest path stays within this plane (except in special cases). 

Another natural extension of our work would be to remove the dependence on central-symmetry. Cauchy's formula applies to any convex trace. In various cases of our proof, however, we rely on central-symmetry to reflect the paths of the robots. The arguments of Section~\ref{sec:special} are one such example of this. Outside of these problematic cases, many simple cases of optimal co-motions for general convex robots can still be proven with the framework presented in this paper.
The extension of our results to 3D, even for spherical robots, as well as the extension to convex non-centrally-symmetric robots, constitute enticing subjects for future exploration.

\bibliographystyle{plain}
\bibliography{motionplanning}

\begin{thebibliography}{10}

\bibitem{abellanas}
Manuel Abellanas, Sergey Bereg, Ferran Hurtado, Alfredo~García Olaverri, David
  Rappaport, and Javier Tejel.
\newblock Moving coins.
\newblock {\em Computational Geometry}, 34(1):35 -- 48, 2006.
\newblock Special Issue on the Japan Conference on Discrete and Computational
  Geometry 2004Japan Conference on Discrete and Computational Geometry 2004.

\bibitem{Adler2015}
Aviv Adler, Mark Berg, Dan Halperin, and Kiril Solovey.
\newblock Efficient multi-robot motion planning for unlabeled discs in simple
  polygons.
\newblock {\em IEEE Transactions on Automation Science and Engineering}, 12, 12
  2013.

\bibitem{Agarwal+SoDA24}
Pankaj~K. Agarwal, Dan Halperin, Micha Sharir, and Alex Steiger.
\newblock Near-optimal min-sum motion planning for two square robots in a
  polygonal environment.
\newblock In David~P. Woodruff, editor, {\em Proceedings of the 2024 {ACM-SIAM}
  Symposium on Discrete Algorithms, {SODA} 2024, Alexandria, VA, USA, January
  7-10, 2024}, pages 4942--4962. {SIAM}, 2024.

\bibitem{Antonyshyn2023}
Luke Antonyshyn, Jefferson Silveira, Sidney Givigi, and Joshua Marshall.
\newblock Multiple mobile robot task and motion planning: A survey.
\newblock {\em ACM Comput. Surv.}, 55(10), February 2023.

\bibitem{bereg}
Sergey Bereg, Adrian Dumitrescu, and János Pach.
\newblock Sliding disks in the plane.
\newblock {\em Int. J. Comput. Geometry Appl.}, 18:373--387, 01 2008.

\bibitem{ChenIerardi}
Yui-Bin Chen and Doug Ierardi.
\newblock Optimal motion planning for a rod in the plane subject to velocity
  constraints.
\newblock In {\em Proceedings of the Ninth Annual Symposium on Computational
  Geometry}, SCG '93, pages 143--152, New York, NY, USA, 1993. ACM.

\bibitem{ladder}
Zhengyuan Chen, Ichiro Suzuki, and Masafumi Yamashita.
\newblock Time-optimal motion of two omnidirectional robots carrying a ladder
  under a velocity constraint.
\newblock {\em {IEEE} T. Robotics and Automation}, 13(5):721--729, 1997.

\bibitem{crombez22}
Lo{\"{\i}}c Crombez, Guilherme~Dias da~Fonseca, Yan Gerard, Aldo
  Gonzalez{-}Lorenzo, Pascal Lafourcade, and Luc Libralesso.
\newblock Shadoks approach to low-makespan coordinated motion planning.
\newblock {\em {ACM} J. Exp. Algorithmics}, 27:3.2:1--3.2:17, 2022.

\bibitem{dumitrescu}
Adrian Dumitrescu and Minghui Jiang.
\newblock On reconfiguration of disks in the plane and related problems.
\newblock {\em Computational Geometry}, 46(3):191--202, 2013.

\bibitem{egg}
H.~G. Eggleston.
\newblock {\em Convexity}.
\newblock Cambridge Tracts in Mathematics and Mathematical Physics, No. 47.
  Cambridge University Press, New York, 1958.

\bibitem{esteban23}
Guillermo Esteban, Dan Halperin, V{\'i}ctor Ru{\'i}z, Vera Sacrist{\'a}n, and
  Rodrigo~I. Silveira.
\newblock Shortest coordinated motion for square robots.
\newblock In Pat Morin and Subhash Suri, editors, {\em Algorithms and Data
  Structures - 18th International Symposium (WADS 2023)}, pages 430--443, Cham,
  2023. Springer Nature Switzerland.

\bibitem{fekete22}
S{\'{a}}ndor~P. Fekete, Phillip Keldenich, Dominik Krupke, and Joseph S.~B.
  Mitchell.
\newblock Computing coordinated motion plans for robot swarms: The {CG:} {SHOP}
  challenge 2021.
\newblock {\em {ACM} J. Exp. Algorithmics}, 27:3.1:1--3.1:12, 2022.

\bibitem{fekete24}
S{\'{a}}ndor~P. Fekete, Ramin Kosfeld, Peter Kramer, Jonas Neutzner, Christian
  Rieck, and Christian Scheffer.
\newblock Coordinated motion planning: Multi-agent path finding in a densely
  packed, bounded domain.
\newblock In Juli{\'{a}}n Mestre and Anthony Wirth, editors, {\em 35th
  International Symposium on Algorithms and Computation, {ISAAC} 2024, December
  8-11, 2024, Sydney, Australia}, volume 322 of {\em LIPIcs}, pages
  29:1--29:15. Schloss Dagstuhl - Leibniz-Zentrum f{\"{u}}r Informatik, 2024.

\bibitem{gurevich}
A.~B. Gurevich.
\newblock The "most economical" displacement of a segment (in {Russian}).
\newblock {\em Differentsial'nye Uravneniya}, 11(12):2134--2143, 1975.

\bibitem{hearn}
Robert~A. Hearn and Erik~D. Demaine.
\newblock {PSPACE}-completeness of sliding-block puzzles and other problems
  through the nondeterministic constraint logic model of computation.
\newblock {\em Theoretical Computer Science}, 343(1):72 -- 96, 2005.

\bibitem{hopcroft1}
J~E Hopcroft and G~T Wolfong.
\newblock Reducing multiple object motion planning to graph searching.
\newblock {\em SIAM J. Comput.}, 15(3):768--785, August 1986.

\bibitem{hopcroft0}
J.E. Hopcroft, J.T. Schwartz, and M.~Sharir.
\newblock On the complexity of motion planning for multiple independent
  objects; {PSPACE}-hardness of the "warehouseman's problem".
\newblock {\em The International Journal of Robotics Research}, 3(4):76--88,
  1984.

\bibitem{icking}
Christian Icking, G{\"{u}}nter Rote, Emo Welzl, and Chee{-}Keng Yap.
\newblock Shortest paths for line segments.
\newblock {\em Algorithmica}, 10(2-4):182--200, 1993.

\bibitem{kl2016}
David~G. Kirkpatrick and Paul Liu.
\newblock Characterizing minimum-length coordinated motions for two discs.
\newblock In Thomas~C. Shermer, editor, {\em Proceedings of the 28th Canadian
  Conference on Computational Geometry, {CCCG} 2016, August 3-5, 2016, Simon
  Fraser University, Vancouver, British Columbia, Canada}, pages 252--259.
  Simon Fraser University, Vancouver, British Columbia, Canada, 2016.

\bibitem{gromov}
J.~LaFontaine, M.~Katz, M.~Gromov, S.M. Bates, P.~Pansu, P.~Pansu, and
  S.~Semmes.
\newblock {\em Metric Structures for Riemannian and Non-Riemannian Spaces}.
\newblock Modern Birkh{\"a}user Classics. Birkh{\"a}user Boston, 2007.

\bibitem{planning-text}
Steven~M. LaValle.
\newblock {\em Planning algorithms}.
\newblock Cambridge University Press, Cambridge, 2006.

\bibitem{liu2017}
Paul Liu.
\newblock {\em Characterizing minimum-length coordinated motions for two
  disks}.
\newblock PhD thesis, University of British Columbia, 2017.

\bibitem{liu22}
Paul Liu, Jack Spalding{-}Jamieson, Brandon Zhang, and Da~Wei Zheng.
\newblock Coordinated motion planning through randomized \emph{k}-opt.
\newblock {\em {ACM} J. Exp. Algorithmics}, 27:3.4:1--3.4:9, 2022.

\bibitem{gildardo}
Gildardo S{\'{a}}nchez{-}Ante and Jean{-}Claude Latombe.
\newblock Using a {PRM} planner to compare centralized and decoupled planning
  for multi-robot systems.
\newblock In {\em Proceedings of the 2002 {IEEE} International Conference on
  Robotics and Automation, {ICRA} 2002, May 11-15, 2002, Washington, DC,
  {USA}}, pages 2112--2119, 2002.

\bibitem{sharir0}
Jacob~T Schwartz and Micha Sharir.
\newblock On the piano movers' problem: {III}. coordinating the motion of
  several independent bodies: the special case of circular bodies moving amidst
  polygonal barriers.
\newblock {\em The International Journal of Robotics Research}, 2(3):46--75,
  1983.

\bibitem{sharir2}
Micha Sharir and Shmuel Sifrony.
\newblock Coordinated motion planning for two independent robots.
\newblock {\em Ann. Math. Artif. Intell.}, 3(1):107--130, 1991.

\bibitem{solovey2}
Kiril Solovey and Dan Halperin.
\newblock On the hardness of unlabeled multi-robot motion planning.
\newblock In {\em Robotics: Science and Systems XI, Sapienza University of
  Rome, Rome, Italy, July 13-17, 2015}, 2015.

\bibitem{solovey1}
Kiril Solovey, Jingjin Yu, Or~Zamir, and Dan Halperin.
\newblock Motion planning for unlabeled discs with optimality guarantees.
\newblock {\em CoRR}, abs/1504.05218, 2015.

\bibitem{spirakis}
Paul~G. Spirakis and Chee{-}Keng Yap.
\newblock Strong {NP}-hardness of moving many discs.
\newblock {\em Inf. Process. Lett.}, 19(1):55--59, 1984.

\bibitem{standley}
Trevor Standley.
\newblock Finding optimal solutions to cooperative pathfinding problems.
\newblock In {\em Proceedings of the Twenty-Fourth AAAI Conference on
  Artificial Intelligence}, AAAI'10, pages 173--178. AAAI Press, 2010.

\bibitem{turpin}
Matthew Turpin, Nathan Michael, and Vijay Kumar.
\newblock Trajectory planning and assignment in multirobot systems.
\newblock In Emilio Frazzoli, Tomas Lozano-Perez, Nicholas Roy, and Daniela
  Rus, editors, {\em Algorithmic Foundations of Robotics X}, pages 175--190,
  Berlin, Heidelberg, 2013. Springer Berlin Heidelberg.

\bibitem{ulam}
Stanislaw~Marcin Ulam.
\newblock {\em A Collection of Mathematical Problems: Problems in Modern
  Mathematics}.
\newblock Science Eds., 1964.

\bibitem{verriest}
Erik~I Verriest.
\newblock On {U}lam's problem of path planning, and ``{H}ow to move heavy
  furniture".
\newblock {\em IFAC Proceedings Volumes}, 44(1):14562--14566, 2011.

\bibitem{wagner}
Glenn Wagner and Howie Choset.
\newblock M*: {A} complete multirobot path planning algorithm with performance
  bounds.
\newblock In {\em 2011 {IEEE/RSJ} International Conference on Intelligent
  Robots and Systems, {IROS} 2011, San Francisco, CA, USA, September 25-30,
  2011}, pages 3260--3267, 2011.

\bibitem{yang22}
Hyeyun Yang and Antoine Vigneron.
\newblock Coordinated path planning through local search and simulated
  annealing.
\newblock {\em {ACM} J. Exp. Algorithmics}, 27:3.3:1--3.3:14, 2022.

\bibitem{yap}
Chee Yap.
\newblock {\em Coordinating the motion of several discs}.
\newblock Robotics Report. Department of Computer Science, New York University,
  1984.

\end{thebibliography}

\end{document}